\newtheorem{theorem}{Theorem}[section]
\newtheorem{remark}[theorem]{Remark}
\begin{document}
\begin{center}
	\baselineskip .2in {\large\bf Self-diffusion Driven Pattern Formation in Prey-Predator System with Complex Habitat under Fear Effect}
\end{center}

\begin{center}
	\baselineskip .2in {\bf Debaldev Jana$^{1}$, Saikat Batabyal$^{1}$, M. Lakshmanan$^{2}$}
	
	{\small\it $^1$Department of Mathematics \& SRM Research Institute,} \\ 
	SRM Institute of Science and Technology, Kattankulathur-603 203, Tamil Nadu, India \\
	{\small\it $^2$ Department of Nonlinear Dynamics, School of Physics,}\\
	Bharathidasan University, Tiruchirapalli - 620 024, India\\
	
\end{center}
%
%
%
%

\begin{abstract}
In the present work, we explore the influence of habitat complexity on the activities of prey and predator of a spatio-temporal system by incorporating self diffusion.	
First we modify the Rosenzweig-MacArthur predator-prey model by incorporating the effects of habitat complexity on the carrying capacity and fear effect of prey and predator functional response.	
We establish conditions for the existence and stability of all feasible equilibrium points of the non-spatial model and later we prove the existence of Hopf and transcritical bifurcations in different parametric phase-planes analytically and numerically. 
The stability of the spatial system is studied and we discuss the conditions for Turing instability. Selecting suitable control parameter from the Turing space, the existence conditions for stable patterns are derived using the amplitude equations. Results obtained from theoretical analysis of the amplitude equations are justified by numerical simulation results near the critical parameter value. Further, from numerical simulation, we illustrate the effect of diffusion of the dynamical system in the spatial domain by different pattern formations. Thus our model clearly shows that the fear effect of prey and predator's functional response make an anti-predator behaviour including habitat complexity which helps the prey to survive in the spatio-temporal domain through diffusive process. \\

\noindent{ \emph{Keywords}: Fear effect; Habitat complexity; Turing instability; Pattern formations; Amplitude equations; Weakly nonlinear analysis.}

\end{abstract}

\section{Introduction}\label{S1}
Structural complexity of any habitat (habitat complexity) is known to significantly affect the natural activities of the native species. It reduces the available resources (space, food, etc.) of prey resulting in the increasing possibility of inter-specific competition among the prey population which reduces the carrying capacity. Again, it reduces the foraging rate of predator due to decreasing encounter rate between the predator and prey. Reduced foraging intention of predator due to increasing habitat complexity makes the prey to be gradually fearless. Also, increasing habitat complexity reduces the rate of mobility as well as diffusion rate both of which affect the diffusive tendency of each of the species.
Habitat complexity is a morphological characteristics within the structure of a habitat \cite{BET91,LH04}. It may shape the predator-prey interaction and it has been demonstrated that the dynamics of locally interacting populations largely depend upon the attributes of the local habitats \cite{A01,J06}. Habitat complexity is observed
in almost all ecological systems, whether it is terrestrial or aquatic.
Marine habitat becomes complex in the presence of oyster and coral reefs, mangroves, sea grass beds and salt marshes \cite{HET11}. Experimental ecologists have measured the strength of habitat complexity in a significant way for the purpose of their needs. 
Marine ecologists measure the strength of habitat complexity by the amount of sea grass blades per square meter \cite{JH94}, the amount of shell material in polychaete worm tubes \cite{B85} or surface area to volume ratios \cite{CW83}. \\

Ecological niche is the way of life, which is defined by the full set of conditions, resources, and interactions it needs. Every species capacitates itself into an ecological community and has its own tolerable limits for certain ecological factors \cite{MS09}. Niche \cite{AM80} in ecology is also termed as the behaviour of a group of species and how they interact to the distribution of resources and competitors, for example abundance in resources and scarcity of predator, etc. \cite{JET15}, but in return to this instinct these species alter these factors themselves, for example the growth rate is increased due to the abundance of resources or increase in predator decreases the growth rate and increases the protection strategies, etc. \cite{A01,HM91,HH95,H87,HET06,F80,GR03,C95,K05,K11,K98,M74,M86,R95,SET12,S87,T84,OET13}. Environmental niche differs from one species to another in the context of type and number of variables comprising the ecological dimensions and the relative importance of particular set of environmental variables for a species which may vary according to the geographic and biotic contexts \cite{A01,M74,M86,R95,T84}. Ecological niche relies on the topology of the ecological structures which depends on the space \cite{HM91}, habitat physical and behavioural \cite{K11,K98,JR16} complexity and many other constraints \cite{H87,HM91} where the ecological communities belong to. In a niche one species changes or alters the dynamical behaviour of another species in order to maintain ecological fitness \cite{K11,K98}. Ecological niche is defined as for mapping species’ distributions across space and time. It gives the prediction of the impact of climate change on species ranges, on the basis of several other factors which affect the distribution, including species dispersal, biotic interactions, land use and topography. Predators search for their preys and preys avoid their predators: this is the fundamental relationship between prey and predator. Habitat complexity reduces the available resource (living space, food etc.) for the interacting species \cite{MA99,FC90}and consequently reduces the diffusion possibility. For example, aquatic weeds or submerged vegetations or reefs decrease the interaction chances of predator and prey, increases the inter-specific competition among preys, and decreases the diffusion rate with its increasing strength. \\

Pattern formation in the dynamical system presumes as an attractive proposition of the environment. Spatio temporal dynamics shows patterns \cite{BP11,SUN14,UVT12,WANG14,YZ14,ZSJ12} and topographical relationship among the species. Biodiversity of the environment is governed by the spatio-temporal dynamics. The discontinuities of population distribution in time and space makes patterns and it shows environmental irregularities. Diffusion has a powerful impact in pattern formation \cite{T52}. 
There is a certain chance of increasing stability of the system when diffusion happens. Diffusive instability of environmental systems raises many questions.  
Mathematical conditions and numerical findings are very special issues to observe the spatial variability of the populations.
A large amount of literature exists on these aspects, see for example references \cite{BB12,BA15,ZWW12,XUE12,LR03}.  In general, very often spatio temporal dynamics is modelled by reaction-diffusion \cite{MK15,ZS15,SJ72} equations and the study of such systems enriches our understanding of the performance of spatio temporal instabilities in the ecosystem with a large number of degrees of freedom. 

\subsection{Model construction}
Now we concentrate on the different functioning of any prey-predator relationship. Then the assumptions are:\\

$\bullet$ Increasing habitat complexity decreases the available resources (space, food, etc.) to prey population and decreasing resource increases the inter-specific competition of prey which directly affects the carrying capacity of prey \cite{JB14}. Habitat complexity reduces the space for prey population and decreases the prey carrying capacity. Therefore, the environmental carrying capacity should be a decreasing function of habitat complexity. For simplicity, we assume that environmental carrying capacity is reduced linearly with increasing habitat complexity. If we assume that the impact of habitat complexity on prey is the dimensionless parameter $c_1 (0<c_1<1)$ and $k$ is the environmental carrying capacity of prey in the absence of habitat complexity then it would be $k(1-c_1)$ in the presence of complexity \cite{JB14}. \\

$\bullet$ Previously, many experimental results have verified that the behaviour of prey can change drastically in the presence of predator and may be the effect is more powerful than the effect of direct predation  \cite{L98,CC08,L10,C11,ZC11,WZZ16,WZ17,SS18}.Prey shows anti-predator responses including habitat changes, foraging, vigilance, etc. and also they can react to the predation risk, which may increase their chance of survival in the environment \cite{C11,ZC11,WZZ16,WZ17,UM18}. Reproduction of prey depends on the fear of preys, which may affect the psychological condition of the juvenile preys and also the survival of adult preys \cite{CC08,ZC11,WZZ16,UM18}. If we consider the logistic growth function of prey ($rx(1-\frac{x}{k})=rx-\frac{rx^2}{k}, r$: reproduction rate, $k$: carrying capacity), then the production term ($r$) will be multiplied by the fear effect term $f(K,y)=\frac{1}{1+Ky}$ ($K:$ predator's fear effect on prey, $y$: predator density). Reduced foraging intention of predator due to increasing habitat complexity makes prey to be gradually fearless. If we assume the impact of habitat complexity on predator is the dimensionless parameter $c_2 (0<c_2<1)$, then, in the presence of habitat complexity the fear effect term can be modified as $f(K,y)=\frac{1}{1+(1-c_2)Ky}$. Finally the logistic growth function of prey becomes $\frac{rx}{1+(1-c_2)K y}-\frac{rx^2}{(1-c_1)k}$.\\

$\bullet$ Different field and laboratory experiments confirm that habitat complexity reduces predation rates by decreasing encounter rates between predator and prey \cite{JB14}. The foraging efficiency of predator generally decreases with increasing habitat complexity. Predator’s functional response, defined as the amount of prey catch per predator per unit of time, is thus affected by the structure of the habitat \cite{AL01,A01}. The effect of habitat complexity should, therefore, be incorporated in predator’s response function. If we assume that the impact of habitat complexity on predator is the dimensionless parameter $c_2 (0<c_2<1)$, then the traditional Holling type II functional response ($\frac{\alpha x}{1+\alpha h x}, \alpha$: attack rate of predator, $h$: food handling time, $x$: prey density) is modified by incorporating the strength of habitat complexity ($c_2$) as $\frac{\alpha(1-c_2) x}{1+\alpha(1-c_2) h x}$ \cite{JB14} and it directly contributes to the growth of the predator as $\frac{\theta \alpha(1-c_2) x}{1+\alpha(1-c_2) h x}$, where $\theta$ is the food conversion efficiency.\\

$\bullet$ Habitat complexity has a significant role on community development according to resource availability. The rate of space sequestration is changed and this makes the effects of habitat complexity varied through space. Diversity indices and species abundances are observed with fine-scale complexity. Sometimes it has been noticed that complexity effects also appear to diminish with time \cite{M84}. Environmental factors and other related effects have significant role to play in the diffusive domain of a predator-prey model in the habitat complexity. The aptitude of species for living in a given habitat is to migrate to better suitable regions for their own survival and existence. So the predator follows the prey for its own survival in the environment. Now the prey and predator will travel from one to another region for some biological reasons with diffusion rates $D_1$ and $D_2$. As a result, the parameters $\left(1-c_1\right)D_1$ and $\left(1-c_2\right)D_2$ in an ecosystem tend to give rise to a uniform density of population in the habitat complexity. Consequently, it may be expected that diffusion, when it occurs, plays an important role in increasing the stability in a habitat complexity of mixed populations and resources. Along with self-diffusion, the study of spatio-temporal diffusion suggests that space, or perhaps the habitat complexity, may lead to an important factor in the community development of predator-prey communities.\\

Incorporating self-diffusion, we obtain the following system of coupled reaction-diffusion equations to represent the above system of prey and predator populations:
\begin{equation}\label{eq:03}
\begin{split}
\frac{\partial x}{\partial t} &= \frac{rx}{1+(1-c_2)Ky}-\frac{rx^2}{(1-c_1)k}-\frac{\alpha (1-c_2) xy}{1+\alpha (1-c_2)hx}+(1-c_1)D_1\nabla^2 x, \\
\frac{\partial y}{\partial t} &= \frac{\theta\alpha (1-c_2) xy}{1+\alpha (1-c_2)hx}-d.y+(1-c_2)D_2\nabla^2 y.
\end{split}
\end{equation}
where $\nabla^2\equiv\frac {\partial^2}{\partial X^2}+\frac {\partial^2}{\partial Y^2}$, $D_{1}$ and $D_{2}$ are diffusion constants of prey ($x(X,Y,t)$) and predator ($y(X,Y,t)$) respectively and we assume $x(X,Y,0)> 0, y(X,Y,0)> 0.$ Note here that in \ref{eq:03} $X$ and $Y$ represent spatial coordinates, while $x$ and $y$ represent prey and predator densities.\\

In the present study, the effect of habitat complexity on prey-predator interaction is studied when fear effect of predator on prey is incorporated and the effect of self diffusion included. We bring out how different kinds of interesting spatio temporal patterns emerge in the system \ref{eq:03}. The manuscript is organized as follows: Existence and stability of equilibrium points of the non-spatial model is given in Section \ref{S2} and also the existence of Hopf and transcritical bifurcations are investigated in this section. Impact of population density on the spatial variability along with time is the mathematical key findings. Conditions for existence/non-existence of Turing patterns, weak nonlinear analysis and stability analysis of amplitude equations are studied in Sections \ref{S3}, \ref{S4} and \ref{S5}, respectively. Numerical simulation is described in Section \ref{S6}. Finally in Section \ref{S7} we briefly present our conclusions.

\section{Existence and stability of equilibrium points of non-spatial model}\label{S2}

Ecological stability can correspond to different types of stability during a limit starting from regeneration via resilience (returning quickly to a previous state), to constancy and then to persistence. The precise definition depends on the ecosystem in question, the variable or variables of interest, and therefore the overall context. In the context of conservation ecology, stable populations are often defined as the ones that do not go extinct. Researchers usually use the notion of Lyapunov stability to mathematical models. Local stability suggests the stability domain of a system over small short-lived disturbances. Now we carry out the stability analysis of the non-spatial model system to start with, which is written as follows:

\begin{equation}\label{eq:2}
\begin{split}
\dot{x} &= \frac{rx}{1+(1-c_2)Ky}-\frac{rx^2}{(1-c_1)k}-\frac{\alpha (1-c_2) xy}{1+\alpha (1-c_2)hx}, \\
\dot{y} &= \frac{\theta\alpha (1-c_2) xy}{1+\alpha (1-c_2)hx}-d.y,
\end{split}
\end{equation}
where $\dot{x}$ denotes the derivative of the variable $x$ with respect to time $t$ and so also $\dot{y}$.
The system (\ref{eq:2}) has three equilibrium points and their feasibility and stability are given below.
\begin{theorem}
	\label{thm:1}
	The trivial equilibrium point $E_0(0,0)$ always exists and is a saddle point.
\end{theorem}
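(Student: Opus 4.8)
The plan is to compute the Jacobian matrix of the non-spatial system~(\ref{eq:2}) and evaluate it at the origin $E_0(0,0)$. First I would observe that $E_0$ trivially exists since the right-hand sides of both equations in~(\ref{eq:2}) vanish when $x=y=0$ (each term carries a factor of $x$ or $y$). Next I would write the system as $\dot{x}=f(x,y)$, $\dot{y}=g(x,y)$ and form the Jacobian $J=\begin{pmatrix} f_x & f_y \\ g_x & g_y \end{pmatrix}$; the key simplification is that at $x=y=0$ all the nonlinear terms (the Holling-type interaction and the fear-modified logistic correction $-rx^2/((1-c_1)k)$) contribute nothing to the linearization, so the off-diagonal entries vanish and the matrix becomes diagonal.

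Carrying this out, at $E_0$ one gets $f_x = r$ (from $\partial_x\!\left(\frac{rx}{1+(1-c_2)Ky}\right)$ evaluated at the origin), $f_y = 0$, $g_x = 0$, and $g_y = -d$ (since the predator growth term $\frac{\theta\alpha(1-c_2)xy}{1+\alpha(1-c_2)hx}$ has zero $y$-derivative at $x=0$). Hence $J(E_0)=\mathrm{diag}(r,\,-d)$ with eigenvalues $\lambda_1=r>0$ and $\lambda_2=-d<0$. Since the eigenvalues are real and of opposite sign, $E_0$ is a saddle point, which is exactly the claim.

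The argument is essentially a routine linearization, so there is no serious obstacle; the only point requiring a little care is verifying that the quotient terms are differentiable at the origin and that their partial derivatives there are as stated — in particular that $\partial_x\!\left(\frac{rx}{1+(1-c_2)Ky}\right)\big|_{(0,0)} = r$ and that the handling-time denominator $1+\alpha(1-c_2)hx$ equals $1$ at $x=0$ so it causes no singularity. I would also note explicitly that $r,d>0$ and $0<c_1,c_2<1$ are standing assumptions, which guarantees the signs of the eigenvalues and hence the saddle classification; existence of $E_0$ needs no parameter restriction at all, justifying the phrase ``always exists.''
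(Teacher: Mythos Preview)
Your proposal is correct and follows essentially the same approach as the paper: compute the Jacobian of system~(\ref{eq:2}) at $E_0$, observe it is $\mathrm{diag}(r,-d)$, and conclude from the opposite signs of the eigenvalues that $E_0$ is a saddle. The paper's proof is identical in content, only more terse; your added remarks on differentiability of the quotient terms and the standing sign assumptions on $r,d,c_1,c_2$ are sound and do no harm.
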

\begin{proof}
	The Jacobian matrix of system (\ref{eq:2}) around $E_0$ is 
	$J(E_0)=\left( \begin{array}{cc}
	r & 0
	\\
	0 & -d
	\end{array} \right)$, the eigenvalues are $\lambda_1=r>0$ and $\lambda_2=-d<0$. So, $E_0$ is a saddle point.
\end{proof}
\begin{theorem}
	\label{thm:2}
	The prey only equilibrium point $E_1(x_1,y_1)$, where $x_1=(1-c_1)k,y_1=0$, always exists $(\because c_1\in(0,1))$ and is locally asymptotically stable in the parametric region $\Gamma_1=\{c_1,c_2\in(0,1),\theta>\theta_1: \{c_1<\hat{c}_1,c_2>c_2^*\}\mbox{or}\{c_2<\hat{c}_2,c_1>c_1^*\}\}$, non-hyperbolic on the boundary of the region $\Gamma_1$ and unstable otherwise, where $c_1^*=1-\frac{d}{\alpha k(1-c_2)(\theta-hd)}$, $c_2^*=1-\frac{d}{\alpha k(1-c_1)(\theta-hd)}$, $\hat{c}_1=1-\frac{d}{\alpha k (\theta-hd)}=\hat{c}_2$ and $\theta_1=hd+\frac{d}{\alpha k}$.
\end{theorem}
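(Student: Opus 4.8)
\emph{Proof strategy.} The plan is to exhibit $E_1$ explicitly, compute the Jacobian there, exploit its triangular structure to read off the spectrum, and then translate the sign of the single non-trivial eigenvalue into the parametric description of $\Gamma_1$. For existence: putting $y=0$ in the second equation of (\ref{eq:2}) makes $\dot y\equiv 0$, while the first equation collapses to $\dot x=rx\bigl(1-x/((1-c_1)k)\bigr)$, whose unique positive zero is $x_1=(1-c_1)k$. Since $c_1\in(0,1)$ we have $x_1>0$, so $E_1=\bigl((1-c_1)k,\,0\bigr)$ is an equilibrium for all parameter values, and no condition on $\theta$ is needed for this part.

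Next I would compute $J(E_1)$. Because the predator equation carries the overall factor $y$, its $x$-derivative vanishes whenever $y=0$, so $J(E_1)$ is triangular. Direct differentiation gives the diagonal entries $\partial\dot x/\partial x|_{E_1}=r-2r=-r$ and $\partial\dot y/\partial y|_{E_1}=\theta\alpha(1-c_2)(1-c_1)k/\bigl(1+\alpha(1-c_2)h(1-c_1)k\bigr)-d$. Hence the eigenvalues are $\lambda_1=-r<0$, always negative, and $\lambda_2=\theta\alpha(1-c_2)(1-c_1)k/\bigl(1+\alpha(1-c_2)h(1-c_1)k\bigr)-d$, so the local behaviour of $E_1$ is dictated entirely by the sign of $\lambda_2$.

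The core of the argument is the sign analysis of $\lambda_2$. Clearing the positive denominator, $\lambda_2<0$ is equivalent to $\alpha k(1-c_1)(1-c_2)(\theta-hd)<d$, with equality characterising $\lambda_2=0$ and the reverse inequality $\lambda_2>0$. Under the standing hypothesis $\theta>\theta_1=hd+d/(\alpha k)$ one has $\theta-hd>0$ and $d/\bigl(\alpha k(\theta-hd)\bigr)<1$, so the inequality may be divided through and rewritten as $(1-c_1)(1-c_2)<d/\bigl(\alpha k(\theta-hd)\bigr)=1-\hat c_1$. I would then split cases: if $c_1<\hat c_1$ then $1-c_1>1-\hat c_1>0$, and dividing by $1-c_1$ turns the inequality into $1-c_2<d/\bigl(\alpha k(1-c_1)(\theta-hd)\bigr)=1-c_2^*$, i.e. $c_2>c_2^*$, and one checks that $c_2^*\in(0,1)$ exactly in this range of $c_1$; the symmetric argument with $c_1$ and $c_2$ interchanged gives $c_1>c_1^*$ when $c_2<\hat c_2$, again with $c_1^*\in(0,1)$ there. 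Collecting the cases yields $\lambda_2<0$ precisely on $\Gamma_1$; on the boundary $\alpha k(1-c_1)(1-c_2)(\theta-hd)=d$ one has $\lambda_2=0$, so $J(E_1)$ has a zero eigenvalue and $E_1$ is non-hyperbolic (this is the transcritical-bifurcation threshold); and elsewhere $\lambda_2>0$ while $\lambda_1<0$, so $E_1$ is a saddle, hence unstable.

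The Jacobian computation and the denominator-clearing are routine; the step that needs care is the bookkeeping in the third paragraph — confirming that each inequality really is equivalent after the divisions, and that the auxiliary thresholds $c_1^*,c_2^*$ fall in $(0,1)$ exactly on the pieces of parameter space where they are invoked, so that the algebraic condition $\alpha k(1-c_1)(1-c_2)(\theta-hd)<d$ and the set $\Gamma_1$ describe the same region.
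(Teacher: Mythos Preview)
Your approach is essentially identical to the paper's: compute the Jacobian at $E_1$, observe it is upper triangular so the eigenvalues are the diagonal entries $\lambda_1=-r<0$ and $\lambda_2=\dfrac{\theta\alpha k(1-c_1)(1-c_2)}{1+\alpha kh(1-c_1)(1-c_2)}-d$, and then reduce everything to the sign of $\lambda_2$. The paper simply states that matching $\lambda_2\lessgtr 0$ to the region $\Gamma_1$ is ``straightforward to check''; your third paragraph actually carries out that check by clearing the denominator to get $\alpha k(1-c_1)(1-c_2)(\theta-hd)<d$ and then parametrising the resulting hyperbola-like region in the $(c_1,c_2)$-square, so you are supplying detail the paper omits rather than arguing differently.
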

\begin{proof}
	Prey only equilibrium means an equilibrium which contains only prey density, and no predator density. That means at this equilibrium point predator population density is zero.
	The Jacobian matrix of system (\ref{eq:2}) around $E_1$ is 
	$$J(E_1)=\left( \begin{array}{cc}
	-r &~~ -rkK(1-c_1)(1-c_2)-\frac{\alpha k(1-c_1)(1-c_2)}{1+\alpha k h (1-c_1)(1-c_2)}
	\\
	0 &~ \frac{\theta \alpha k(1-c_1)(1-c_2)}{1+\alpha k h (1-c_1)(1-c_2)}-d
	\end{array} \right)$$ and the eigenvalues are $\lambda_1=-r<0$ and $\lambda_2=\frac{\theta \alpha k(1-c_1)(1-c_2)}{1+\alpha k h (1-c_1)(1-c_2)}-d$. So, $E_1$ is asymptotically stable, non-hyperbolic and unstable if $\lambda_2<0$, $\lambda_2=0$ and $\lambda_2>0$, respectively, which is straightforward to check.
\end{proof}


\begin{theorem}
	\label{thm:4}
	The coexistence equilibrium point $E^*(x^*,y^*)$, where $x^*=\frac{d}{\alpha(1-c_2)(\theta-hd)}, y^*=\frac{1}{2(1-c_2)K}\bigg(\sqrt{\Delta}-1-\frac{r\theta dK}{\alpha^2(\theta-hd)^2(1-c_1)(1-c_2)}\bigg)$ and $\Delta=\bigg(\frac{r\theta dK}{\alpha^2(\theta-hd)^2(1-c_1)(1-c_2)}-1\bigg)^2+\frac{4r\theta K(1-c_1)}{\alpha(\theta-hd)(1-c_2)}$, is feasible in the region $\{c_1,c_2\in(0,1),\theta>\theta_1: \{c_2<c_2^*,c_1<\hat{c}_1\}\mbox{or}\{c_1<c_1^*,c_2<\hat{c}_2\}\}$.
	Let\\
	{\bf H$_1$:} $\Gamma_2=\bigg\{c_1,c_2\in(0,1),\alpha>\frac{1}{hk},\theta>\max\{\theta_1,\theta_2\}:		\{c_1<\min\{\bar{c}_1,\hat{c}_1\},\tilde{c}_2<c_2<c_2^*\}or\{c_2<\min\{\bar{c}_2,\hat{c}_2\},\tilde{c}_1<c_1<c_1^*\bigg\},$\\
	{\bf H$_2$:} $\Gamma_3=\bigg\{c_1,c_2\in(0,1),\alpha>\frac{1}{hk},\theta>\theta_2: 
	\{c_1<\bar{c}_1,c_2<\tilde{c}_2\}or\{c_2<\bar{c}_2,c_1<\tilde{c}_1\}
	\bigg\}$,\\
	{\bf H$_3$:} $\Gamma_2^*=\bigg\{c_1,c_2\in(0,1),\alpha>\frac{1}{hk},\theta>\theta_2: 
	\{c_1<\bar{c}_1,c_2=\tilde{c}_2\}or\{c_2<\bar{c}_2,c_1=\tilde{c}_1\}
	\bigg\}$ and\\
	{\bf H$_4$:} $\Gamma_1^*=\bigg\{c_1,c_2\in (0,1),\alpha>\frac{1}{hk},\theta>\theta_2: \{c_1<\hat{c}_1,c_2=c_2^*\}or\{c_2<\hat{c}_2,c_1=c_1^*\}\bigg\}$\\
	{\bf H$_4$:} $\bar{K}=\frac{k(\theta-hd)^2(1-c_1)(1-c_2)\{\alpha kh(1-c_1)(1-c_2)(\theta-hd)-(\theta+hd)\}}{r\theta^2(\theta+hd)^2}$.\\
	Then\\
	(i) $E^*$ is locally asymptotically stable if {\bf H$_1$} or \{{\bf H$_2$} and $K>\bar{K}$\} hold, unstable focus if \{{\bf H$_2$} and $0<K<\bar{K}$\} hold.\\
	(ii) The system (\ref{eq:2}) undergoes a Hopf bifurcation around $E^*$ if {\bf H$_3$} or \{{\bf H$_2$} and $K=\bar{K}$\} hold.\\
	(iii) The system (\ref{eq:2}) undergoes a transcritical bifurcation between $E^*$ and $E_1$ if {\bf H$_4$} holds,\\   
	(iv) $E^*$ is globally asymptotically stable if {\bf H$_1$} and $hd<\theta<h(r+d)$ hold.\\
	Here $\tilde{c}_1=1-\frac{\theta+hd}{\alpha k h(1-c_2)(\theta-hd)},\tilde{c}_2=1-\frac{\theta+hd}{\alpha k h(1-c_1)(\theta-hd)},\bar{c}_1=1-\frac{\theta+hd}{\alpha k h(\theta-hd)}=\bar{c}_2.$
\end{theorem}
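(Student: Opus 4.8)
The plan is to reduce all five assertions to a single structural feature of the Jacobian at $E^{*}$. Feasibility comes first: the predator nullcline $\tfrac{\theta\alpha(1-c_2)x}{1+\alpha(1-c_2)hx}=d$ is the vertical line $x=x^{*}=\tfrac{d}{\alpha(1-c_2)(\theta-hd)}$, which is positive exactly when $\theta>hd$ (the stronger inequality $\theta>\theta_1$ only keeps $\hat c_1\in(0,1)$). Substituting $x=x^{*}$ into $\dot x=0$ and clearing the factor $1+(1-c_2)Ky$ gives a quadratic in $y$ with positive leading and positive linear coefficients, hence with a unique positive root iff its constant term $\tfrac{rx^{*}}{(1-c_1)k}-r$ is negative, i.e.\ iff $x^{*}<(1-c_1)k=x_1$, which is exactly $c_1<c_1^{*}$ (equivalently $c_2<c_2^{*}$); carrying out the quadratic formula then reproduces the stated $y^{*}$ and $\Delta$.

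The key fact is that, since $E^{*}$ lies on the predator nullcline, $J_{22}(E^{*})=0$, while $J_{21}(E^{*})>0$ and $J_{12}(E^{*})<0$; hence $\det J(E^{*})=-J_{12}J_{21}>0$ for every feasible $E^{*}$, and the stability type is decided solely by $\operatorname{tr}J(E^{*})=J_{11}$. Using the prey-nullcline identity to eliminate the fear term $\tfrac{r}{1+(1-c_2)Ky^{*}}$, together with the simplification $1+\alpha(1-c_2)hx^{*}=\theta/(\theta-hd)$, one finds $J_{11}=-\tfrac{rx^{*}}{(1-c_1)k}+\tfrac{\alpha(1-c_2)hd(\theta-hd)}{\theta^{2}}\,y^{*}$, which is affine and increasing in $y^{*}$; since $y^{*}(K)$ is strictly monotone in $K$, $J_{11}$ changes sign at a single value $K=\bar K$, a direct computation identifies it with the $\bar K$ of the statement, and its sign is that of $\alpha kh(1-c_1)(1-c_2)(\theta-hd)-(\theta+hd)$, which is $\le 0$ precisely on $\Gamma_2$ (i.e.\ the $c_i$ above $\tilde c_i$). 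That yields (i): $J_{11}<0$, hence $E^{*}$ asymptotically stable, under \textbf{H$_1$} or under \{\textbf{H$_2$}, $K>\bar K$\}; $J_{11}>0$ under \{\textbf{H$_2$}, $0<K<\bar K$\}, the instability being a focus once $(\operatorname{tr}J)^{2}<4\det J$ is checked there.

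For (ii), on the sets of \textbf{H$_3$} and of \{\textbf{H$_2$}, $K=\bar K$\} we have $\operatorname{tr}J(E^{*})=0$, $\det J(E^{*})>0$, so the eigenvalues are $\pm i\sqrt{\det J(E^{*})}$; transversality follows from $\tfrac{d}{dK}\operatorname{tr}J(E^{*})=\tfrac{\alpha(1-c_2)hd(\theta-hd)}{\theta^{2}}\tfrac{dy^{*}}{dK}\neq0$ (and the analogous derivative when the parameter is $c_1$ or $c_2$), while computing the first Lyapunov coefficient settles nondegeneracy. For (iii), on \textbf{H$_4$} one has $c_1=c_1^{*}$, so $x^{*}=x_1$, $y^{*}=0$ and $E^{*}$ merges with $E_1$; by Theorem~\ref{thm:2}, $J(E_1)$ then has a simple zero eigenvalue, and I would apply Sotomayor's theorem with bifurcation parameter $\mu\in\{c_1,c_2\}$, verifying $w^{T}f_\mu=0$, $w^{T}(Df_\mu\,v)\neq0$, $w^{T}\!\big(D^{2}f(v,v)\big)\neq0$ for the right/left null eigenvectors $v,w$ of $J(E_1)$; the nonvanishing of the last two is the transcritical statement.

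Part (iv) is the main obstacle. The preliminary steps are routine: $\dot x\le rx\big(1-\tfrac{x}{(1-c_1)k}\big)$ gives $\limsup_{t\to\infty}x\le(1-c_1)k$, then $\tfrac{d}{dt}(\theta x+y)\le\theta x(r+d)-d(\theta x+y)$ produces an absorbing rectangle, on which $E^{*}$ is the only interior equilibrium and, since \textbf{H$_1$} holds, is asymptotically stable by (i), while $E_0,E_1$ are boundary saddles (Theorems~\ref{thm:1},~\ref{thm:2}) with the axes as stable manifolds, so no interior orbit limits onto them; by Poincar\'e--Bendixson it then suffices to rule out periodic orbits. The plan is to do this either via the Dulac function $B(x,y)=\tfrac{1+\alpha(1-c_2)hx}{xy}$, for which $\partial_y\!\big(B\dot y\big)\equiv0$ and $\operatorname{div}\big(B(\dot x,\dot y)\big)=\tfrac{r}{y}\big(\tfrac{\alpha(1-c_2)h}{1+(1-c_2)Ky}-\tfrac{1+2\alpha(1-c_2)hx}{(1-c_1)k}\big)$, or via a Volterra-type Lyapunov function $V=\theta\int_{x^{*}}^{x}\tfrac{p(s)-p(x^{*})}{p(s)}\,ds+\int_{y^{*}}^{y}\tfrac{s-y^{*}}{s}\,ds$ with $p(x)=\tfrac{\alpha(1-c_2)x}{1+\alpha(1-c_2)hx}$, finishing with LaSalle's invariance principle. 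The hard part is exactly this: unlike the classical Rosenzweig--MacArthur model, the fear factor makes $\dot V$ carry an \emph{indefinite} cross term proportional to $(x-x^{*})\big(\tfrac{1}{1+(1-c_2)Ky}-\tfrac{1}{1+(1-c_2)Ky^{*}}\big)$, and it likewise leaves $\operatorname{div}\big(B(\dot x,\dot y)\big)$ sign-ambiguous near the coordinate axes; I expect the upper bound $\theta<h(r+d)$ to be precisely what is consumed in restoring sign-definiteness on the absorbing region, after which LaSalle (or Dulac plus Poincar\'e--Bendixson) closes the argument.
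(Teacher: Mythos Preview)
Your treatment of feasibility and of parts (i)--(iii) is essentially what the paper does: the Jacobian at $E^{*}$ has $J_{22}=0$, $\det J(E^{*})=-J_{12}J_{21}>0$, so everything hinges on the sign of $J_{11}$; the Hopf argument is $\operatorname{tr}J=0$, $\det J>0$, plus $\tfrac{d}{dK}\operatorname{tr}J\big|_{K=\bar K}\neq 0$; and the transcritical bifurcation is handled exactly via Sotomayor at $E_1$ with $\mu\in\{c_1,c_2\}$. (The paper does not compute a first Lyapunov coefficient; it establishes the limit cycle for $0<K<\bar K$ separately, by building an explicit trapping region out of three lines $L,M,N$ and invoking Poincar\'e--Bendixson.)

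The genuine gap is in (iv). You correctly diagnose why your candidate Dulac function $B=\tfrac{1+\alpha(1-c_2)hx}{xy}$ and your Volterra-type $V$ both stall: the fear denominator $1+(1-c_2)Ky$ survives and leaves an indefinite term. The paper's fix is precisely to \emph{build that denominator into the Dulac function} and to leave the power of $y$ free:
\[
B(x,y)=\{1+\alpha h(1-c_2)x\}\{1+(1-c_2)Ky\}\,x^{-1}y^{p}.
\]
The first factor clears the Holling denominator, the second clears the fear denominator, and after this double cancellation the divergence factors as $x^{-1}y^{p}\big[(1-c_2)Ky\,f_1(x,p)+f_2(x,p)\big]$ with $f_1,f_2$ \emph{quadratics in $x$ alone}, coefficients depending on $p$. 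The hypothesis $hd<\theta<h(r+d)$ is exactly what forces $f_2(x,p)\ge f_1(x,p)$ for all $x\ge 0$, reducing the problem to $f_2(x,p)\le 0$ on $[0,\infty)$; this is a discriminant condition that is quadratic in $\bar p=p+1$, and a suitable $\bar p$ exists precisely when $(1-c_1)(1-c_2)\le \tfrac{\theta+hd}{\alpha hk(\theta-hd)}$, which is the content of \textbf{H$_1$}. Bendixson--Dulac then rules out periodic orbits, and since $E^{*}$ is the unique interior equilibrium, global asymptotic stability follows. Your simpler $B$ (fixed $p=-1$, no $\{1+(1-c_2)Ky\}$ factor) cannot be salvaged by an absorbing-rectangle bound on $y$: near $y=0$ your divergence is genuinely positive, and periodic orbits need not stay away from that strip a priori. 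The Lyapunov route has the same obstruction and the paper does not pursue it.
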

\begin{proof}
	Coexistence equilibrium point is the most import fact in the dynamical system where non-zero constant values of both prey and predator densities are allowed. We observe the behaviour of the dynamical system around the coexistence equilibrium point and so that we analyse the bifurcation (Hopf and Transcritical) around the coexistence equilibrium point. So, in this context, it is very much important to show the feasible region of the coexistence equilibrium point of the prey-predator system.
	$(i)$ The Jacobian matrix of the system (\ref{eq:2}) at $E^*$ is $J(E^*)=\left( \begin{array}{cc}
	a_{11} & a_{12}
	\\
	a_{21} & 0
	\end{array} \right)$ where
	$$\begin{array}{l}
	a_{11}=-\frac{rx^*}{(1-c_1)k}+\frac{\alpha^2(1-c_2)^2hx^*y^*}{\{1+\alpha(1-c_2)hx^*\}^2},\\
	a_{12}=-\frac{rx^*K(1-c_2)}{\{1+(1-c_2)Ky^*\}^2}-\frac{\alpha(1-c_2)x^*}{1+\alpha(1-c_2)hx^*}<0,\\
	a_{21}=\frac{\theta\alpha(1-c_2)y^*}{\{1+\alpha(1-c_2)hx^*\}^2}>0.
	\end{array}$$
	The characteristic equation is given by $\lambda^2-a_{11}\lambda-a_{12}a_{21}=0$. Here det$(J(E^*))=-a_{21}a_{12}>0$. So, $E^*$ is asymptotically stable if tr$J(E^*)=a_{11}<0$ (real part of the eigenvalues). Making use of the above form of $a_{11}$, which after a straightforward mathematical calculation yields the condition
	\begin{equation}\label{eq:3}
	\frac{2\{\alpha kh(\theta-hd)(1-c_1)(1-c_2)-(\theta+hd)\}}{(\theta+hd)}<\sqrt{\Delta}-1-\frac{r\theta d K}{\alpha^2(\theta-hd)^2(1-c_1)(1-c_2)}.
	\end{equation}
	If {\bf H$_1$} holds, then inequality (\ref{eq:3}) holds and if {\bf H$_2$} holds then $K>\bar{K} (\mbox{threshold value})$ is needed for the stability of $E^*$. Further, $E^*$ loses its stability if {\bf H$_2$} and $0<K<\bar{K}$ hold.\\
	
	$(ii)$ Proof for the existence of Hopf bifurcation and limit cycle is given in Section (\ref{s3}). \\
	
	$(iii)$ Proof for the existence of transcritical bifurcation is given in Section (\ref{s4}). \\
	
	$(iv)$ Assume the right hand sides of system (\ref{eq:2}) are $P(x,y)$ and $Q(x,y)$ respectively. We consider the Dulac function as
	\[B(x,y)=\{1+\alpha h(1-c_2)x\}\{1+(1-c_2)Ky\}x^{-1}y^p,\] 
	where $p$ will be defined later. We calculate the divergence of the vector 
	\[D=\frac{\partial(P(x,y)B(x,y))}{\partial x}+\frac{\partial(Q(x,y)B(x,y))}{\partial y}=x^{-1}y^p\{f_(x)(1-c_2)Ky+f_2(x)\}.\]
	Here
	$$\begin{array}{l}
	f_1(x,p)=-\frac{2\alpha hr(1-c_2)}{(1-c_1)k}x^2+\{(p+2)\alpha(\theta-hd)(1-c_2)-\frac{r}{(1-c_1)k}\}x-(p+2)d,\\
	f_2(x,p)=-\frac{2\alpha hr(1-c_2)}{(1-c_1)k}x^2+\{(p+1)\alpha(\theta-hd)(1-c_2)+\alpha hr(1-c_2)-\frac{r}{(1-c_1)k}\}x-(p+1)d.
	\end{array}$$
	If $hd<\theta<h(r+d)$ holds, then
	\[f_2(x,p)-f_1(x,p)=\alpha\{h(r+d)-\theta\}(1-c_2)x+d\geq 0~\mbox{for} ~ x\in[0,\infty).\]
	So, $D\leq 0$ for $(x,y)\in \mathbb{R}^2_{+}$ if
	
	\begin{equation}\label{eq:4}
	f_2(x,p)\leq 0, ~\mbox{for} ~x\in[0,\infty).\end{equation} 
	Now it is sufficient to find a $p$ for which (\ref{eq:4}) holds and for so
	
	\begin{equation}\label{eq:5}
	l(p+1)=\bigg(\alpha(p+1)(1-c_2)(\theta-hd)+\alpha h r(1-c_2)-\frac{r}{(1-c_1)k}\bigg)^2-\frac{8\alpha hrd(p+1)(1-c_2)}{(1-c_1)k}\leq 0.\end{equation} 
	Let, $\bar{p}=p+1$, then (\ref{eq:5}) becomes

	\begin{equation}\label{eq:6}
	\begin{split}
	l(\bar{p}) &=(1-c_2)^2\alpha^2(\theta-hd)^2\bar{p}^2+2(1-c_2)\bigg[\bigg(\alpha h r(1-c_2)-\frac{r}{(1-c_1)k}\bigg)\alpha(\theta-hd)-\frac{4\alpha hrd}{(1-c_1)k}\bigg]\bar{p}\\
	&+\bigg\{\alpha h r(1-c_2)-\frac{r}{(1-c_1)k}\bigg\}^2\leq 0.
	\end{split}
	\end{equation} 
	The existence of $\bar{p}$ satisfying (\ref{eq:6}) means
	
	$$\begin{array}{l}
	l\bigg(-\frac{\bigg(\alpha h r(1-c_2)-\frac{r}{(1-c_1)k}\bigg)\alpha(\theta-hd)-\frac{4\alpha hrd}{(1-c_1)k}}{\alpha^2(\theta-hd)^2(1-c_2)}\bigg)\leq 0 \\
	\implies \bigg\{\bigg(\alpha h r(1-c_2)-\frac{r}{(1-c_1)k}\bigg)\alpha(\theta-hd)-\frac{4\alpha hrd}{(1-c_1)k}\bigg\}^2-\bigg(\alpha h r(1-c_2)-\frac{r}{(1-c_1)k}\bigg)^2\alpha^2(\theta-hd)^2\geq 0\\
	\implies (1-c_1)(1-c_2)\leq \frac{\theta+hd}{\alpha hk(\theta-hd)}\implies \Gamma_3.
	\end{array}$$
	Thus there exists a $p$ such that $D\leq 0$ for $(x,y)\in \mathbb{R}^2_{+}$. By the Bendixson-Dulac theorem (\cite{B83}) system (\ref{eq:2}) has no periodic orbits in the positive plane. Because $E^*$ is the only positive equilibrium in the plane, each of the positive solutions tends to $E^*$. By the consideration of the local asymptotically stability of $E^*$, we come to the conclusion that $E^*$ is globally asymptotically stable if both {\bf H$_1$} and the condition $hd<\theta<h(r+d)$ hold.
\end{proof}

\subsection{Existence of Hopf bifurcation and limit cycle}\label{s3}
On the basis of Poincar$\acute{\mbox{e}}$-Bendixson theorem \cite{M07}, in this section we analyze the existence of Hopf bifurcation and limit cycle.


\begin{theorem}\label{thm:6}
	Assume {\bf H$_2$} holds, then system (\ref{eq:2}) undergoes a Hopf bifurcation around $E^*$ at $K=\bar{K}$. If {\bf H$_2$} and $0<K<\bar{K}$ hold, then there exists a limit cycle of the system (\ref{eq:2}). 
\end{theorem}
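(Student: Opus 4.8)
The plan is to prove the two claims in turn; both rest on the form of the Jacobian at $E^*$ obtained in the proof of Theorem~\ref{thm:4}, namely $J(E^*)=\left(\begin{array}{cc} a_{11} & a_{12} \\ a_{21} & 0 \end{array}\right)$ with $a_{12}<0<a_{21}$, so that $\det J(E^*)=-a_{12}a_{21}>0$ for every feasible parameter choice and the characteristic equation is $\lambda^2-a_{11}\lambda-a_{12}a_{21}=0$. Hence the stability type of $E^*$ is governed entirely by the sign of the trace $a_{11}$, and near any parameter value at which $a_{11}$ changes sign the eigenvalues form a complex-conjugate pair with real part $\tfrac12 a_{11}$.

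For the Hopf bifurcation I would take $K$ as the bifurcation parameter and use that $x^*=\frac{d}{\alpha(1-c_2)(\theta-hd)}$ is independent of $K$: then $a_{12}a_{21}$ is independent of $K$, and $1+\alpha(1-c_2)hx^*=\theta/(\theta-hd)$ is constant, so inserting these into the expression for $a_{11}$ from the proof of Theorem~\ref{thm:4} gives $a_{11}(K)=-\frac{rd}{\alpha k(1-c_1)(1-c_2)(\theta-hd)}+\frac{\alpha hd(1-c_2)(\theta-hd)}{\theta^{2}}\,y^*(K)$, an increasing affine function of $y^*(K)$. Moreover $y^*(K)$ is strictly decreasing in $K$: implicit differentiation of the quadratic $F(y,K)=0$ that defines $y^*$ (the prey nullcline evaluated at $x=x^*$) yields $\frac{dy^*}{dK}=-F_K/F_y$ with $F_y>0$ at the relevant root and $F_K>0$ whenever $y^*>0$. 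Consequently $a_{11}$ is strictly monotone in $K$ and has a unique zero; solving $a_{11}(K)=0$ — equivalently, equality in~\eqref{eq:3} — returns exactly $K=\bar K$. At $K=\bar K$ the eigenvalues are $\pm i\sqrt{-a_{12}a_{21}}$, simple, purely imaginary and non-zero, while for $K$ near $\bar K$ the complex pair has real part $\tfrac12 a_{11}(K)$, so the transversality condition $\frac{d}{dK}\mathrm{Re}\,\lambda\big|_{K=\bar K}=\tfrac12 a_{11}'(\bar K)=\frac{\alpha hd(1-c_2)(\theta-hd)}{2\theta^{2}}\,\frac{dy^*}{dK}\big|_{K=\bar K}<0$ holds. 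The Hopf bifurcation theorem then gives a Hopf bifurcation of~\eqref{eq:2} at $K=\bar K$; assumption~\textbf{H$_2$} enters only to guarantee that $E^*$ is feasible and that $\bar K\in(0,\infty)$, cf.\ Theorem~\ref{thm:4}. (The criticality of the bifurcation would follow from the first Lyapunov coefficient but is not needed for the statement.)

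For the limit cycle, when $0<K<\bar K$ the monotonicity above gives $a_{11}(K)>a_{11}(\bar K)=0$, so by Theorem~\ref{thm:4}(i) $E^*$ is an unstable focus — in particular a hyperbolic repeller — and I would apply the Poincar\'e--Bendixson theorem. First I would exhibit a bounded positively invariant region: from $\dot x\le rx\big(1-\frac{x}{(1-c_1)k}\big)$ one gets $\limsup_{t\to\infty}x(t)\le(1-c_1)k$, and setting $W=\theta x+y$ the predation terms cancel, yielding $\dot W+dW\le\theta x\big(r+d-\frac{rx}{(1-c_1)k}\big)\le M$ with $M=\frac{\theta(1-c_1)k(r+d)^{2}}{4r}$; since the coordinate axes are invariant and $\dot W\le0$ on the segment $\theta x+y=M/d$, the triangle $\Omega=\{x\ge0,\,y\ge0,\,\theta x+y\le M/d\}$ is positively invariant, and $(r-d)^{2}\ge0$ shows that $E_0,E_1,E^*\in\Omega$. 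Second, since $E^*$ is a repeller, for small $\delta>0$ the flow points strictly outward on $\partial B_\delta(E^*)$, so $\Omega\setminus B_\delta(E^*)$ is again positively invariant. Third, the only remaining equilibria $E_0$ and $E_1$ are saddles whose local stable manifolds lie on the invariant $y$-axis and $x$-axis respectively, so no orbit through an interior point converges to either of them; combined with the acyclicity of the boundary flow (on the axes the flow only runs $E_0\to E_1$ and toward $E_0$, while the unstable manifold of $E_1$ immediately enters the open first quadrant), a Butler--McGehee / uniform-persistence argument shows that the $\omega$-limit set of every interior orbit is a compact connected invariant subset of the interior containing no equilibrium, which by Poincar\'e--Bendixson must be a periodic orbit; this is a limit cycle of~\eqref{eq:2}.

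The step I expect to be the main obstacle is the third one: ruling out that the limit set of interior orbits collapses onto the boundary saddles or onto a heteroclinic polycycle through $E_0$ and $E_1$ — i.e.\ establishing uniform persistence — so that Poincar\'e--Bendixson genuinely delivers a periodic orbit. By contrast, the monotonicity of $y^*(K)$ (and hence of $a_{11}$) used in the transversality check is elementary once the signs of $F_y$ and $F_K$ are pinned down in the feasibility region, and everything else amounts to rearranging the formulas already established in Theorem~\ref{thm:4}.
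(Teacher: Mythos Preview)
Your proposal is correct. The Hopf part follows the same scheme as the paper --- verify $\mathrm{tr}\,J(E^*)|_{K=\bar K}=0$, $\det J(E^*)>0$, and transversality --- though you obtain transversality via the monotonicity chain $K\mapsto y^*(K)\mapsto a_{11}$ (decreasing, then increasing affine), whereas the paper computes $\tfrac{d}{dK}\mathrm{tr}\,J(E^*)\big|_{K=\bar K}$ by a direct algebraic substitution. Your route has the advantage that the sign of the derivative is transparent (negative, consistent with the stability switch from unstable to stable as $K$ increases through $\bar K$); the paper asserts a positive sign in its formula~\eqref{eq:7}, but only $\neq 0$ is needed, so this discrepancy is immaterial.

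For the limit cycle the two arguments diverge. The paper builds an explicit polygonal trapping region in the first quadrant bounded by the lines $L:\,x=x_1$, $M:\,y=\tilde y$ (over $0<x<x^*$), and a slanted segment $N$ joining them, checks by hand that the vector field points inward along each piece for $\tilde y$ sufficiently large, and then invokes Poincar\'e--Bendixson. You instead obtain ultimate boundedness from the Lyapunov-type estimate $\dot W+dW\le M$ for $W=\theta x+y$, excise a neighbourhood of the repelling focus $E^*$, and use a Butler--McGehee/persistence argument to rule out boundary $\omega$-limit sets before applying Poincar\'e--Bendixson. Both are valid; the paper's construction is concrete but computation-heavy (the cubic estimate on $f(\tilde y)$), while your approach is softer and more portable, at the cost of the persistence step you correctly flag as the only nontrivial point. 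Note that the paper glosses over exactly this issue --- its trapping region still has $E_0$ and $E_1$ on the boundary, so the same exclusion of boundary saddles is implicitly needed there as well. Finally, the theorem asks only for existence of a limit cycle, which is what both arguments deliver; the paper's phrase ``unique limit cycle'' is asserted but not actually established by the trapping-region construction.
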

\begin{proof}
	From an analysis of the characteristic equation for $E^*$, namely $\lambda^2-\mbox{tr}(J(E^*))\lambda+\mbox{det}(J(E^*))=0$, one can establish that there exists a Hopf bifurcation iff $K=\bar{K}$ such that:
	$$\begin{array}{l}
	(i)~ \mbox{tr}(J(E^*))|_{K=\bar{K}}=0,\\
	(ii)~ \mbox{det}|_{K=\bar{K}}>0 ~\mbox{which is equivalent to purely imaginary eigenvalues},\\
	(iii)~ \frac{d\mbox{tr}(J(E^*))}{dK}|_{K=\bar{K}}\neq 0.
	\end{array}$$
	As the condition tr$(J(E^*))=0$ gives $a_{11}(\bar{K})=0$ and det$(J(E^*))=-a_{12}(J(E^*))a_{21}(J(E^*))>0$, the conditions (i) and (ii) are satisfied. So we need to verify only $\frac{d\mbox{tr}(J(E^*))}{dK}|_{K=\bar{K}}\neq 0$. By the assumption of this theorem, we get
	
	\begin{equation}\label{eq:7}
	\begin{split}
	\frac{d\mbox{tr}(J(E^*))}{dK}\bigg|_{K=\bar{K}} & =\frac{da_{11}(J(E^*))}{dK}\bigg|_{K=\bar{K}} \\
	& =\frac{\alpha^2(\theta-hd)^2(1-c_2)\bigg[
		\bigg(\alpha h r(1-c_2)-\frac{r}{(1-c_1)k}\bigg)\alpha(\theta-hd)-\frac{r\alpha(\theta+hd)}{(1-c_1)k}\bigg]}{\frac{\bar{K}^2r\theta^2\alpha^2(\theta+hd)}{(1-c_1)k}}>0.
	\end{split}
	\end{equation}
	Hence the condition (iii) is satisfied which implies that system (\ref{eq:2}) undergoes a Hopf bifurcation at $K=\bar{K}$.\\
	
	Now we will prove that the system (\ref{eq:2}) has a unique limit cycle. Already we know that if {\bf H$_2$} and the condition $0<K<\bar{K}$ hold, then $E^*$ is unstable and $E_1$ is a saddle point. From {\bf H$_2$}, we get
	\[x_1=(1-c_1)k>\frac{d}{\alpha(1-c_2)(\theta-hd)}=x^*.\]
	Let $L:x-x_1=0$, $M:y-\tilde{y}=0$ ($\tilde{y}>0$ will be specified later) and $N:2(x_1-x^*)(y-\tilde{y})+\tilde{y}(x-x^*)=0$.
	
	\begin{equation}\label{eq:8}
	\frac{dL}{dt}\bigg|_{x=x_1}=\frac{dx}{dt}\bigg|_{x=x_1}=\frac{rx_1}{1+(1-c_2)Ky}-\frac{rx_1^2}{(1-c_1)k}-\frac{\alpha (1-c_2) x_1y}{1+\alpha (1-c_2)hx_1}<0 ~ \mbox{for} ~y\in(0,\infty).
	\end{equation}
	
	\begin{equation}\label{eq:9}
	\frac{dM}{dt}\bigg|_{y=\tilde{y}}=\frac{dy}{dt}\bigg|_{y=\tilde{y}}=\frac{\theta\alpha (1-c_2) x\tilde{y}}{1+\alpha (1-c_2)hx}-d\tilde{y}<0~\mbox{for}~x\in(0,x^*).
	\end{equation}
	Since $0<x<x_1$, we have $x-x^*<x_1-x^*$ and $y=\tilde{y}-\frac{\tilde{y}(x-x^*)}{2(x_1-x^*)}>\frac{\tilde{y}}{2}$.When $0<x<x^*,y>0$, then
	
	\begin{equation}\label{eq:10}
	\frac{dN}{dt}\bigg|_{x=x_1,y=\tilde{y}}=2(x_1-x^*)\frac{dy}{dt}+\tilde{y}\frac{dx}{dt}<\frac{1}{\{1+\alpha h(1-c_2)x\}\{1+(1-c_2)Ky\}}f(\tilde{y})
	\end{equation}
	where 
	
	\begin{equation}\label{eq:11}
	\begin{split}
	f(\tilde{y})&=-\frac{\alpha K(1-c_2)^2x\tilde{y}^3}{4}\\
	&+\bigg\{2\alpha(\theta-hd)(1-c_2)^2(x_1-x^*)^2-\frac{x}{2}\bigg\{\frac{\alpha rhK(1-c_2)^2x^*}{(1-c_1)k}+\frac{rK(1-c_2)x}{(1-c_1)k}+\alpha(1-c_2)\bigg\}x\bigg\} \tilde{y}^2\\
	&+\bigg\{2\alpha(\theta-hd)(1-c_2)(x_1-x^*)^2+\bigg\{ -\frac{\alpha rh(1-c_2)x^2}{(1-c_1)k}+\bigg\{\alpha rh(1-c_2)-\frac{r}{(1-c_1)k}\bigg\}x+r\bigg\}x\tilde{y}.
	\end{split}
	\end{equation}
	Because $\alpha K(1-c_2)^2x>0,\frac{dN}{dt}<0$ for sufficiently large $\tilde{y}>0$. If {\bf H$_2$} and $0<K<\bar{K}$ hold, then by Poincar$\acute{\mbox{e}}$-Bendixson theorem \cite{M07}, system (\ref{eq:2}) has a unique limit cycle.
\end{proof}

\subsection{Existence of transcritical bifurcation}\label{s4}
The system (\ref{eq:2}) undergoes a transcritical bifurcation between $E^*$ and $E_1$ for appropriate choice of parameters. It has been shown that when $(c_1,c_2)\in \Gamma_2$, the coexistence equilibrium E* is stable but the axial equilibrium $E_1$ is unstable and when $(c_1,c_2)\in \Gamma_1$, the coexistence equilibrium E* is unstable (in this case $E^*$ is biologically infeasible) but the axial equilibrium $E_1$ is stable. The two equilibria coincide when $(c_1,c_2)\in\Gamma_4$ and exchange their stability. By using Sotomayer's theorem, we prove the existence of transcritical bifurcation. On $(c_1,c_2)\in\Gamma_4$, there exists only one equilibrium point which is $E_1$ and the Jacobian matrix is calculated at $E_1$ as
\[J=Df(x_1,y_1:(c_1,c_2)\in\Gamma_4)=\left( \begin{array}{cc}
-r &~~ -rkK(1-c_1)(1-c_2)
\\
0 & 0
\end{array} \right).\]
$J$ has an eigenvalue $\lambda=0$. Let $u$ and $v$ be the corresponding eigenvectors
to the eigenvalue $\lambda=0$ for $J$ and $J^T$ respectively. Then we can calculate
$u=\left( \begin{array}{c}
-\frac{rkK(1-c_1)(1-c_2)}{r}
\\
0
\end{array} \right)$ and $v=\left( \begin{array}{c}
0
\\
1
\end{array} \right)$. Using the expressions for $u$ and $v$, we get
$$\begin{array}{l}
(i)~v^Tf_{c_1}(x_1,y_1;(c_1,c_2)\in\Gamma_4)|_{c_2<\hat{c}_2}=v^Tf_{c_2}(x_1,y_1;(c_1,c_2)\in\Gamma_4)|_{c_1<\hat{c}_1}=0,\\
(ii)~v^T[Df_{c_1}(x_1,y_1;(c_1,c_2)\in\Gamma_4)u|_{c_2<\hat{c}_2}]=v^T[Df_{c_2}(x_1,y_1;(c_1,c_2)\in\Gamma_4)u|_{c_1<\hat{c}_1}]=-\frac{\alpha k(\theta-hd)^2}{\theta}\neq 0 ,\\
(iii)~v^T[D^2f_{c_1}(x_1,y_1;(c_1,c_2)\in\Gamma_4)(u,u)|_{c_2<\hat{c}_2}]=-\frac{2\alpha d^2}{rk\theta}\neq 0.
\end{array}$$
Thus, by Sotomayor's theorem, we conclude that the model
system undergoes a transcritical bifurcation as $(c_1, c_2)$ passes
through the critical line $\Gamma_4$.

\section{Conditions for existence/non-existence of Turing patterns}\label{S3}

In this section, we investigate the Turing instability. We uncover both spatial and spatio-temporal patterns, and provide the details of the Turing analysis. We derive conditions where the unique positive interior equilibrium point $(x^*,y^*)$ becomes unstable due to the action of diffusion under a small perturbation to the positive interior equilibrium point. We first linearize the model \eqref{eq:03} about the homogeneous steady state, by introducing both space and time-dependent fluctuations around $(x^*,y^*)$. This is given as
\begin{subequations}\label{eq:A}
	\begin{align}
	x & =x^* + \hat{x}(\xi,t),\\
	y & =y^* + \hat{y}(\xi,t),
	\end{align}
\end{subequations}
where $| \hat{x}(\xi,t)|\ll x^*$, \mbox{and} $| \hat{y}(\xi,t)|\ll y^*$. Conventionally, we choose
\[
\left[ {\begin{array}{cc}
	\hat{x}(\xi,t)  \\
	\hat{y}(\xi,t) \\
	\end{array} } \right]
=
\left[ {\begin{array}{cc}
	\epsilon_1  \\
	\epsilon_2 \\
	\end{array} } \right]
e^{\lambda t + i\kappa\xi},
\]
where  $\epsilon_i$ for $i=1,2$ are the corresponding amplitudes, $\kappa$ is the wave number, $\lambda$ is the growth rate of perturbation in time $t$ and $\xi$ is the spatial coordinate.
Substituting \eqref{eq:A} into \eqref{eq:03} and ignoring higher order terms including nonlinear terms, we obtain the characteristic equation which is given as
\begin{align}\label{eq:B}
({\bf J} - \lambda{\bf I} - \kappa^2{\bf D})
\left[ {\begin{array}{cc}
	\epsilon_1  \\
	\epsilon_2 \\
	\end{array} } \right]=0,
\end{align}
where
\[
\quad
\bf {D} =
\left[ {\begin{array}{cc}
	(1-c_1)D_1  & 0  \\
	0    & (1-c_2)D_2
	\end{array} } \right],
\]
\begin{align}\label{eq:ABC}
J &=\left(
\begin{array}{cc}
-\frac{rx^*}{(1-c_1)k}+\frac{\alpha^2(1-c_2)^2hx^*y^*}{\{1+\alpha(1-c_2)hx^*\}^2} & ~~ -\frac{rx^*K(1-c_2)}{\{1+(1-c_2)Ky^*\}^2}-\frac{\alpha(1-c_2)x^*}{1+\alpha(1-c_2)hx^*} \\
& \\
\frac{\theta\alpha(1-c_2)y^*}{\{1+\alpha(1-c_2)hx^*\}^2} &  0 \\
\end{array}
\right) &=\left(\begin{array}{cc} {J_{11} } & {J_{12} } \\ {J_{21} } & {J_{22} }  \\  \end{array}\right),
\end{align} 

and $\bf{I}$ is a $2\times 2$ identity matrix.
For the non-trivial solution of \eqref{eq:B}, we require that
\[
\left|
\begin{array}{cc}
J_{11}-\lambda -\kappa^2(1-c_1)D_1 & J_{12} \\
J_{21}  & J_{22}-\lambda -\kappa^2(1-c_2)D_2 \\

\end{array} \right|=0,
\]

To determine the stability domain associated with $(x^*,y^*)$, we rewrite the dispersion relation as a quadratic polynomial function given as
\begin{align}\label{eq:C}
\lambda_{\kappa}^2 - \boldsymbol {tr}(J_\kappa)\lambda_{\kappa} + \boldsymbol {\Delta_\kappa}=0,
\end{align}
with the coefficients
\begin{align*}
\boldsymbol {tr}(J_\kappa)&= (J_{11} + J_{22}) - \kappa^2((1-c_1)D_1 + (1-c_1)D_2) ,\\
\boldsymbol {\Delta_\kappa} &= \kappa^4(1-c_1)(1-c_2)D_1D_2 - \kappa^2\big( (1-c_2)D_2J_{11} + (1-c_1)D_1J_{22}\big)+\Delta_0,\\
\end{align*}
where $\Delta_0$ is the determinant of the Jacobian matrix of \eqref{eq:2} i.e. the determinant of the Jacobian matrix for the non-spatial case.\\
The roots of the above equation are,
\[\lambda_{\kappa}=\frac{1}{2}[tr(J_\kappa)\pm \sqrt{tr^2(J_\kappa)-4\Delta_\kappa} ]\]

It is a well known fact that Turing bifurcation is the main reason for breaking the spatial symmetry and generation of patterns which are oscillatory and stationary with space and time respectively \cite{Z12}. The conditions for Turing instability are $Re(\lambda_\kappa)=0$, $Im(\lambda_\kappa)=0$, at $\kappa=\kappa_T\neq0.$ \\\\
The general conditions of Turing bifurcation are in the following:
\begin{subequations}\label{eq:D}
	\begin{align}
	J_{11}+J_{22}<0 \label{eq:D1}\\ 
	J_{11}J_{22} - J_{12}J_{21}>0  \label{eq:D2}\\
	(1-c_2)D_2J_{11} + (1-c_1)D_1J_{22}>2\sqrt{(1-c_1)(1-c_2)D_1D_2\Delta_0} \label{eq:D4}
	\end{align}
\end{subequations}
Now we differentiate the constant term of \eqref{eq:C} w.r.t. $\kappa^2$ and we obtain
$\kappa^2_{min}=\frac{(1-c_2)D_2J_{11}+(1-c_1)D_1J_{22}}{2(1-c_1)(1-c_2)D_1D_2}$  and also get $\kappa^2_{T}=\sqrt{\frac{\Delta_0}{(1-c_1)(1-c_2)D_1D_2}}$ from the condition of Turing instability ($Im(\lambda_\kappa)=0$). At the Turing threshold, the spatial symmetry of the dynamical system is broken and there arises stationary and oscillatory patterns in time and space respectively with the wavelength $\lambda_{T}=\frac{2\pi}{\kappa_{T}}$ \cite{W07,ZSJ12}. \\

One dimensional diffusion problem may experience rapid change in the very beginning, but then the evolution of the solution becomes very slow. The solution is usually very smooth, and after some time, one cannot recognize the initial shape of it. This is in sharp contrast to solutions where the initial shape is preserved and the solution is basically moving to initial condition. Small time steps are not much convenient and not required by accuracy as the diffusion process converges to a stationary state. Incorporating the one dimensional diffusion in the model system \eqref{eq:2}, we obtain the following reaction-diffusion equations: 
\begin{equation}\label{eq:333}
\begin{split}
\frac{\partial x}{\partial t} &= \frac{rx}{1+(1-c_2)Ky}-\frac{rx^2}{(1-c_1)k}-\frac{\alpha (1-c_2) xy}{1+\alpha (1-c_2)hx}+(1-c_1)D_1x_{X}, \\
\frac{\partial y}{\partial t} &= \frac{\theta\alpha (1-c_2) xy}{1+\alpha (1-c_2)hx}-dy+(1-c_2)D_2y_{X},
\end{split}
\end{equation}
where  $D_{1}$ \mbox{and} $D_{2}$ are diffusion constants of prey ($x(X,t)$) and predator ($y(X,t)$) respectively and $x(X,0)> 0, y(X,0)> 0$ and we consider our domain as $\Omega =[0,\pi]$.
We prescribe Neumann boundary conditions
\begin{equation}
\frac{\partial x}{\partial X} = \frac{\partial y}{\partial X} = 0.
\end{equation}

\begin{theorem}
	\label{thm:ABC}
	Let us assume the predator-prey model is \eqref{eq:333}, and its Jacobian matrix $J$ of reaction terms is given by \eqref{eq:ABC}. Consider a parameter set for which $J$ is stable, then there exist diffusion coefficients $D_{1}$ \mbox{and} $D_{2}$, such that the spatially homogeneous steady state $(x^{*},y^{*})$ of \eqref{eq:333}, can be driven unstable due to Turing instability.
\end{theorem}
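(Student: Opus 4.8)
The plan is to run the classical diffusion-driven-instability argument: starting from a stable $J$, I would look for diffusion constants $D_1,D_2$ that push some nonzero mode $\kappa$ across the imaginary axis. Since $\boldsymbol{tr}(J_\kappa)=(J_{11}+J_{22})-\kappa^2\big((1-c_1)D_1+(1-c_2)D_2\big)<0$ for every $\kappa$---the homogeneous trace $J_{11}+J_{22}$ being negative by stability, and the diffusive correction only making it more negative---the trace can never produce a positive real part. Hence the only route to instability is to drive the determinant $\boldsymbol{\Delta_\kappa}$ negative at some $\kappa>0$, and the whole task collapses to exhibiting a pair $(D_1,D_2)$ with $\min_{\kappa}\boldsymbol{\Delta_\kappa}<0$.

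First I would record what stability forces on the entries of $J$ in \eqref{eq:ABC}. A direct computation at $E^*$ gives $J_{22}=0$ identically (the predator equilibrium density is pinned by $x^*$ alone, so $\partial\dot y/\partial y$ vanishes there); stability therefore reduces to $J_{11}<0$, while $\det(J)=-J_{12}J_{21}>0$ holds automatically from the signs $J_{12}<0$, $J_{21}>0$ already established. Next I would view $\boldsymbol{\Delta_\kappa}$ as the quadratic $q(z)=(1-c_1)(1-c_2)D_1D_2\,z^2-Bz+\Delta_0$ in $z=\kappa^2$, with $B=(1-c_2)D_2J_{11}+(1-c_1)D_1J_{22}$, whose minimum sits at $z_\ast=B/\big(2(1-c_1)(1-c_2)D_1D_2\big)$. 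The requirement $\min_z q(z)<0$ is exactly the paper's condition \eqref{eq:D4}, namely $B>2\sqrt{(1-c_1)(1-c_2)D_1D_2\,\Delta_0}$, and it demands both $B>0$ (so the minimizer lies at positive $z$) and $B^2>4(1-c_1)(1-c_2)D_1D_2\,\Delta_0$.

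The main obstacle is the sign of $B$, and for this system it is decisive rather than merely delicate. Substituting $J_{22}=0$ yields $B=(1-c_2)D_2J_{11}$; since stability forces $J_{11}<0$ while $D_2>0$ and $c_2\in(0,1)$, we obtain $B<0$ for every admissible $(D_1,D_2)$. Then $z_\ast<0$, so $q$ is strictly increasing on $z\ge 0$ and $\boldsymbol{\Delta_\kappa}\ge q(0)=\Delta_0>0$ for all real $\kappa$. No choice of diffusion constants can render any mode unstable, so the destabilizing pair $(D_1,D_2)$ the theorem asserts does not exist under the stated hypotheses---this is the familiar degeneracy of Rosenzweig--MacArthur kinetics, in which the absence of a positive diagonal entry leaves the Turing sign requirement in \eqref{eq:D4} unsatisfiable. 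I would therefore expect the honest outcome of this plan to be a non-existence conclusion, and I would scrutinize the author's argument precisely where it checks \eqref{eq:D4}: a proof of the claim as worded must somewhere rely on $J_{22}>0$, which \eqref{eq:333} at $E^*$ does not supply.
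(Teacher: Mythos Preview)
Your analysis is correct, and in fact more careful than the paper itself. The paper provides \emph{no proof} of Theorem~\ref{thm:ABC}: the statement is followed immediately by Section~\ref{S4}, and the only support offered is the numerical simulation of Figure~\ref{fig:330}. So there is nothing to compare your argument against line by line.

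More importantly, you have put your finger on a genuine obstruction that the paper overlooks. From \eqref{eq:ABC} one has $J_{22}=0$ identically at $E^*$, and the stability hypothesis forces $J_{11}<0$. Substituting $J_{22}=0$ into the Turing condition \eqref{eq:D4} gives $(1-c_2)D_2J_{11}>2\sqrt{(1-c_1)(1-c_2)D_1D_2\Delta_0}$, which is impossible with $J_{11}<0$. Equivalently, your quadratic $q(z)$ has $B=(1-c_2)D_2J_{11}<0$, so its minimum over $z\ge 0$ is $q(0)=\Delta_0>0$ and $\boldsymbol\Delta_\kappa>0$ for every mode. The conclusion you reach---that no admissible $(D_1,D_2)$ can destabilize the homogeneous steady state via the classical Turing mechanism---is the standard no-go observation for Rosenzweig--MacArthur-type kinetics with a linear predator mortality, and it applies verbatim here. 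The theorem as worded therefore does not follow from the framework the paper sets up in \eqref{eq:D}, and any numerical patterns the authors report must arise from a different mechanism (e.g.\ Hopf or Turing--Hopf interaction) rather than a pure Turing bifurcation about a stable $E^*$. Your critique is on target; there is no gap in your reasoning to repair.
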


\section{Weakly Nonlinear Analysis}\label{S4}

Instabilities are studied in general as linear instabilities by which we mean the perturbations are sufficiently small. It has been shown that dynamical systems have a major slow down at the bifurcation threshold point. Eigenvalues corresponding to the dynamical system are close to zero  when the bifurcation parameter is very close to a Turing bifurcation parameter \cite{DB96}. This indicates that the critical modes are slow and the dynamical system shows the presence of active slow modes \cite{W10,WW11,Z11}. At this point Turing patterns appear and the structure and stability of Turing patterns are explained by the study of amplitude equations \cite{I00}. The study of weakly nonlinear analysis helps to understand the etymology of amplitude equations. In this section we derive amplitude equations with the help of the standard multiple-scale analysis \cite{S08,Y13,Z14,GP17}.
Let us assume that the Turing patterns \cite{WW10} have three pairs of modes  $\left( \textbf{k}_{i},\textbf{-k}_{i}, i = 1, 2, 3 \right)$ making an angle of $\frac{2\pi}{3}$ between each pair with the following conditions $|\textbf{k}_{i}|=\kappa_T$ and $\sum_{i=1}^{3}\textbf{k}_{i}=0.$ In this section, we will obtain the amplitude equations with the help of standard multiple-scale analysis. For this purpose, the solution of the model system \eqref{eq:03} is expanded as follows:

\begin{align}\label{eq:E1}
\textbf{x}=\left( {\begin{array}{c}
	x  \\
	y  
	\end{array} } \right)=\tilde{\textbf{x}}.\sum_{j=1}^{3}[A_{j}\mbox{exp}(i\textbf{k}_{j}.\textbf{r})+\bar{A}_{j}\mbox{exp}(-i\textbf{k}_{j}.\textbf{r})],
\end{align}
where $\tilde{\textbf{x}}$ is the eigenvector of the linearized operator. Basically $\tilde{\textbf{x}}$ is the direction of the eigen modes in the concentration space (i.e. the ratio of x and y). $A_{j}$ and conjugate $\bar{A}_{j}$ are the amplitudes associated with the modes $\textbf{k}_{j}$ and $-\textbf{k}_{j}$ respectively.

System \eqref{eq:03} can be rewritten as

\begin{equation}\label{eq:E2}
\begin{split}
\frac{\partial x}{\partial t}&=f+(1-c_1)D_1\nabla^2 x,\\
\frac{\partial y}{\partial t}&=g+(1-c_2)D_2\nabla^2 y,
\end{split}
\end{equation}
where $f=\frac{rx}{1+(1-c_2)Ky}-\frac{rx^2}{(1-c_1)k}-\frac{\alpha (1-c_2) xy}{1+\alpha (1-c_2)hx}$ and $g=\frac{\theta\alpha (1-c_2) xy}{1+\alpha (1-c_2)hx}-dy$.\\

Now using Taylor's series expansion to system \eqref{eq:E2}, we obtain the linearized model as follows:

\begin{equation}\label{eq:E3}
\begin{split}
\frac{\partial x}{\partial t}&=x\frac{\partial f}{\partial x}+y\frac{\partial f}{\partial y} + \sum_{j=2}^{3} \frac{1}{j\,!}\bigg(x\frac{\partial }{\partial x}+y\frac{\partial }{\partial y}\bigg)^{j}f +(1-c_1)D_1\nabla^2 x,\\
\frac{\partial y}{\partial t}&=x\frac{\partial g}{\partial x}+y\frac{\partial g}{\partial y} + \sum_{j=2}^{3} \frac{1}{j\,!}\bigg(x\frac{\partial }{\partial x}+y\frac{\partial }{\partial y}\bigg)^{j}g +(1-c_2)D_2\nabla^2 y.
\end{split}
\end{equation}

System \eqref{eq:E3} can be rewritten in the following form:

\begin{align}\label{eq:E4}
\frac{\partial }{\partial t}\textbf{X}=\textbf{L}\textbf{X}+\textbf{P},
\end{align}
where
\begin{align}\label{eq:E5}
\textbf{X}=\left( {\begin{array}{c}
	x  \\
	y  
	\end{array} } \right),
\end{align}

\begin{align}\label{eq:E6}
\textbf{L}=\left( {\begin{array}{cc}
\frac{\partial f}{\partial x} + (1-c_1)D_1\nabla^2 &  \frac{\partial f}{\partial y} \\
\frac{\partial g}{\partial x} & \frac{\partial g}{\partial y} + (1-c_2)D_2\nabla^2 
	\end{array} } \right),
\end{align}
and
 
\begin{align}\label{eq:E7}
\textbf{P}=\left( {\begin{array}{c}
 \sum_{j=2}^{3} \frac{1}{j\,!}\bigg(x\frac{\partial }{\partial x}+y\frac{\partial }{\partial y}\bigg)^{j}f  \\
\sum_{j=2}^{3} \frac{1}{j\,!}\bigg(x\frac{\partial }{\partial x}+y\frac{\partial }{\partial y}\bigg)^{j}g 
	\end{array} } \right).
\end{align}

$K$ is our bifurcation parameter and we make series expansions near the Turing threshold parameter $K_T$ in the following:

\begin{align}\label{eq:E8}
\textbf{X}=\left( {\begin{array}{c}
	x  \\
	y  
	\end{array} } \right) = \epsilon\left( {\begin{array}{c}
	x_1  \\
	y_1  
	\end{array} } \right)+\epsilon^2\left( {\begin{array}{c}
	x_2  \\
	y_2  
	\end{array} } \right)+\epsilon^3\left( {\begin{array}{c}
	x_3  \\
	y_3  
	\end{array} } \right)+o(\epsilon^3), \mbox{where}~ |\epsilon|<<1,
\end{align}

such that

\begin{align}\label{eq:E9}
K_T-K= \epsilon K_1+\epsilon^2 K_2+\epsilon^3 K_3+o(\epsilon^3).
\end{align}

\begin{align}\label{eq:E10}
\frac{\partial }{\partial t}= \epsilon \frac{\partial }{\partial T_{1}}+\epsilon^2 \frac{\partial }{\partial T_{2}}+o(\epsilon^2), ~~~ \mbox{where},~ T_1=\epsilon t, ~ T_2=\epsilon^2 t.
\end{align}
Here, $T_1=\epsilon t, ~ T_2=\epsilon^2 t$ are the two time scales of the system.

\begin{align}\label{eq:E11}
\frac{\partial A}{\partial t}= \epsilon \frac{\partial A_{1}}{\partial T_{1}}+\epsilon^2 \frac{\partial A_{2}}{\partial T_{2}}+o(\epsilon^2),~~ \mbox{where}~ A= \mbox{Amplitude}.
\end{align}
\\
Now if we linearize each $f_{x_{i}x_{j}}$ and $g_{x_{i}x_{j}}$ by using Taylor's series expansion at $K=K_T$, we get,

\begin{equation}\label{eq:E12}
\begin{split}
f_{x_{i}x_{j}} &=f^T_{x_{i}x_{j}}-(\epsilon K_1+\epsilon^2 K_2+\epsilon^3 K_3)f^{'T}_{x_{i}x_{j}}+o(\epsilon^2), \\
g_{x_{i}x_{j}} &=g^T_{x_{i}x_{j}}-(\epsilon K_1+\epsilon^2 K_2+\epsilon^3 K_3)g^{'T}_{x_{i}x_{j}}+o(\epsilon^2),
\end{split}
\end{equation}
where $f^T_{x_{i}x_{j}}$ and $g^T_{x_{i}x_{j}}$ are the values of $f_{x_{i}x_{j}}$ and $g_{x_{i}x_{j}}$ at $K=K_T,$ respectively, and $f^{'T}_{x_{i}x_{j}}=\frac{d}{dK}(f_{x_{i}x_{j}})$ and $g^{'T}_{x_{i}x_{j}}=\frac{d}{dK}(g_{x_{i}x_{j}})$ at  $K=K_T$ 
[Here we denote $f_{x_{i}x_{j}}=\frac{\partial }{\partial x_{j}}(\frac{\partial f}{\partial x_{i}})$ and $x_1=x$ and $x_2=y$]. \\
\\
Now we use the trick of \eqref{eq:E12} and expand the terms of $\textbf{P}$ by using Taylor's series as follows: 

\begin{align}\label{eq:E13}
\textbf{P}= \epsilon^2 \textbf{P}_{2}+\epsilon^3 \textbf{P}_{3}+o(\epsilon^3),
\end{align}
where
\begin{align}\label{eq:E14}
\textbf{P}_{2}=\left( {\begin{array}{c}
\frac{1}{2}(x^2_{1}f^T_{xx}+y^2_{1}f^T_{yy}+2x_{1}y_{1}f^T_{xy}) \\ \\
\frac{1}{2}(x^2_{1}g^T_{xx}+y^2_{1}g^T_{yy}+2x_{1}y_{1}g^T_{xy}) 
	\end{array} } \right).
\end{align} 
and

\begin{align}\label{eq:E15}
\textbf{P}_{3}=\left( {\begin{array}{c}
	(x_{1}x_{2}f^T_{xx}+y_{1}y_{2}f^T_{yy}+(x_{1}y_{2}+x_{2}y_{1})f^T_{xy})+\frac{1}{6}((x_1\frac{\partial }{\partial x}+y_1\frac{\partial }{\partial y})^{3}f^T)-\frac{K_1}{2}((x_1\frac{\partial }{\partial x}+y_1\frac{\partial }{\partial y})^{2}f^{'T}) \\ \\
	(x_{1}x_{2}g^T_{xx}+y_{1}y_{2}g^T_{yy}+(x_{1}y_{2}+x_{2}y_{1})g^T_{xy})+\frac{1}{6}((x_1\frac{\partial }{\partial x}+y_1\frac{\partial }{\partial y})^{3}g^T)-\frac{K_1}{2}((x_1\frac{\partial }{\partial x}+y_1\frac{\partial }{\partial y})^{2}g^{'T}) 
	\end{array} } \right).
\end{align} 
$\textbf{P}_{2}$ and $\textbf{P}_{3}$ are terms associated with second and third order of $\epsilon$ of nonlinear term $\textbf{P}.$ At the same time, we linearize the operator $\textbf{L}$ of \eqref{eq:E4} at $K=K_T$ and we obtain the following relations:

\begin{align}\label{eq:E16}
\textbf{L}=\textbf{L}_{T}+(K_T-K)\textbf{M},
\end{align}
where
\begin{align}\label{eq:E17}
\textbf{L}_{T}=\left( {\begin{array}{cc}
	\frac{\partial f^T}{\partial x} + (1-c_1)D_1\nabla^2 &  \frac{\partial f^T}{\partial y} \\
	\frac{\partial g^T}{\partial x} & \frac{\partial g^T}{\partial y} + (1-c_2)D_2\nabla^2 
	\end{array} } \right),
\end{align}
and
\begin{align}\label{eq:E18}
\textbf{M}=\left( {\begin{array}{cc}
	m_{11} & m_{12} \\
	m_{21} & m_{22} 
	\end{array} } \right).
\end{align}
At $E^*$ with $K=K_T,$ each component of $\textbf{M}$ can be described as follows:
\begin{center}
	\begin{minipage}[b]{0.5\textwidth}
		\centering
		\begin{align*}
		{m_{11}}&=\frac{d}{dK}\bigg(f_x(x^*,y^*)\bigg) \\
		{m_{21}}&=\frac{d}{dK}\bigg(g_x(x^*,y^*)\bigg)
		\end{align*}
	\end{minipage}%
	\begin{minipage}[b]{0.5\textwidth}
		\raggedleft
		\begin{align*}
		{m_{12}}&=\frac{d}{dK}\bigg(f_y(x^*,y^*)\bigg) \\
		{m_{22}}&=\frac{d}{dK}\bigg(g_y(x^*,y^*)\bigg)
		\end{align*}
	\end{minipage}
\end{center} 
Now substituting Eqs. \eqref{eq:E8}-\eqref{eq:E18} into Eq. \eqref{eq:E4} and comparing the coefficients of $\epsilon^j,~~\forall j=1,2~ \mbox{and}~ 3,$ we obtain the following relations:
\begin{subequations}\label{eq:19}
	\begin{align}
	\textbf{L}_{T}\left( {\begin{array}{c}
		x_1  \\
		y_1  
		\end{array} } \right) &=0  \label{eq:19A}\\ 
		\textbf{L}_{T}\left( {\begin{array}{c}
		x_2  \\
		y_2  
		\end{array} } \right) &=\frac{\partial}{\partial T_1}\left( {\begin{array}{c}
		x_1  \\
		y_1  
		\end{array} } \right)-K_1\textbf{M}\left( {\begin{array}{c}
		x_1  \\
		y_1  
		\end{array} } \right)-\textbf{P}_{2}   \label{eq:19B}\\
		\textbf{L}_{T}\left( {\begin{array}{c}
		x_3  \\
		y_3  
		\end{array} } \right) &=\frac{\partial}{\partial T_1}\left( {\begin{array}{c}
		x_2  \\
		y_2  
		\end{array} } \right)+\frac{\partial}{\partial T_2}\left( {\begin{array}{c}
		x_1  \\
		y_1  
		\end{array} } \right)-K_1\textbf{M}\left( {\begin{array}{c}
		x_2  \\
		y_2  
		\end{array} } \right)-K_2\textbf{M}\left( {\begin{array}{c}
		x_1  \\
		y_1  
		\end{array} } \right)-\textbf{P}_{3}   \label{eq:19C}
	\end{align}
\end{subequations}
Note that in the above, $\textbf{L}_{T}$ is a linear and $\left( {\begin{array}{c}
		x_1  \\
		y_1  
\end{array} } \right)$ corresponds to linear combination of eigenvectors associated with the eigenvalue 0. Considering the first order in $\epsilon$, the solution of \eqref{eq:19A} can be obtained as:
\begin{align}\label{eq:E20}
\left( {\begin{array}{c}
	x_1  \\
	y_1  
	\end{array} } \right)= \left( {\begin{array}{c}
	\xi_1  \\
	1  
	\end{array} } \right)\bigg(\sum_{j=1}^{3}W_{j}\mbox{exp}(i\textbf{k}_{j}.\textbf{r})+c.c.\bigg),
\end{align}
where $\xi_1=-\frac{f^T_y}{f^T_x-(1-c_1)D_1\kappa^2_T}=-\frac{g^T_y-(1-c_2)D_2\kappa^2_T}{g^T_x}$, $W_j$ are the amplitudes of the modes $\mbox{exp}(i\textbf{k}_{j}.\textbf{r}), j = 1, 2, 3,$ and \mbox{c.c.} represents complex conjugate. \\ 
Using Fredholm solvability condition, the R.H.S. of \eqref{eq:19B} must be orthogonal with the zero eigenvector to the  adjoint operator of $\textbf{L}_{T},$ denoted as $\textbf{L}^{+}_{T}.$
Therefore we obtain the following equation:
\begin{equation}\label{eq:E21}
 \textbf{L}^{+}_{T}\left( {\begin{array}{c}
 	x_2  \\
 	y_2  
 	\end{array} } \right) =0.
\end{equation}
The solution form of the eigenvector corresponding to zero eigenvalue of  Eq.\eqref{eq:E21} is given by,
\begin{align}\label{eq:E22}
\left( {\begin{array}{c}
	x_2  \\
	y_2  
	\end{array} } \right)= \left( {\begin{array}{c}
	1  \\
	\xi_2  
	\end{array} } \right)\bigg(\sum_{j=1}^{3}W_{j}\mbox{exp}(i\textbf{k}_{j}.\textbf{r})+c.c.\bigg),
\end{align}
where $\xi_2=-\frac{f^T_x-(1-c_1)D_1\kappa^2_T}{g^T_x}=-\frac{f^T_y}{g^T_y-(1-c_2)D_2\kappa^2_T}$.\\
We substitute Eq.\eqref{eq:E20} into Eq.\eqref{eq:19B} and  we get,

\begin{equation}\label{eq:E23}
\begin{split}
	\textbf{L}_{T}\left( {\begin{array}{c}
	x_2  \\\\
	y_2  
	\end{array} } \right) &=\left( {\begin{array}{c}
	\xi_1\frac{\partial W_{j}}{\partial T_1}  \\\\
	\frac{\partial W_{j}}{\partial T_1}  
	\end{array} } \right)-K_1\left( {\begin{array}{c}
	\xi_1m_{11}W_j+m_{12}W_j  \\\\
	\xi_1m_{21}W_j+m_{22}W_j  
	\end{array} } \right)-\left( {\begin{array}{c}
	u_1\bar{W}_l\bar{W}_m  \\\\
	u_2\bar{W}_l\bar{W}_m  
	\end{array} } \right) \\
   & =\left( {\begin{array}{c}
   	F^{j}_{x}  \\\\
   	F^{j}_{y}  
   	\end{array} } \right) ~~~ (say)
\end{split}
\end{equation}
where $j\neq l,m;~ l\neq m;~\forall~ j,l,m=1,2~ \mbox{and}~ 3.$ For the sake of easy calculation, we consider,
\begin{equation}\label{eq:E24}
\begin{split}
u_1 &=(\xi^2_{1}f^T_{xx}+f^T_{yy}+2\xi_{1}f^T_{xy}), \\
u_2 &=(\xi^2_{1}g^T_{xx}+g^T_{yy}+2\xi_{1}g^T_{xy}).
\end{split}
\end{equation}
Since the left eigenvector with zero eigenvalue of \eqref{eq:E22} is orthogonal to $(F^{j}_{x}~~ F^{j}_{y})^T$, we have
\begin{equation}\label{eq:E25}
\begin{split}
\begin{pmatrix} x_2 & y_2 \end{pmatrix}\left( {\begin{array}{c}
	F^{j}_{x}\\ 
	F^{j}_{y}  
	\end{array} } \right)=0.
\end{split}
\end{equation}
Now we incorporate Eqs.\eqref{eq:E22}-\eqref{eq:E23} into Eq.\eqref{eq:E25} and compare the coefficients of $\mbox{exp}(i\textbf{k}_{j}.\textbf{r})$. Then we obtain the following equations for $j=1,2 ~\mbox{and}~ 3$:
\begin{equation}\label{eq:E26}
\begin{split}
(\xi_1+\xi_2)\frac{\partial W_1}{\partial T_1}&=K_1[(\xi_1m_{11}+m_{12})+\xi_2(\xi_1m_{21}+m_{22})]W_1+(u_1+\xi_2u_2)\bar{W}_2\bar{W}_3\\
(\xi_1+\xi_2)\frac{\partial W_2}{\partial T_1}&=K_1[(\xi_1m_{11}+m_{12})+\xi_2(\xi_1m_{21}+m_{22})]W_2+(u_1+\xi_2u_2)\bar{W}_1\bar{W}_3\\
(\xi_1+\xi_2)\frac{\partial W_3}{\partial T_1}&=K_1[(\xi_1m_{11}+m_{12})+\xi_2(\xi_1m_{21}+m_{22})]W_3+(u_1+\xi_2u_2)\bar{W}_1\bar{W}_2
\end{split}
\end{equation}
We write $\begin{pmatrix} x_2 & y_2 \end{pmatrix}^T$ in \eqref{eq:19B} as a sum of Fourier series terms as follows:
\begin{equation}\label{eq:E27}
\begin{split}
\left( {\begin{array}{c}
	x_2  \\
	y_2  
	\end{array} } \right) &= \left( {\begin{array}{c}
	x^0_{2}  \\
	y^0_{2}  
	\end{array} } \right)+\sum_{j=1}^{3}\left( {\begin{array}{c}
	x^j_{2}  \\
	y^j_{2}  
	\end{array} } \right)\mbox{exp}(i\textbf{k}_{j}.\textbf{r})+\sum_{j=1}^{3}\left( {\begin{array}{c}
	x^{jj}_{2}  \\
	y^{jj}_{2}  
	\end{array} } \right)\mbox{exp}(2i\textbf{k}_{j}.\textbf{r})+\left( {\begin{array}{c}
	x^{12}_{2}  \\
	y^{12}_{2}  
	\end{array} } \right)\mbox{exp}(i(\textbf{k}_{1}-\textbf{k}_{2}).\textbf{r})\\
    & +\left( {\begin{array}{c}
	x^{23}_{2}  \\
	y^{23}_{2}  
	\end{array} } \right)\mbox{exp}(i(\textbf{k}_{2}-\textbf{k}_{3}).\textbf{r})+\left( {\begin{array}{c}
	x^{31}_{2}  \\
	y^{31}_{2}  
	\end{array} } \right)\mbox{exp}(i(\textbf{k}_{3}-\textbf{k}_{1}).\textbf{r})+c.c.
\end{split}
\end{equation}
Now we substitute Eq. \eqref{eq:E27} into Eq. \eqref{eq:19B} and equate the coefficients of $\mbox{exp(0)}$, $\mbox{exp}(2i\textbf{k}_{j}.\textbf{r})$ and $\mbox{exp}(i(\textbf{k}_{3}-\textbf{k}_{1})$. Then we obtain the following results for $j= 1,2 ~\mbox{and}~3$,

\begin{equation}\label{eq:E28}
\begin{split}
\left( {\begin{array}{c}
	x^0_{2}  \\\\
	y^0_{2}  
	\end{array} } \right) &= \left( {\begin{array}{c}
	\frac{u_2f^{T}_{y}-u_1g^{T}_{y}}{f^{T}_{x}g^{T}_{y}-f^{T}_{y}g^{T}_{x}} \\\\
	\frac{u_2f^{T}_{x}-u_1g^{T}_{x}}{f^{T}_{y}g^{T}_{x}-f^{T}_{x}g^{T}_{y}}  
	\end{array} } \right) (|W_1|^2+|W_2|^2+|W_3|^2)=\left( {\begin{array}{c}
	R_1 \\\\
	R_2  
	\end{array} } \right) (|W_1|^2+|W_2|^2+|W_3|^2)     ~~~ (say)\\\\
\left( {\begin{array}{c}
	x^{jj}_{2}  \\\\
	y^{jj}_{2}  
	\end{array} } \right) &=\left( {\begin{array}{c}
	\frac{u_2f^{T}_{y}-u_1(g^{T}_{y}-4(1-c_2)D_2\kappa^{2}_{T})}{(f^{T}_{x}-4(1-c_1)D_1\kappa^{2}_{T})(g^{T}_{y}-4(1-c_2)D_2\kappa^{2}_{T})-f^{T}_{y}g^{T}_{x}} \\\\
	\frac{u_2(f^{T}_{x}-4(1-c_1)D_1\kappa^{2}_{T})-u_1g^{T}_{x}}{f^{T}_{y}g^{T}_{x}-(f^{T}_{x}-4(1-c_1)D_1\kappa^{2}_{T})(g^{T}_{y}-4(1-c_2)D_2\kappa^{2}_{T})}  
	\end{array} } \right) \frac{(W_j^2)}{2}=\left( {\begin{array}{c}
	S_1 \\\\
	S_2 
	\end{array} } \right) \frac{(W_j^2)}{2}    ~~~ (say)\\\\
\left( {\begin{array}{c}
	x^{12}_{2}  \\\\
	y^{12}_{2}  
	\end{array} } \right) &=\left({\begin{array}{c}
	\frac{u_2f^{T}_{y}-u_1(g^{T}_{y}-3(1-c_2)D_2\kappa^{2}_{T})}{(f^{T}_{x}-3(1-c_1)D_1\kappa^{2}_{T})(g^{T}_{y}-3(1-c_2)D_2\kappa^{2}_{T})-f^{T}_{y}g^{T}_{x}} \\\\
	\frac{u_2(f^{T}_{x}-3(1-c_1)D_1\kappa^{2}_{T})-u_1g^{T}_{x}}{f^{T}_{y}g^{T}_{x}-(f^{T}_{x}-3(1-c_1)D_1\kappa^{2}_{T})(g^{T}_{y}-3(1-c_1)D_2\kappa^{2}_{T})}  
		\end{array} } \right) W_1\bar{W}_2=\left({\begin{array}{c}
		T_1 \\\\
		T_2  
		\end{array} } \right) W_1\bar{W}_2 ~~~ (say) \\\\
\left( {\begin{array}{c}
	x^{23}_{2}  \\\\
	y^{23}_{2}  
	\end{array} } \right) &=\left({\begin{array}{c}
	\frac{u_2f^{T}_{y}-u_1(g^{T}_{y}-3(1-c_2)D_2\kappa^{2}_{T})}{(f^{T}_{x}-3(1-c_1)D_1\kappa^{2}_{T})(g^{T}_{y}-3(1-c_2)D_2\kappa^{2}_{T})-f^{T}_{y}g^{T}_{x}} \\\\
	\frac{u_2(f^{T}_{x}-3(1-c_1)D_1\kappa^{2}_{T})-u_1g^{T}_{x}}{f^{T}_{y}g^{T}_{x}-(f^{T}_{x}-3(1-c_1)D_1\kappa^{2}_{T})(g^{T}_{y}-3(1-c_2)D_2\kappa^{2}_{T})}  
	\end{array} } \right) W_2\bar{W}_3=\left({\begin{array}{c}
	T_1 \\\\
	T_2  
	\end{array} } \right) W_2\bar{W}_3 ~~~ (say) \\\\
\left( {\begin{array}{c}
	x^{31}_{2}  \\\\
	y^{31}_{2}  
	\end{array} } \right) &=\left({\begin{array}{c}
	\frac{u_2f^{T}_{y}-u_1(g^{T}_{y}-3(1-c_2)D_2\kappa^{2}_{T})}{(f^{T}_{x}-3(1-c_1)D_1\kappa^{2}_{T})(g^{T}_{y}-3(1-c_2)D_2\kappa^{2}_{T})-f^{T}_{y}g^{T}_{x}} \\\\
	\frac{u_2(f^{T}_{x}-3(1-c_1)D_1\kappa^{2}_{T})-u_1g^{T}_{x}}{f^{T}_{y}g^{T}_{x}-(f^{T}_{x}-3(1-c_1)D_1\kappa^{2}_{T})(g^{T}_{y}-3(1-c_2)D_2\kappa^{2}_{T})}  
	\end{array} } \right) W_3\bar{W}_1=\left({\begin{array}{c}
	T_1 \\\\
	T_2  
	\end{array} } \right) W_3\bar{W}_1 ~~~ (say)
\end{split}
\end{equation}
We incorporate Eq.\eqref{eq:E20} and Eq.\eqref{eq:E27} into Eq.\eqref{eq:E15} and after carrying out  long algebric manipulation, Eq.\eqref{eq:E15} can be rewritten as follows:
\begin{align}\label{eq:E29}
\textbf{P}_{3}=\left( {\begin{array}{c}
	H_1|W_1|^2W_1+H_2(|W_2|^2+|W_3|^2)W_1-H_3\bar{W_2}\bar{W_3} \\ \\
    I_1|W_1|^2W_1+I_2(|W_2|^2+|W_3|^2)W_1-I_3\bar{W_2}\bar{W_3}
	\end{array} } \right),
\end{align} 
where,
\begin{equation}\label{eq:E30}
\begin{split}
H_1 &=\bigg(R_1+\frac{S_1}{2}\bigg)(\xi_1f^{T}_{xx}+f^{T}_{xy})+\bigg(R_2+\frac{S_2}{2}\bigg)(\xi_1f^{T}_{xy}+f^{T}_{yy})+\frac{1}{2}(\xi^{3}_1f^{T}_{xxx}+3\xi^{2}_1f^{T}_{xxy}+3\xi_1f^{T}_{xyy}+f^{T}_{yyy}),\\
H_2 &=(R_1+{T_1})(\xi_1f^{T}_{xx}+f^{T}_{xy})+(R_2+{T_2})(\xi_1f^{T}_{xy}+f^{T}_{yy})+(\xi^{3}_1f^{T}_{xxx}+3\xi^{2}_1f^{T}_{xxy}+3\xi_1f^{T}_{xyy}+f^{T}_{yyy}),\\
H_3 &=K_1(\xi^{2}_1f^{'T}_{xx}+2\xi_1f^{'T}_{xy}+f^{'T}_{yy}), \\
I_1 &=\bigg(R_1+\frac{S_1}{2}\bigg)(\xi_1g^{T}_{xx}+g^{T}_{xy})+\bigg(R_2+\frac{S_2}{2}\bigg)(\xi_1g^{T}_{xy}+g^{T}_{yy})+\frac{1}{2}(\xi^{3}_1g^{T}_{xxx}+3\xi^{2}_1g^{T}_{xxy}+3\xi_1g^{T}_{xyy}+g^{T}_{yyy}),\\
I_2 &=(R_1+{T_1})(\xi_1g^{T}_{xx}+g^{T}_{xy})+(R_2+{T_2})(\xi_1g^{T}_{xy}+g^{T}_{yy})+(\xi^{3}_1g^{T}_{xxx}+3\xi^{2}_1g^{T}_{xxy}+3\xi_1g^{T}_{xyy}+g^{T}_{yyy}),\\
I_3 &=K_1(\xi^{2}_1g^{'T}_{xx}+2\xi_1g^{'T}_{xy}+g^{'T}_{yy}). 
\end{split}
\end{equation}
After using Eqs.\eqref{eq:E20}, \eqref{eq:E27} and \eqref{eq:E29} into Eq.\eqref{eq:19C} and comparing the coefficients of $\mbox{exp}(i\textbf{k}_{j}.\textbf{r}),~\forall j=1,2,3,$ one can check that Eq.\eqref{eq:19C} becomes
\begin{equation}\label{eq:E31}
\begin{split}
\textbf{L}_{T}\left( {\begin{array}{c}
	x_3  \\\\
	y_3  
	\end{array} } \right) &=\frac{\partial }{\partial T_1}\left( {\begin{array}{c}
      x^{j}_2      \\\\
	  y^{j}_2
	\end{array} } \right)+\left( {\begin{array}{c}
	\xi_1\frac{\partial W_{j}}{\partial T_2}  \\\\
	\frac{\partial W_{j}}{\partial T_2}  
	\end{array} } \right)-K_1\left( {\begin{array}{c}
	m_{11}x^{j}_2+m_{12}y^{j}_2  \\\\
	m_{21}x^{j}_2+m_{22}y^{j}_2  
	\end{array} } \right)-K_2\left( {\begin{array}{c}
	\xi_1m_{11}W_j+m_{12}W_j  \\\\
	\xi_1m_{21}W_j+m_{22}W_j  
	\end{array} } \right)\\
& -\left( {\begin{array}{c}
	H_1|W_1|^2W_1+H_2(|W_2|^2+|W_3|^2)W_1-H_3\bar{W_2}\bar{W_3} \\ \\
	I_1|W_1|^2W_1+I_2(|W_2|^2+|W_3|^2)W_1-I_3\bar{W_2}\bar{W_3}
	\end{array} } \right) \\
& =\left( {\begin{array}{c}
	G^{j}_{x}  \\\\
	G^{j}_{y}  
	\end{array} } \right) ~~~ (say).
\end{split}
\end{equation}
Let us assume that $x^{j}_2=\xi_1y^{j}_2,~\forall j=1,2,3.$ Using Fredholm solvability condition for Eq.\eqref{eq:19C}, $\left( {\begin{array}{c}
	G^{j}_{x}  \\\\
	G^{j}_{y}  
	\end{array} } \right)$ must be orthogonal to the left eigenvectors of $\textbf{L}_{T}$ with zero eigenvalue. We obtain the following relations:
\begin{equation}\label{eq:E32}
\begin{split}
(\xi_1+\xi_2)\bigg(\frac{\partial y^{1}_2}{\partial T_1}+\frac{\partial W_1}{\partial T_2}\bigg) &=[(\xi_1m_{11}+m_{12})+\xi_2(\xi_1m_{21}+m_{22})](K_1y^{1}_2+K_2W_1)\\
& +[(H_1+I_1\xi_2)|W_1|^2+(H_2+I_2\xi_2)(|W_2|^2+|W_3|^2)]W_1-(H_3+I_3\xi_2)\bar{W_2}\bar{W_3},\\
(\xi_1+\xi_2)\bigg(\frac{\partial y^{2}_2}{\partial T_1}+\frac{\partial W_2}{\partial T_2}\bigg) &=[(\xi_1m_{11}+m_{12})+\xi_2(\xi_1m_{21}+m_{22})](K_1y^{2}_2+K_2W_2)\\
& +[(H_1+I_1\xi_2)|W_2|^2+(H_2+I_2\xi_2)(|W_3|^2+|W_1|^2)]W_2-(H_3+I_3\xi_2)\bar{W_3}\bar{W_1},\\
(\xi_1+\xi_2)\bigg(\frac{\partial y^{3}_2}{\partial T_1}+\frac{\partial W_3}{\partial T_2}\bigg) &=[(\xi_1m_{11}+m_{12})+\xi_2(\xi_1m_{21}+m_{22})](K_1y^{3}_2+K_2W_3)\\
& +[(H_1+I_1\xi_2)|W3|^2+(H_2+I_2\xi_2)(|W_1|^2+|W_2|^2)]W_3-(H_3+I_3\xi_2)\bar{W_1}\bar{W_2}.
\end{split}
\end{equation}
Now, the amplitude equations $A_j(j=1,2,3)$ can be expressed as follows:

\begin{align}\label{eq:E33}
A_j=\epsilon W_j+\epsilon^2y^{j}_{2}+o(\epsilon^3) ~~~\forall j=1,2,3.
\end{align} 
By the analysis of the symmetries, up to the third order in the perturbations, the spatio temporal evolutions of the amplitudes $A_j(j=1,2,3)$ are described through the amplitude equations. Now using the above equations, we obtain the amplitude equations from Eq. \eqref{eq:E11} as follows:
\begin{equation}\label{eq:E34}
\begin{split}
\tau_0\frac{\partial A_1}{\partial t} &=\mu A_1+h\bar{A_2}\bar{A_3}-[g_1|A_1|^2+g_2(|A_2|^2+|A_3|^2)]A_1,\\
\tau_0\frac{\partial A_2}{\partial t} &=\mu A_2+h\bar{A_3}\bar{A_1}-[g_1|A_2|^2+g_2(|A_3|^2+|A_1|^2)]A_2,\\
\tau_0\frac{\partial A_3}{\partial t} &=\mu A_3+h\bar{A_1}\bar{A_2}-[g_1|A_3|^2+g_2(|A_1|^2+|A_2|^2)]A_3,
\end{split}
\end{equation}
where,
\begin{equation}\label{eq:E35}
\begin{split}
\mu &=\frac{K_T-K}{K_T}, \\
\tau_0 &=\frac{\xi_1+\xi_2}{K_T[(\xi_1m_{11}+m_{12})+\xi_2(\xi_1m_{21}+m_{22})]}, \\
h &=-\frac{H_3+\xi_2I_3}{K_T[(\xi_1m_{11}+m_{12})+\xi_2(\xi_1m_{21}+m_{22})]}, \\
g_1 &= -\frac{H_1+\xi_2I_1}{K_T[(\xi_1m_{11}+m_{12})+\xi_2(\xi_1m_{21}+m_{22})]}, \\
g_2 &=-\frac{H_2+\xi_2I_2}{K_T[(\xi_1m_{11}+m_{12})+\xi_2(\xi_1m_{21}+m_{22})]}.
\end{split}
\end{equation} \\



\begin{remark}
Amplitude equations are also obtained by using bifurcation parameters $c_1$ and $c_2$.
\end{remark}

\section{Stability Analysis of Amplitude Equations}\label{S5}
The amplitude equations can be re-expressed in terms of magnitudes and phases, 
\begin{equation}\label{eq:E36}
A_j=\rho_j\mbox{exp}(i\varphi_j)
\end{equation} with equal phase angle $\phi=\frac{2\pi}{3}$ and they can be decomposed to modes $\rho_j=|A_j|,$ where $\phi=\sum_{j=1}^{3}\varphi_j.$ \\
Now we have from \eqref{eq:E34},
\begin{equation}\label{eq:E37}
\begin{split}
\frac{\partial \phi}{\partial t}=\sum_{j=1}^{3}\frac{\partial \varphi_j}{\partial t}=\frac{1}{i}\sum_{j=1}^{3}\frac{1}{A_j}\frac{\partial A_j}{\partial t} &=-\frac{h}{\tau_0}(\rho^{2}_2\rho^{2}_3+\rho^{2}_3\rho^{2}_1+\rho^{2}_1\rho^{2}_2)(\rho_1\rho_2\rho_3)^{-1}(i\cos\phi+\sin\phi)\\ 
& +(i\tau_0)[3\mu-(g_1+2g_2).(|A_1|^2+|A_2|^2+|A_3|^2)].
\end{split}
\end{equation} 
Taking the real part, we have
\begin{equation}\label{eq:E38}
\begin{split}
\tau_0\frac{\partial \phi}{\partial t}= -h(\rho^{2}_2\rho^{2}_3+\rho^{2}_3\rho^{2}_1+\rho^{2}_1\rho^{2}_2)(\rho_1\rho_2\rho_3)^{-1}\sin\varphi.
\end{split}
\end{equation} 
Substituting Eq.\eqref{eq:E36} in the amplitude equations \eqref{eq:E34} and considering the real parts, we obtain the following equations:
\begin{equation}\label{eq:E39}
\begin{split}
\tau_0\frac{\partial \rho_1}{\partial t} &= \mu\rho_1+h\rho_2\rho_3\cos\phi-g_1\rho^3_1-g_2(\rho^2_2+\rho^2_3)\rho_1, \\
\tau_0\frac{\partial \rho_2}{\partial t} &= \mu\rho_2+h\rho_3\rho_1\cos\phi-g_1\rho^3_2-g_2(\rho^3_3+\rho^2_1)\rho_2, \\
\tau_0\frac{\partial \rho_3}{\partial t} &= \mu\rho_3+h\rho_1\rho_2\cos\phi-g_1\rho^3_3-g_2(\rho^2_1+\rho^2_2)\rho_3.
\end{split}
\end{equation} 
For $\phi=0~ \mbox{and}~ \phi=\pi$, $h$ becomes positive and negative respectively i.e.  when $h>0$, $H_0$ pattern is stable and for $h<0$, $H_{\pi}$ pattern is stable. Therefore Eq.\eqref{eq:E39} has the following form:
\begin{equation}\label{eq:E40}
\begin{split}
\tau_0\frac{\partial \rho_1}{\partial t} &= \mu\rho_1+|h|\rho_2\rho_3-g_1\rho^3_1-g_2(\rho^2_2+\rho^2_3)\rho_1, \\
\tau_0\frac{\partial \rho_2}{\partial t} &= \mu\rho_2+|h|\rho_3\rho_1-g_1\rho^3_2-g_2(\rho^3_3+\rho^2_1)\rho_2, \\
\tau_0\frac{\partial \rho_3}{\partial t} &= \mu\rho_3+|h|\rho_1\rho_2-g_1\rho^3_3-g_2(\rho^2_1+\rho^2_2)\rho_3.
\end{split}
\end{equation} 
Eq.\eqref{eq:E40} may have following types of solutions:
\subsection{Stability of Stationary State}
The stationary state $(O)$ is given by
$\rho_1=\rho_2=\rho_3=0$ which is stable for $\mu<\mu_o=0$ and unstable for $\mu>\mu_o.$

\subsection{Stability of Stripe Structures}
Stripe patterns $(S)$, given by
$\rho_1=\sqrt{\frac{\mu}{g_1}}\neq0,~\rho_2=\rho_3=0.$ \\
Let us consider $\rho_j=\rho_s+\delta\rho_j,~(j=1,2,3).$ Now place $\rho_1=\rho_s+\delta\rho_1, \rho_2=\delta\rho_2 ~\mbox{and}~ \rho_3=\delta\rho_3$ in Eq.\eqref{eq:E40}(neglecting terms of order $\delta^2$) and substituting the steady state condition, we obtain the following equations:
\begin{equation}\label{eq:E41}
\begin{split}
\tau_0\frac{\partial }{\partial t}(\delta\rho_1) &= (\mu-3g_1\rho^2_s)\delta\rho_1, \\
\tau_0\frac{\partial }{\partial t}(\delta\rho_2) &= (\mu-g_2\rho^2_s)\delta\rho_2+|h|\rho_s\delta\rho_3, \\
\tau_0\frac{\partial }{\partial t}(\delta\rho_3) &= |h|\rho_s\delta\rho_2+(\mu-g_2\rho^2_s)\delta\rho_3
\end{split}
\end{equation} 
The characteristic equation of \eqref{eq:E41} for the stability of stripe structures is as follows:
\begin{equation}\label{eq:E42}
\begin{split}
(\mu-3g_1\rho^2_s-\lambda)\{(\mu-g_2\rho^2_s-\lambda)^2-|h|^2\rho^2_s\}=0,
\end{split}
\end{equation} 
where $\lambda's$ are the associated eigenvalues. Now replace $\rho_s=\sqrt{\frac{\mu}{g_1}},$ in Eq.\eqref{eq:E42}, we get 
\begin{equation}\label{eq:E43}
\begin{split}
(-2\mu-\lambda)\bigg\{(\mu-\frac{g_2}{g_1}\mu-\lambda)^2-|h|^2\frac{\mu}{g_1}\bigg\}=0.
\end{split}
\end{equation}
Therefore, $\lambda_1=-2\mu$ ~ and ~ $\lambda_{2,3}=\mu\bigg(1-\frac{g_2}{g_1}\bigg)\mp|h|\sqrt{\frac{\mu}{g_1}}.$
Stable stripe patterns arise only when all the eigenvalues are negative. This implies 
\begin{equation}\label{eq:E44}
\begin{split}
\mu>\frac{|h|^2g_1}{(g_1-g_2)^2}=\mu_s.
\end{split}
\end{equation}
Therefore the stripe patterns are stable for $\mu>\mu_s$ and unstable for $\mu<\mu_s.$

\subsection{Stability of Hexagonal Structures}

We substitute $\rho_1=\rho_2=\rho_3=\rho_h$ in Eq.\eqref{eq:E40} and get,
\begin{equation}\label{eq:E45}
\begin{split}
\rho^\pm_h=\frac{\{|h|\pm\sqrt{(h^2+4(g_1+2g_2)\mu)}\}}{2(g_1+2g_2)}
\end{split}
\end{equation}
For the real value of $\rho^\pm_h$, the discriminant should be positive, which gives
\begin{equation}\label{eq:E46}
\begin{split}
\mu>\frac{-h^2}{4(g_1+2g_2)}=\mu_{h_1}
\end{split}
\end{equation}
In order to study the linear stability, we again assume $\rho_j=\rho_h+\delta\rho_j,~(j=1,2,3).$ Now we substitute $\rho_1=\rho_h+\delta\rho_1, \rho_2=\rho_h+\delta\rho_2 ~\mbox{and}~ \rho_3=\rho_h+\delta\rho_3$ in Eq.\eqref{eq:E40}(neglecting terms of order $\delta^2$) and substituting the steady state condition, we obtain the following equations:
\begin{equation}\label{eq:E47}
\begin{split}
\tau_0\frac{\partial }{\partial t}(\delta\rho_1) &= a_1\delta\rho_1+a_2\delta\rho_2+a_2\delta\rho_3, \\
\tau_0\frac{\partial }{\partial t}(\delta\rho_2) &= a_2\delta\rho_1+a_1\delta\rho_2+a_2\delta\rho_3, \\
\tau_0\frac{\partial }{\partial t}(\delta\rho_3) &= a_2\delta\rho_1+a_2\delta\rho_2+a_1\delta\rho_3,
\end{split}
\end{equation} 
where
\begin{equation}\label{eq:E48}
\begin{split}
a_1 &= \mu-(3g_1+2g_2)\rho^2_h, \\
a_2 &= |h|\rho_h-2g_2\rho^2_h. 
\end{split}
\end{equation} 
The characteristic equation for the amplitude variation of \eqref{eq:E47} is given by
\begin{equation}\label{eq:E49}
\begin{split}
\lambda^3-3a_1\lambda^2+3(a^2_1-a^2_2)\lambda+(3a_1a^2_2-a^3_1-2a^3_2)=0,
\end{split}
\end{equation}
where $\lambda's$ are eigenvalues. After solving the above cubic equation, we get
\begin{equation}\label{eq:E50}
\begin{split}
\lambda_1 =a_1+2a_2;~~\lambda_{2,3}=a_1-a_2.
\end{split}
\end{equation} 
Now we explore the criterion of linear stability based on the analysis of the eigenvalues, namely $\rho^{+}_h ~ \mbox{and} \rho^{-}_h$. 
\subsection*{\textbf{Case $\rho_h=\rho^{+}_h$:}}
Substitute the value of $\rho^{+}_h$ from \eqref{eq:E45} in $\lambda_1=a_1+2a_2=\mu-3(g_1+2g_2)\rho^{+2}_h+2|h|\rho_h^{+}$, we get
\begin{equation}\label{eq:E51}
\begin{split}
\lambda_1 =-\frac{4\mu(g_1+2g_2)+h^2+|h|\sqrt{(h^2+4\mu(g_1+2g_2))}}{2(g_1+2g_2)}<0.
\end{split}
\end{equation} 
Now we investigate the condition such that eigenvalues $\lambda_{2} ~ \mbox{and}~ \lambda_{3}$ become negative so that one is assured of the stability of hexagonal patterns. Assume and impose $\lambda_{2}=\lambda_{3}<0,$ we obtain
\begin{equation}\label{eq:E52}
\begin{split}
\lambda_2=\lambda_3 &=\mu-3g_1\rho^{+2}_h-|h|\rho_h^{+}\\
& =\frac{4\mu(g_1+2g_2)^2-6g_1h^2-12g_1\mu(g_1+2g_2)-2h^2(g_1+2g_2)-4|h|(2g_1+g_2)\sqrt{(h^2+4\mu(g_1+2g_2))}}{4(g_1+2g_2)^2} \\
& <0
\end{split}
\end{equation}
On simplification of \eqref{eq:E52}, we get
\begin{equation}\label{eq:E53}
\begin{split}
\mu<\frac{h^2(2g_1+g_2)}{(g_1-g_2)^2}=\mu_{h_2}.
\end{split}
\end{equation}
Therefore hexagonal patterns are stable if $\mu<\mu_{h_2}.$

\subsection*{\textbf{Case $\rho_h=\rho^{-}_h$:}}
Again we repeat the process along similar lines and substitute the value of $\rho^{-}_h$ from \eqref{eq:E45} in $\lambda_1=a_1+2a_2=\mu-3(g_1+2g_2)\rho^{-2}_h+2|h|\rho_h^{-}$ and we get,
\begin{equation}\label{eq:E54}
\begin{split}
\lambda_1 =\frac{-4\mu(g_1+2g_2)-h^2+|h|\sqrt{(h^2+4\mu(g_1+2g_2))}}{2(g_1+2g_2)}>0.
\end{split}
\end{equation} 
After a comparison of \eqref{eq:E54} with Eq.\eqref{eq:E51}, we find that the value of $\lambda_1$ becomes positive. Furthermore, we investigate the conditions on the eigenvalues $\lambda_{2} ~ \mbox{and}~ \lambda_{3},$ and we obtain
\begin{equation}\label{eq:E55}
\begin{split}
\lambda_2=\lambda_3 &=\mu-3g_1\rho^{-2}_h-|h|\rho_h^{-}\\
& =\frac{4\mu(g_1+2g_2)^2-6g_1h^2-12g_1\mu(g_1+2g_2)-2h^2(g_1+2g_2)+4|h|(2g_1+g_2)\sqrt{(h^2+4\mu(g_1+2g_2))}}{4(g_1+2g_2)^2} \\
& >0
\end{split}
\end{equation}
On comparing Eq.\eqref{eq:E55} with Eq.\eqref{eq:E52}, we find that Eq.\eqref{eq:E55} becomes positive. Now for the case $\rho_h=\rho^{-}_h$, all the three eigenvalues are positive, confirming that the hexagonal patterns are unstable.

\section{Numerical Simulation}\label{S6}
Next, we present the results on the numerical simulation of model system \eqref{eq:03}, which suggests the importance of the fear effect on the population in the spatio temporal domain. In the following we will enumerate the numerical analysis corresponding to previous sections one by one.

\subsection{Stability of Equilibrium Points}
In Figures (\ref{fig:1}~(a),(b),(c),(d),(e)), we compare the quantitative changes between carrying capacity, fear effect, functional response and self diffusion coefficients with $c_1$, $c_2$, $c_2$, $c_1$ and $c_2$ of the model system \eqref{eq:03}, respectively. We portray Hopf and transcritical bifurcations with respect to two parameters of all possible combinations of system \eqref{eq:2} in Figure \ref{fig:2}. Stability properties of the equilibrium points $E_0$, $E_1$ and $E^{*}$ are found to depend upon nine parameters, i.e. $ r, \alpha, \theta, d, h, k, K, c_2, c_1 $. We depict all possible combinations in different parametric phase-planes in Figure \ref{fig:2}. Parameters in the $y$-axis of extreme left of the sub figures of each row are same for the $y$-axis of the other sub figures of the same row. Similarly, parameters in the $x$-axis of bottom sub figures of each column is same for the $x$-axis of the previous sub figures of the same column. When two parameters are varying in the two axes then the remaining parameters are kept as constants. The parameters are given the values as $r=2, \alpha=0.5, \theta=0.7, d=0.55, h=1, k=90, K=0.5, c_2=0.5, c_1=0.1$. Red and blue lines indicate the Hopf and transcritical bifurcations w.r.t two parameters. In such combinations we discuss the stability domains of the Figures \ref{fig:2} as follows: First we discuss about the stability properties of the region between the red and blue lines. In this region the trivial equilibrium point $E_0$ is a saddle point, the axial equilibrium point $E_1$ is also a saddle point and the coexistence equilibrium point $E^*$ is found to be the stable focus. Second we discuss about the stability properties of the domain which is the other side of red line. In this domain the trivial equilibrium point $E_0$ is a saddle point, the axial equilibrium point $E_1$ is also a saddle point and the coexistence equilibrium point $E^*$ is an unstable focus. Lastly we discuss about the stability properties of the region which is the other side of the blue line. In this region the trivial equilibrium point $E_0$ is a saddle point, the axial equilibrium point $E_1$ is a saddle node and the coexistence equilibrium point $E^*$ becomes infeasible. In Figures (\ref{fig:4}~(a),(b)) we illustrate that the prey and predator populations oscillate (blue for both prey and predator) over unstable coexistence equilibrium point $E^*$, where the portions are coloured in red solid lines for both $x^*$ and $y^*$. Here we observe that the system \eqref{eq:03} experiences Hopf-bifurcation and becomes stable upto the upper threshold value which is coloured in green solid line for both  $x^*$ and $y^*$. The two equilibrium points $E^*$ and $E_1$ intersect each other and we notice that transcritical bifurcation occurs as red dashed lines and green dashed lines, representing unstable and stable domains for both the populations, respectively. Further, increasing the values of both the parameters $c_1$ and $c_2$, we show that $E^*$ becomes an infeasible domain, which is indicated by black solid line and the system converges to prey only equilibrium point $E_1$ (green dashed lines). Now we consider two different parameters $c_1$ and $c_2$. In Figures (\ref{fig:5}~(a),(b)), we show that the model system \eqref{eq:03} is unstable at the coexistence equilibrium point $E^*$ (where $c_1=0.3$ and $c_2=0.2$). Figures (\ref{fig:5}~(c),(d)) illustrate the phenomenon of asymptotic stability at the coexistence equilibrium point $E^*$ for the parameters $c_1=0.6$ and $c_2=0.5$. In Figures (\ref{fig:5}~(e),(f)), we portray that the model system \eqref{eq:03} is stable at prey only equilibrium point $E_1$ for the parameters $c_1=0.85$ and $c_2=0.75$ and this also suggests that the predator population is on the verge of extinction, while the prey population is flourishing. In Figure \ref{fig:8}, we show the Hopf bifurcation phenomenon in the three dimensional space with respect to the bifurcation parameter while portraying the stability analysis of model system \eqref{eq:2} at the coexistence equilibrium point $E^*$.

\subsection{Turing Patterns Associated with One Dimensional Diffusive Model System \eqref{eq:333}}
Now we consider the parameter set as shown in the Table \ref{Table:333} of the model system \eqref{eq:333} and investigate the results by numerical simulation. Figures (\ref{fig:330}~(a) and (b)) display the Turing patterns associated with the prey $(x)$ and predator $(y)$ of the dynamical model system \eqref{eq:333} in the spatial domain. Thus via theorem \eqref{thm:ABC}, we expect Turing instability to occur. This is exactly what is observed in the simulation as shown in Figure \ref{fig:330}. In Figure \ref{fig:111}, we illustrate the Turing and Hopf bifurcations with respect to two parameters of all possible combinations of system \eqref{eq:03}. Parameters such as $d_2, d_1, K, c_2, c_1$ are responsible for the Turing and Hopf bifurcations for the system \eqref{eq:03}. We illustrate the Turing regions with first two rows and in the remaining rows we discuss about Turing and Hopf regions, which have been depicted in Figure \ref{fig:111}. Parameters in the $y$-axis of extreme left of the sub figures of each row are same for the $y$-axis of the other sub figures of the same row. Similarly, parameters in the $x$-axis of bottom sub figures of each column is the same for the $x$-axis of the previous sub figures of the same column. When two parameters are varying in two different axes then the remaining parameters are kept as constant. The parameters are given the values $r=5, \alpha=4.5, \theta=0.06, d=0.4, \Delta h=0.02, k=55, K=0.5, c_2=0.024, c_1=0.01,d_1=0.1, d_2=15.2$. Red and blue lines indicate the Turing and Hopf bifurcations w.r.t two parameters among $d_2, d_1, K, c_2, c_1$. In such combinations, we assume for the sake of easy notation such as 'a' for the Turing region, 'b' for the Hopf-Turing region and 'c' for the Hopf region in Figure \ref{fig:111}. Relations between the real part and imaginary part of the eigenvalue $\lambda_\kappa $ with respect to $\kappa~(\mbox{wave number})$ of the system \eqref{eq:03} on varying the bifurcation parameters $K, c_1$ and $c_2$ respectively have been investigated. Red line represents the real part of the $\lambda_\kappa$ and blue line is for imaginary part of $\lambda_\kappa$. In Figure \ref{fig:10}~(a) the bifurcation parameter \cite{BG15} considered is $K$ and the parametric values of $K$ are taken as follows: (i) $K=0.4$, (ii) $K=0.4852686$ \mbox{and} (iii) $K=0.5$. In Figure \ref{fig:10}~(b) the bifurcation parameter considered is $c_1$ and the parametric values of $c_1$ are taken as follows: (i) $c_1=0.1$, (ii) $c_1=0.47657$ \mbox{and} (iii) $c_1=0.7$. Lastly in the Figure \ref{fig:10}~(c) the bifurcation parameter considered is $c_2$ and the parametric values of $c_2$ are taken as follows: (i) $c_2=0.2687$, (ii) $c_2=0.668699$ \mbox{and} (iii) $c_2=0.8687$. Other parameter sets are given in Table \ref{Table:1}.

\subsection{Spatio-temporal Patterns in two Spatial Dimensions}
Next we consider the evolution of spatio-temporal patterns in which arise in the two spatial dimensions $X$ and $Y$ for \eqref{eq:03} as discussed in \ref{S4} and \ref{S5}. In Figure \ref{fig:222}, the numerical solutions of the predator-prey system \eqref{eq:03} are plotted. The initial data are small spatial perturbations of the stationary solutions $x^*$ and $y^*$ of the spatially homogeneous system. On varying the diffusion parameters $D_1$ and $D_2$ from $0.01$ to $20$ simultaneously, one is led to the four basic one-dimensional dynamics such as chaos, intermittent chaos, smooth oscillatory state and stationary behaviour covering almost all of the domain. One can clearly observe from the first panel of the Figure \ref{fig:222}~(a), a chaotic state of the system \eqref{eq:03} which is portrayed along with the associated patterns corresponding to the diffusion parameters $D_1=0.01=D_2$ in the second and third panels respectively. In the first panel of the Figure \ref{fig:222}~(b), intermittent chaotic stage of the system \eqref{eq:03} is depicted along with the pattern formations of the system corresponding to the diffusion parameters $D_1=0.1=D_2$ in the second and third panel respectively. If we follow the same process we obtain  smooth oscillatory state of the system \eqref{eq:03} in the first panel of the Figure \ref{fig:222}~(c) along with the pattern formations of the system corresponding to the diffusion parameters $D_1=01=D_2$ in the second and third panel respectively.  In Figure \ref{fig:222}~(d), stable steady state solution of the system \eqref{eq:03} is illustrated in the first panel along with the pattern formations of the system associated with the diffusion parameters $D_1=20=D_2$ in the second and third panels respectively. In the first panel of Figure \ref{fig:444}, we show the time-series solutions of the system \eqref{eq:03} at the mesh grid (50,50). We observe the change of the dynamical behaviours of the system \eqref{eq:03} such as chaos, smooth oscillatory state and steady state with changing the bifurcation parameter as $c_2=0.55$, $c_2=0.6$ and $c_2=0.75$, respectively. We also illustrate the pattern formations of the prey $(x)$ and predator $(y)$ (in $100\times 100$ $XY$-domain) corresponding to each stage of the transition from chaotic to stable phase of the system \eqref{eq:03} in the 2nd, and 3rd panels of the Figure \ref{fig:444}. \\

It is a well known fact that the Turing pattern formations near the bifurcation threshold parameter are described by amplitude equations \cite{GP17}. However if the bifurcation parameter is far away from the bifurcation threshold parameter, amplitude equations fail to explain the pattern phenomena. In the first panel of Figures (\ref{fig:11}~(a),(b)), the bifurcation parameter  is $K = 0.47$ and the hexagonal structures arise in the whole area. Significantly we notice the condition that when $K$ is very close to $K_T=0.4852686$, the hexagonal patterns develop at a very slow rate. This phenomenon is called critical slow down. These patterns are called as \emph{spots}. In the second panel of Figures (\ref{fig:11}~(a),(b)), with the decreasing of bifurcation parameter to $K = 0.42$ some stripes arise along with spots and these patterns are termed as \emph{spots-stripes}. Further, on decreasing $K$ to $K = 0.4$, there is certain growth of stripes patterns from spots-stripes formations in the model system \eqref{eq:03}. Indeed the third panel of Figures (\ref{fig:11}~(a),(b)) indicates that spots are decreasing and stripes patterns emerge. Since for the last two cases, the bifurcation parameter $K$ is far away from the bifurcation threshold parameter $K_T$, numerical simulations cannot be derived from the theoretical analysis. From Figures (\ref{fig:11}~(a), (b)), we observe that the model system \eqref{eq:03} shows change of transition of structural formations of random perturbations from spots $\rightarrow$ spots-stripes $\rightarrow$ stripes. The parameter set is given in Table \ref{Table:1}. In Figures (\ref{fig:12}~(a), (b), (c)), we notice that the random perturbations make the spots-stripes formation and the process ends with making spots patterns. In Figures (\ref{fig:13}~(a), (b), (c)), we notice that the random perturbations manage to form spots-stripes structures and end with spots-stripes patterns. The random perturbations of Figures (\ref{fig:14}~(a), (b), (c)), starts with spots-stripes formations, and as the processes go on, spots patterns are decreasing and end with stripes patterns. We consider the the step size of the space for the Figures(\ref{fig:11}, \ref{fig:12}, \ref{fig:13} and \ref{fig:14}) as $ \Delta h=0.4$ and other parameters are given in Table \ref{Table:1} \\

Now we analyse the Turing patterns to interpret the dynamical model system \eqref{eq:03} in the spatial domain. We consider the parameter set as shown in Table \ref{Table:1} of the model system \eqref{eq:03} and investigate the results by numerical simulation with the help of parameters as follows:
\begin{itemize}
	\item [$\bullet$] The system size is $100 \times 100$ square space units. 
	\item [$\bullet$] Time step size, $\Delta t=0.09$.
	\item [$\bullet$] Initial conditions:
	\begin{equation*}
	\begin{split}
	\raggedleft
	x(X,Y,0) &=x^*-\theta_1 \times \psi_1,\\
	y(X,Y,0) &=y^*-\theta_2 \times \psi_2, 
	\end{split}
	\end{equation*}
	where $\theta_1 = 5 \times 10^{-9}$, $\theta_2 = 4 \times 10^{-9}$ and $\psi_i(i=1,2)$ are random numbers which are uniformly distributed in $(-1.5,1.5)$.\\
\end{itemize}

In this section we consider the parameter set as shown in the Table \ref{Table:2} of the model system \eqref{eq:03} and investigate the results by numerical simulation with respect to their boundary conditions with the help of the parameters as follows:
\begin{itemize}
	\item [$\bullet$] The system size is $100 \times 100$ square space units. 
	\item [$\bullet$] Step size of the space, $\Delta h=0.15$.
	\item [$\bullet$] Time step size, $\Delta t=0.05$.
\end{itemize}

From Figures (\ref{fig:15}) and (\ref{fig:19}), the spatial patterns of the model system \eqref{eq:03} are illustrated w.r.t. different boundary conditions, which are listed in Table \ref{Table:2}. In Figure(\ref{fig:15}), we observe that at the step size of the space, $\Delta h=0.15$ with $10$ iterations, the pattern starts forming for the model system \eqref{eq:03}. Here we investigate different pattern formations of the model system \eqref{eq:03} on increasing the iterations. At iterations $100$, we observe that the prey and predator do not change that much in the further iterations. Not much changes are visible in further iterations. We repeat the same process as above for the Figure(\ref{fig:19}) with same number of iterations. We notice that the pattern formation is looking bigger at final iteration as compared to the initial iteration. So, varying time and boundary condition while keeping the other parameter values fixed, we see changes in the dynamics of the system \eqref{eq:03}, making the patterns in the form of spirals, which suggests that the system is transmitting from a chaotic state to a stable state.

\section{Conclusions}\label{S7}

In this paper, we have carried out an extensive study of the modified Rosenzweig-MacArthur predator-prey model by incorporating the effects of habitat complexity on the carrying capacity and fear effect of the functional response of the prey and predator. Our investigations confirm that habitat complexity plays a major role in the reduction of predation rates due to decreased encounter rates between predator and prey as confirmed by several experiments \cite{JB14}. It helps to reduce the predator-prey interaction as well as the available space for the interaction of species \cite{JB14}. Due to the fear effect on prey and habitat complexity, we notice an anti-predator behaviour among the prey which may enhance the survival rate of prey. The stability properties have been elaborately described for the model \eqref{eq:2}. We have analyzed the boundary equilibrium points and investigated the stability properties of the coexistence equilibrium point of the model system \eqref{eq:2}. Further, we have carried out a global stability analysis of the model \eqref{eq:2} and focused our attention to describe Hopf bifurcation, limit cycle and transcritical bifurcation . Through analytical and numerical investigations, we have further shown the cost of habitat complexity $c_1,c_2$ and fear effect $K$ can prevent the occurrence of limit cycle oscillations and increase the stability of the system \eqref{eq:2}. \\

At coexistence equilibrium, our analysis suggests that there is no impact of the level of fear effect $K$ on prey population and on the other hand on increasing  $K$, we notice that there is a declination of predator population. However this does not suggest that the predator population goes extinct due to the level of fear effect \cite{SS18,WZZ16}. Because of habitat complexity, foraging intention and encounter rates between predator-prey interaction will decrease. The available space for interaction of species also gets minimized. Reduced foraging intention of predator due to increasing habitat complexity gradually introduces fearlessness in prey. So the behaviour of prey can change drastically in the presence of predator. We  observe that prey equilibrium density is independent of $c_1$, and so habitat complexity $c_1$ has no impact on the prey density. However it has a certain impact on predator density. We have discussed elaborately based on the numerical simulation of spatio-temporal dynamics on pattern formations of system \eqref{eq:03}. From the numerical investigation it is clear that the changes in the stability domain of the model system are certainly visible after incorporating the diffusive parameters in the model system. We observe in detail about the diffusive pattern formations of predator-prey interaction in spatial domain and investigate the effect of dynamics of the diffusive model system \eqref{eq:03}. We have investigated the modest changes of self diffusion parameters $D_1$ and $D_2$ of the model system \eqref{eq:03} in one dimension which can lead to dramatic changes in the qualitative dynamics of solutions. We observe that a remarkable change in the dynamics of the system from chaotic to regular behaviour occurs, when the diffusivity parameters are increased to a larger value. We have identified Turing pattern formations of the spatially diffusive predator-prey model system  and carried out a stability analysis by amplitude equations with the help of standard multiple-scale analysis and identified three types of Turing patterns i.e. spots, spots-stripes and stripes near the bifurcation threshold parameter where the model system \eqref{eq:03} shows complex dynamical pattern formations. A slight perturbation on the bifurcation parameter makes changes in the dynamics of system \eqref{eq:03} and we described this phenomenon by a weak nonlinear analysis. Numerical simulation suggests that the level of fear effect $K$ has an important role to play in the spatial domain of predator-prey model system and we displayed three different types of pattern formations (spots patterns, spots-stripes patterns and stripe patterns). \\

From the ecological point of view, spots patterns indicate that prey ($x$) belongs to isolated domains with low density and rest of the domains are considered as high density. So this region is safer than other regions for prey ($x$). This significance remains the same for the other species too of the system. Overall we have proposed a predator-prey model by incorporating the effects of habitat complexity on the carrying capacity and fear effect of prey and predator's functional response and have shown that fear effect of prey makes an anti-predator behaviour including habitat complexity which helps prey to survive in ecosystem. The study of spatio temporal dynamics evolves around the pattern formations of diffusive model system in the spatial domain. Pattern formations enrich the dynamics of reaction-diffusion model system on predator-prey populations and provide a certain impact to ecologists.  \\

\section*{Funding: Research of M.L. is supported by DST-SERB by a Distinguished Fellowship and research of Debaldev Jana is supported by SERB, Govt. of India fund (MTR/2019/000788).}

\section*{Conflict of Interest: The authors declare that they have no conflict of interest.}

\newpage

\begin{table}[H]
	\caption{Parameter sets for the pattern structures of the diffusive model \eqref{eq:333}.}\label{Table:333}
	{\scriptsize$$\begin{array}{|c|c|c|c|}
		\hline
		\mbox{Parameter set} & \mbox{Diffusion Parameters} & \mbox{Coexistence Equilibrium Points} & \mbox{Figure}\\
		\hline
		r=5, k=55, \alpha=5 & & &  \\
		h=0.02, K=0.5, \theta=0.06 &  D_1=0.001, D_2=0.002 & (0.33,0.84) & \ref{fig:330} \\
		d=0.08, c_1=0.7, c_2=0.2 &  &  &    \\ 
		\hline
		\end{array}
		$$}
\end{table}

\begin{table}[H]
	\caption{Parameter sets for the pattern structures of the diffusive model \eqref{eq:03}.}\label{Table:1}
	{\scriptsize$$\begin{array}{|c|c|c|c|c|c|}
		\hline
		\mbox{Parameter set} & \mbox{Initial Condition} & \mbox{Diffusion Parameters} & \mbox{Coexistence Equilibrium Points} & \mbox{Figure}\\
		\hline
		r=5, k=55, \alpha=5 & & & & \\
		h=0.02, \theta=0.06 & (0.5,0.4) & D_1=0.1, D_2=15.2 & (0.33,0.84) & \ref{fig:10}~(a) \\
		d=0.08, c_1=0.7, c_2=0.2 &  &  &   & \\ 
		\hline
		r=35, k=55, \alpha=5  & & & & \\
		h=0.02, \theta=0.06 & (0.5,0.4) & D_1=1.1, D_2=0.2 & (0.34,3.50) & \ref{fig:10}~(b) \\
		d=0.08, c_1=0.7, c_2=0.2 &  &  &   & \\ 
		\hline
		r=0.308, k=90, \alpha=5 & & & & \\
		h=0.1, \theta=0.7 & (0.5,0.4) & D_1=0.02, D_2=0.01 & (34.11,5.99) & \ref{fig:10}~(c) \\
		d=0.08, c_1=0.7, c_2=0.2 &  &  &   & \\ 
		\hline
		r=5, k=55, \alpha=5 & & & & \\
		H=0.02, \theta=0.06 & (0.5,0.4) & D_1=0.1, D_2=15.2 & (0.33,0.84) & \ref{fig:11} \\
		d=0.08, c_1=0.7, c_2=0.2 & & & & \\ 
		\hline
		r=5, k=55, \alpha=5 & & & & \\
		K=0.47, H=0.02, \theta=0.06 & (0.5,0.4) & D_1=0.1, D_2=15.2 & (0.33,0.84) & \ref{fig:12} \\
		d=0.08, c_1=0.7, c_2=0.2 & & & & \\ 
		\hline
		r=5, k=55, \alpha=5 & & & & \\
		K=0.42, H=0.02, \theta=0.06 & (0.5,0.4) & D_1=0.1, D_2=15.2 & (0.33,0.84) & \ref{fig:13} \\
		d=0.08, c_1=0.7, c_2=0.2 & & & & \\ 
		\hline
		r=5, k=55, \alpha=5 & & & & \\
		K=0.4, H=0.02, \theta=0.06 & (0.5,0.4) & D_1=0.1, D_2=15.2 & (0.33,0.84) & \ref{fig:14} \\
		d=0.08, c_1=0.7, c_2=0.2 & & & & \\ 
		\hline
		\end{array}
		$$}
\end{table}

\begin{table}[H]
	\caption{Parameter sets for the pattern structures of the diffusive model \eqref{eq:03} with boundary conditions.}\label{Table:2}
	{\scriptsize$$\begin{array}{|c|c|c|c|c|}
		\hline
		\mbox{Parameter set} & \mbox{Boundary condition} & \mbox{Initial Condition} & \mbox{Figure}\\
		\hline
		r=5, k=55, \alpha=5 & & & \\
		h=0.02, K=0.5, \theta=0.06 & 0.5\times (\tan(X-40)^2-\cos(Y+40)^2)<50 & & \\
		d=0.08, c_1=0.7, c_2=0.2 & 0.4\times (\cos((X-47)^2-(Y+47)^2))<50 & (0.5,0.4) & \ref{fig:15} \\
		D_1=0.1, D_2=0.1 &  & & \\
		\hline
		r=5, k=55, \alpha=5 & & & \\
		h=0.02, K=0.5, \theta=0.06 & 0.5\times (\tan((X-25).(Y-25)))>0.5 & & \\
		d=0.08, c_1=0.7, c_2=0.2 & 0.4\times \log(\tan(X-25)^2-\sec(Y-25)^2)<25 & (0.5,0.4) & \ref{fig:19} \\
		D_1=0.1, D_2=0.1 &  & & \\
		\hline
		\end{array}
		$$}
\end{table}

\begin{figure}[ht!]\centering
	{\includegraphics[width=7in, height=5in]{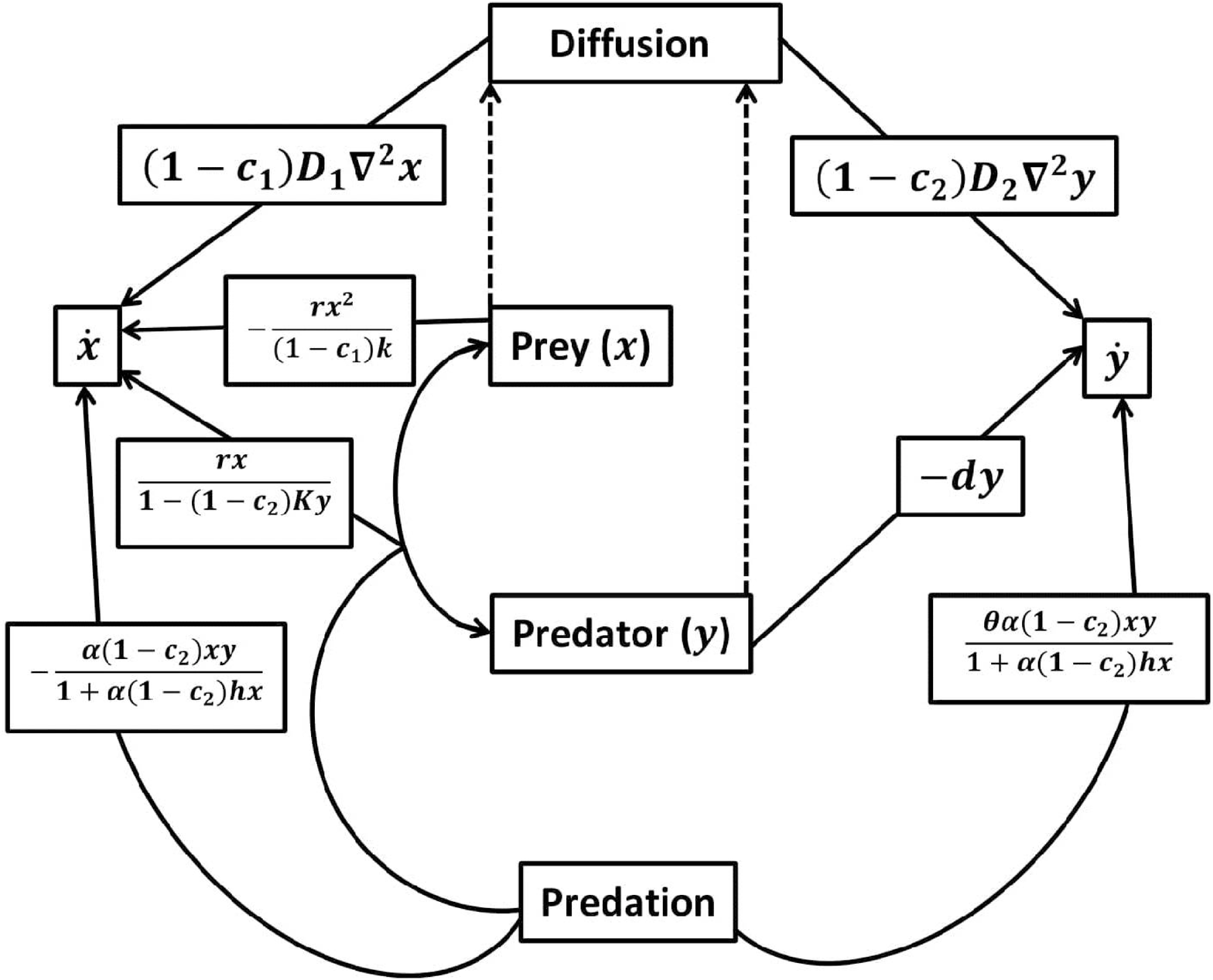}}  
	\caption{Schematic diagram of prey-predator evolution.}\label{fig:100}
\end{figure}

\begin{figure}[ht!]\centering
	{\includegraphics[width=7in, height=5in]{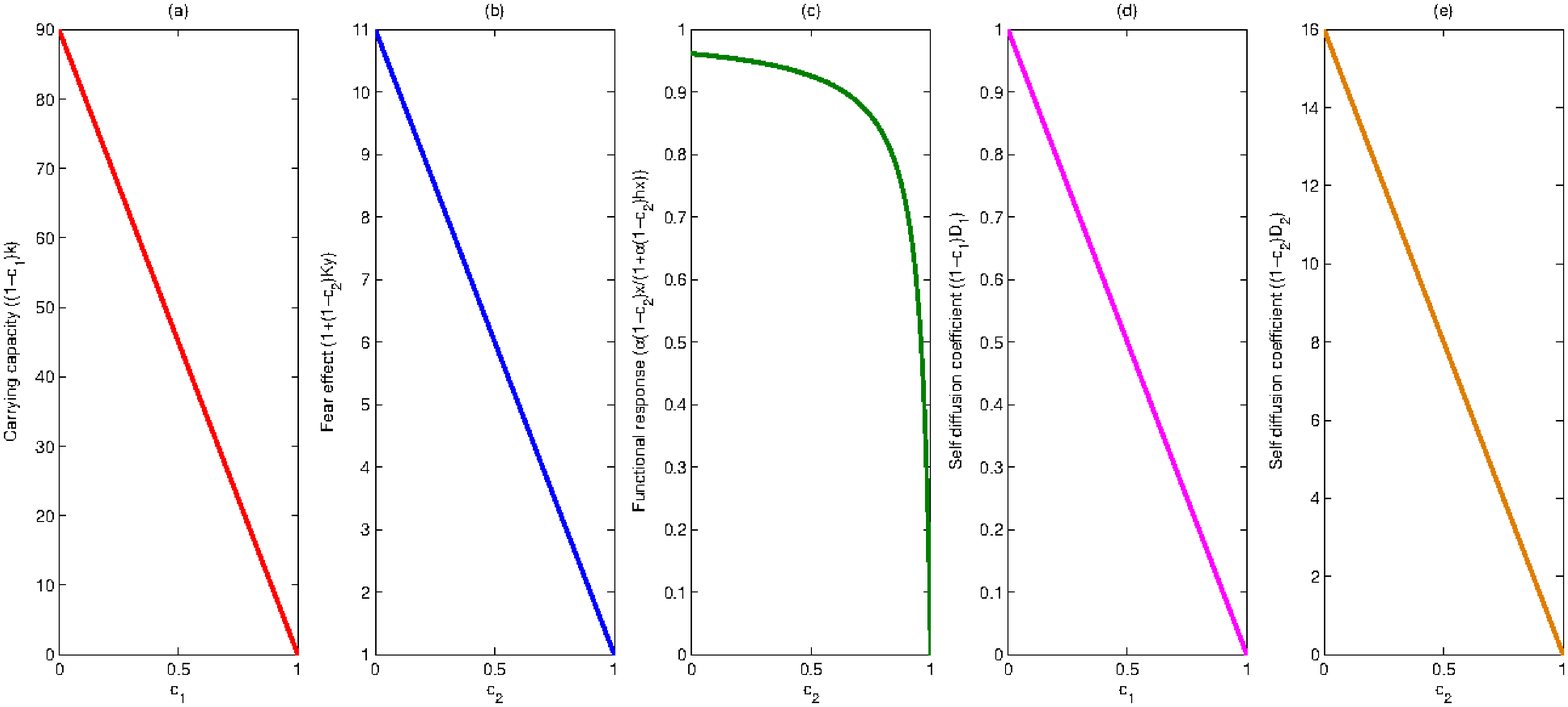}}  
	\caption{Quantitative changes of carrying capacity (figure a) w.r.t $c_1$, fear effect (figure b), functional response (figure c) w.r.t $c_2$ and self diffusion coefficients (figure d and figure e) w.r.t $c_1$ and $c_2$ respectively. Parameters are given in the text.}\label{fig:1}
\end{figure}

\begin{figure}[ht!]\centering
	{\includegraphics[width=7.5in, height=7.5in]{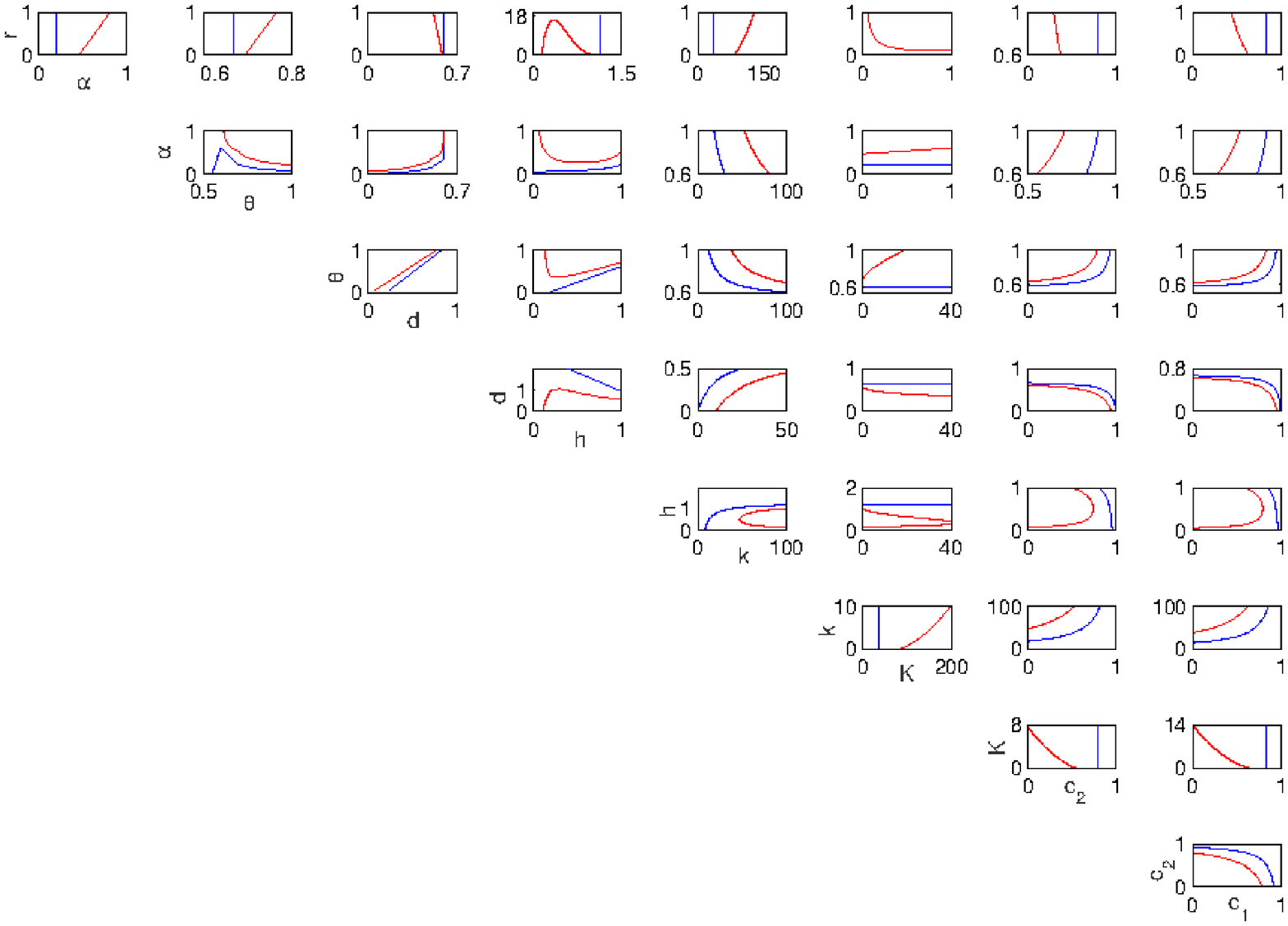}}  
	\caption{Two parametric bifurcation or stability domain diagrams in two parameter phase-planes of system (\ref{eq:2}) w.r.t all combinations of system parameters. Parameters in the $y$-axis of extreme left figures with Red and Blue lines indicate the Hopf and transcritical bifurcations w.r.t respective two parameters. we illustrate the changes between Hopf and Transcritical bifurcations together. The first image of the first row describes the dynamical changes between $(\alpha,r)$, the second image of the first row describes the dynamical changes between $(\theta,r)$, the third one for $(d,r)$, next is for $(h,r)$ …, and the last one for $(c_1,r)$. Similarly, in the second row, we describe the dynamical changes between Hopf and Transcritical bifurcations with respect to the parameters sets such as $(\theta,\alpha)$, $(d,\alpha)$, $(h,\alpha)$ …, and the last one is $(c_1,\alpha)$. This process follows as above and in the eighth row, we depict the changes of Hopf and Transcritical bifurcations between the parameter set $(c_1,c_2)$. Baseline parameters are in the text.}\label{fig:2}
\end{figure}


\begin{figure}[ht!]\centering
	\subfloat[]{\includegraphics[width=6.5in, height=3.5in]{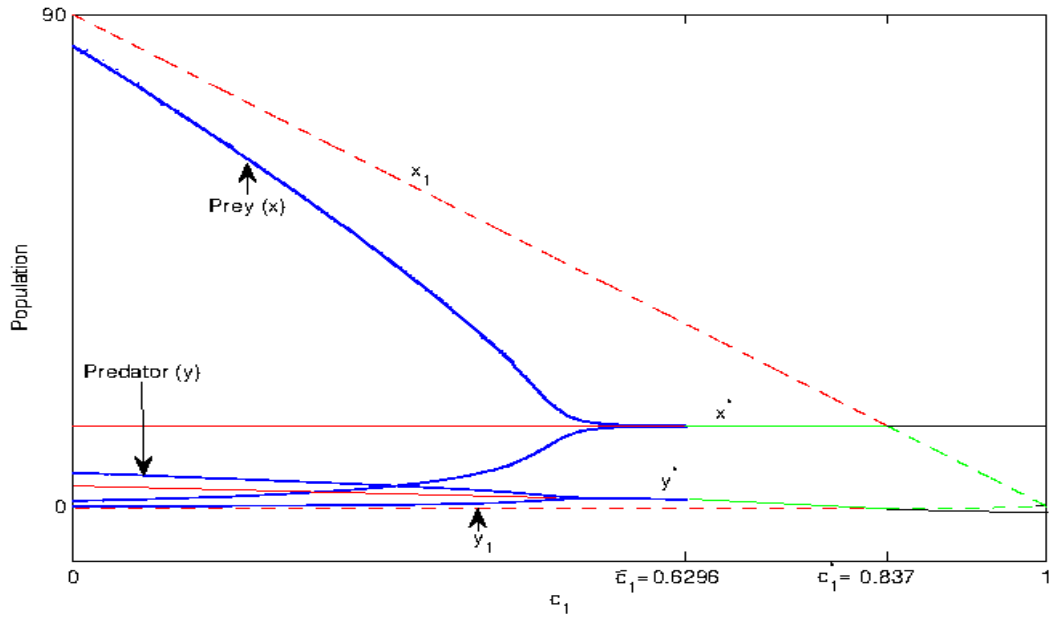}}\\
	\subfloat[]{\includegraphics[width=6.5in, height=3.5in]{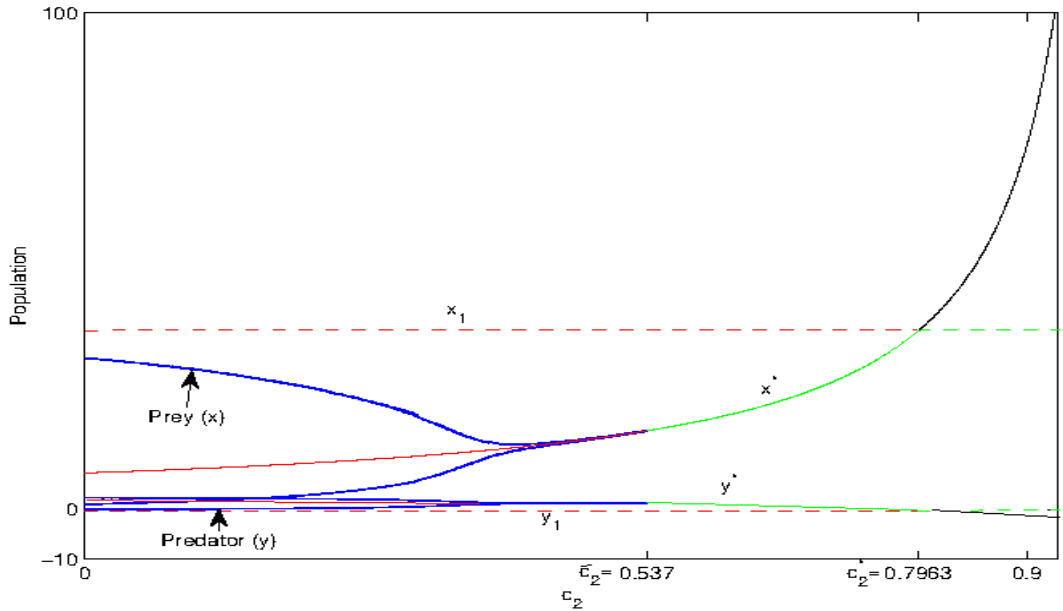}}\\  
	\caption{(a) Bifurcation diagram of system (\ref{eq:2}) with $c_1$ as the bifurcation
		parameter. Solid and dashed lines are respectively for coexistence equilibrium $E^*$ and prey only equilibrium $E_1$. \emph{Black colour} represents infeasibility of $E^*$. \emph{Red} and \emph{green} for unstable and stable status of each equilibrium points. A Hopf bifurcation of $E^*$ occurs at $\bar{c_1}=0.6296$. $E^*$ is unstable focus on $0<\bar{c_1}<0.6296$ and asymptotically stable on $0.6296<{c^*_1}<0.837$. A transcritical bifurcation occurs at ${c^*_1}=0.837$. (b) Bifurcation diagram of system (\ref{eq:2}) with $c_2$ as the bifurcation
		parameter. Solid and dashed lines are respectively for coexistence equilibrium $E^*$ and prey only equilibrium $E_1$. \emph{Black colour} represents infeasibility of $E^*$. \emph{Red} and \emph{green} are for unstable and stable status of each equilibrium points. A Hopf bifurcation of $E^*$ occurs at $\bar{c_2}=0.537$. $E^*$ is unstable focus on $0<\bar{c_2}<0.537$ and asymptotically stable on $0.537<{c^*_2}<0.7963$. A transcritical bifurcation occurs at ${c^*_2}=0.7963$. Other parameters are in the text.}\label{fig:4}
\end{figure}

\begin{figure}[ht!]\centering
	\subfloat[]{\includegraphics[width=2.4in, height=2.5in]{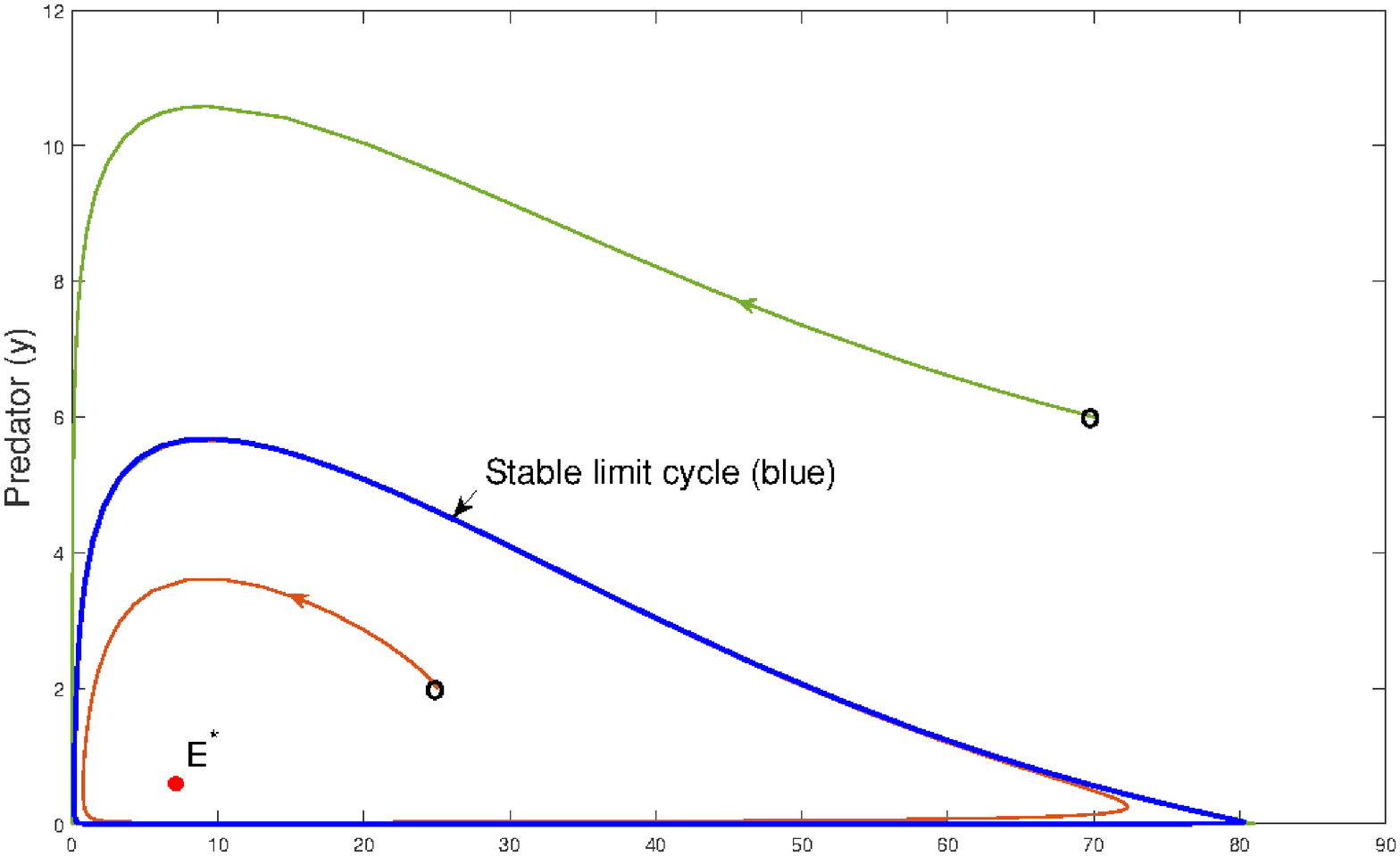}}  
	\subfloat[]{\includegraphics[width=2.4in, height=2.5in]{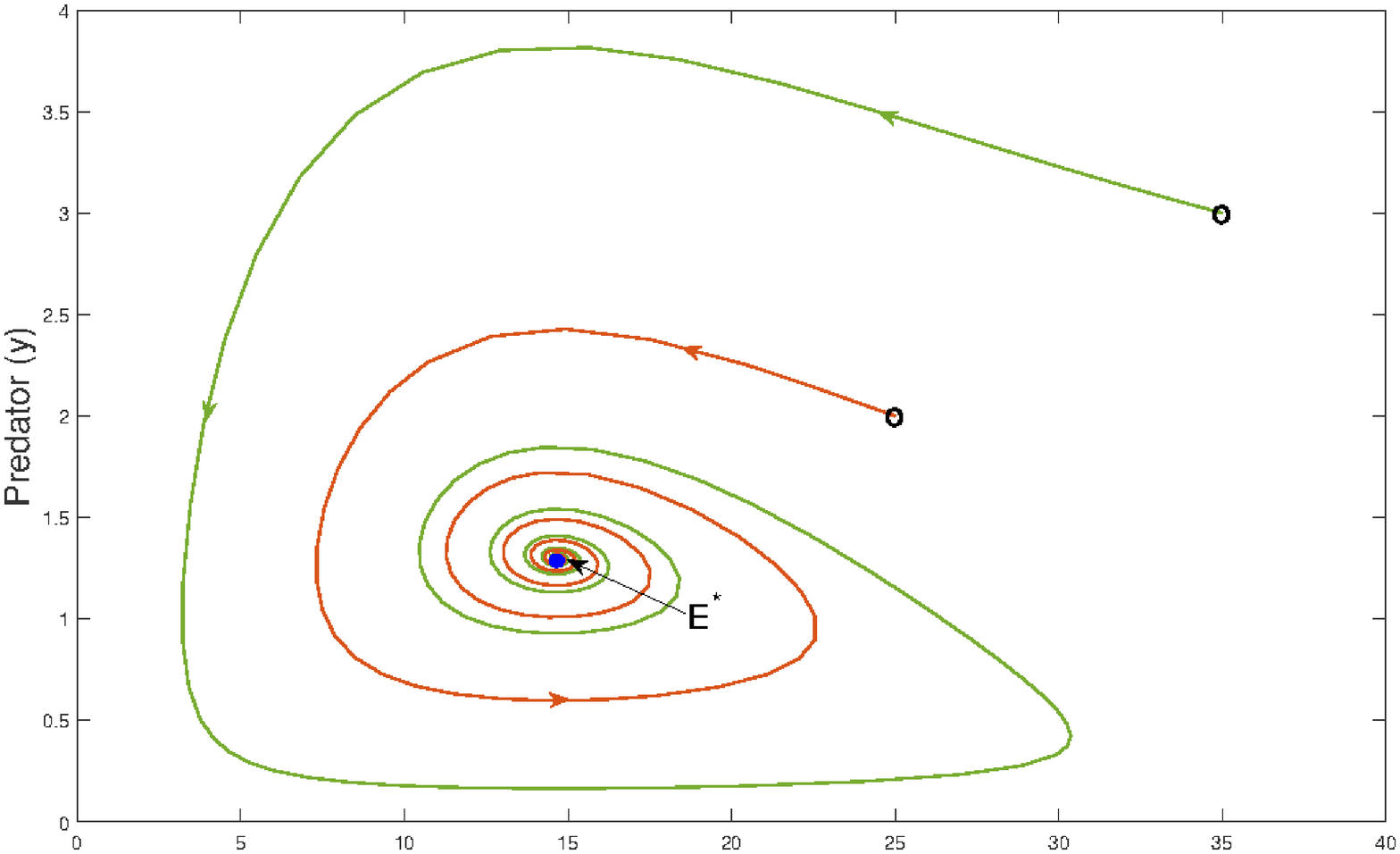}}
	\subfloat[]{\includegraphics[width=2.4in, height=2.5in]{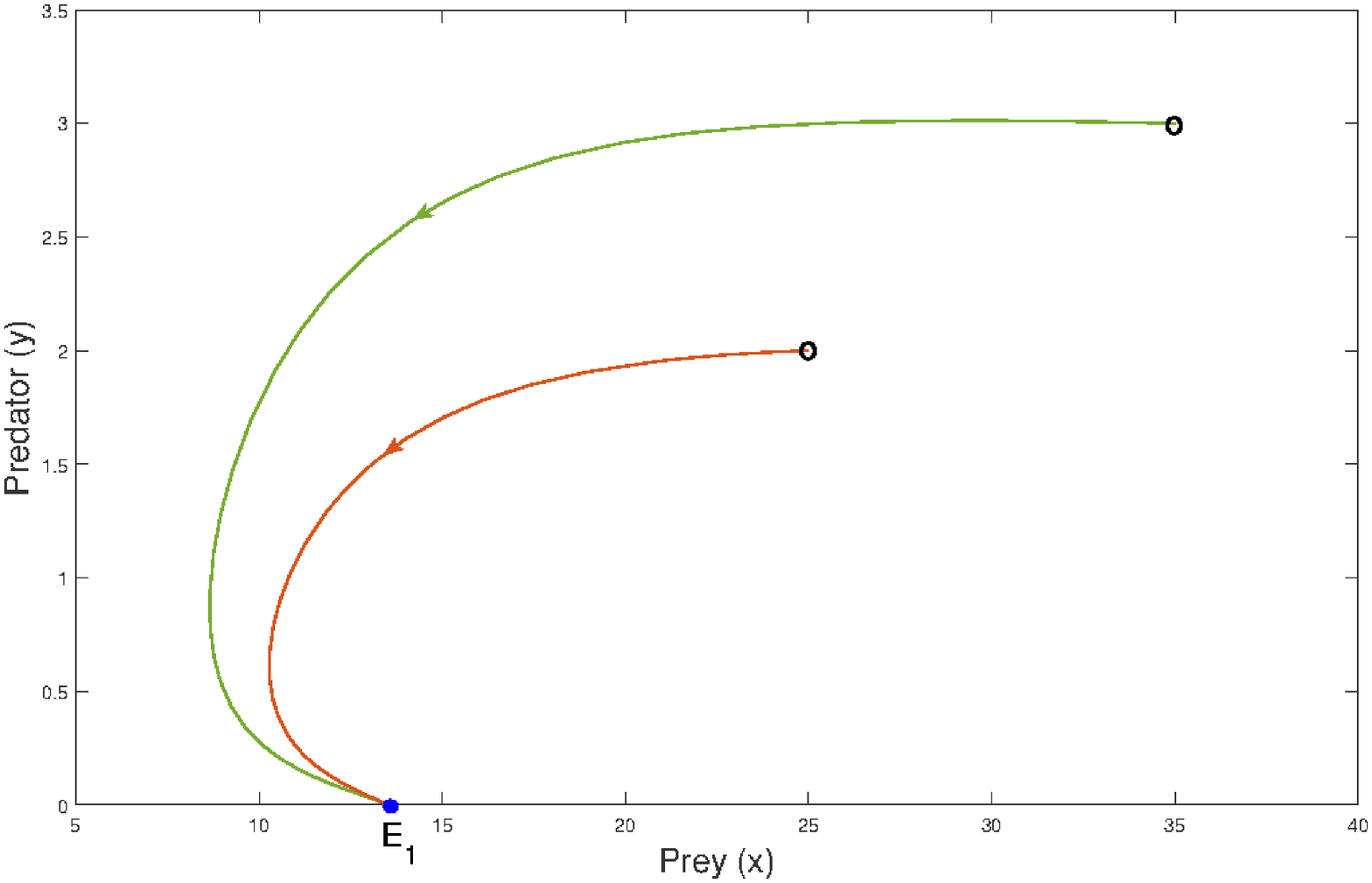}}\\
	\caption{Time series diagram of system (\ref{eq:2}) for different choices of $c_1$ and $c_2$. (a) Coexistence equilibrium point $E^*$ is unstable focus for $c_1=0.3,c_2=0.2$, (b) coexistence equilibrium point $E^*$ is asymptotically stable for $c_1=0.6,c_2=0.5$, (c) prey only equilibrium point $E_1$ is stable for $c_1=0.85,c_2=0.75$. Other parameters are in the text.}\label{fig:5}
\end{figure}



\begin{figure}[ht!]\centering
	{\includegraphics[width=7.0in, height=4.5in]{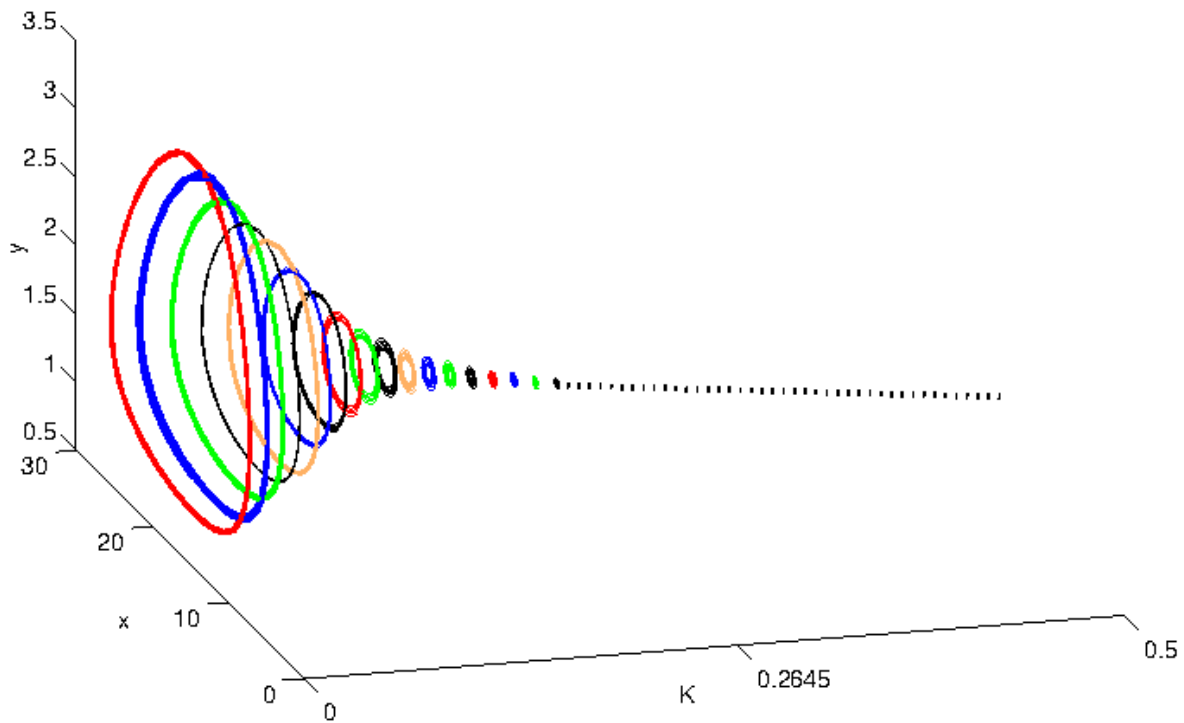}}  \\
	\caption{Hopf-bifurcation diagram of system (\ref{eq:2}) with respect to
		the bifurcation parameter $K$ is drawn in the three-dimensional
		space $(K, x, y)$ when $c_1=0.6,c_2=0.5$. This figure shows that
		the coexistence equilibrium $E^*$ is an unstable focus for $K<0.2645$, where the system converges to a stable limit cycle (depicted by different colour cycles for different values of $K$), stable focus for $K>0.2645$ (depicted by black dotted line) and a Hopf-bifurcation occurs at $K=0.2645$. Other parameters are as in the text.}\label{fig:8}
\end{figure}

\begin{figure}[ht!] 
	\raggedleft
	\centering
	\subfloat[]{\includegraphics[width=7in, height=3in]{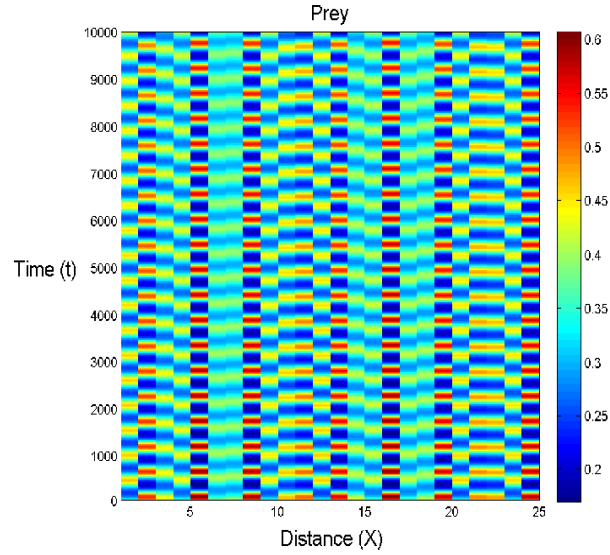}}  \\
	\subfloat[]{\includegraphics[width=7in, height=3in]{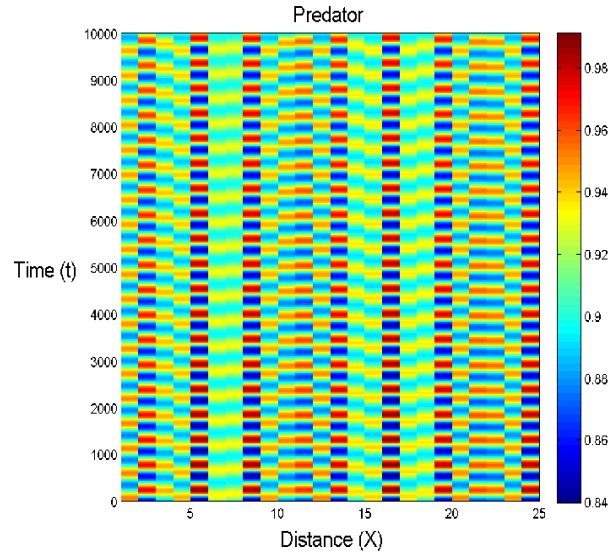}}  \\
	
	\caption{Images of pattern formation in one dimensional diffusion of the time evolution of prey $(x)$ and predator $(y)$ in the $(X-t)$, i.e, spatial $X$-direction and time plane of the diffusive model \eqref{eq:333}. Spatial Turing patterns are observed. The total time has been taken as $t=10,000$. Here we use a perturbation around the initial condition of the spatially homogeneous equilibrium point $(x^*,y^*)=(0.33,0.84)$ by $0.1 \times \cos^2(10X)$. Parameter sets are given in Table \ref{Table:333} .} \label{fig:330}
\end{figure}

\begin{figure}[ht!]\centering
	{\includegraphics[width=7in, height=5.5in]{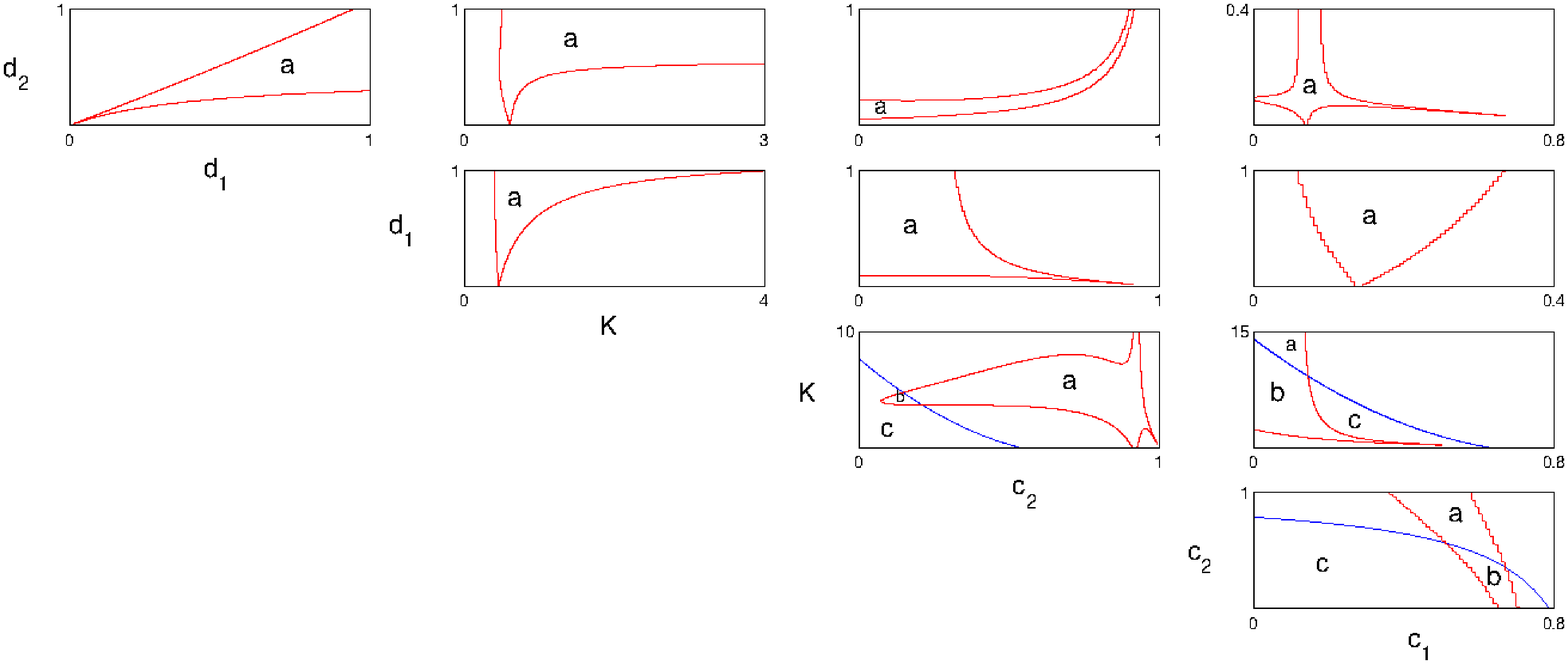}}  
	\caption{Two parametric bifurcation or stability domain diagrams in two parameter phase-planes of system (\ref{eq:2}) w.r.t all combinations of system parameters. Red and blue lines indicate the Turing and Hopf bifurcations w.r.t respective two parameters. Here 'a' denotes the Turing region, 'b' is the Hopf-Turing region and 'c' be the Hopf region. Here, we specifically investigate the changes between Hopf and Turing bifurcations with respect to the parameters sets such as $(d_1,d_2)$, $(K,d_2)$, $(c_2,d_2)$ and $(c_1,d_2)$ in the first row of the Figure. In the second row, we make changes between $(K,d_1)$, $(c_2,d_1)$ and $(c_1,d_1)$. In the third row, we choose the parameters sets such as $(c_2,K)$ and $(c_1,K)$. In the fourth row, we see changes between $(c_1,c_2)$. Parameters set are given in the text.}\label{fig:111}
\end{figure}

\begin{figure}[ht!]
	\centering
	\subfloat[]{\includegraphics[width=6.9in, height=2.55in]{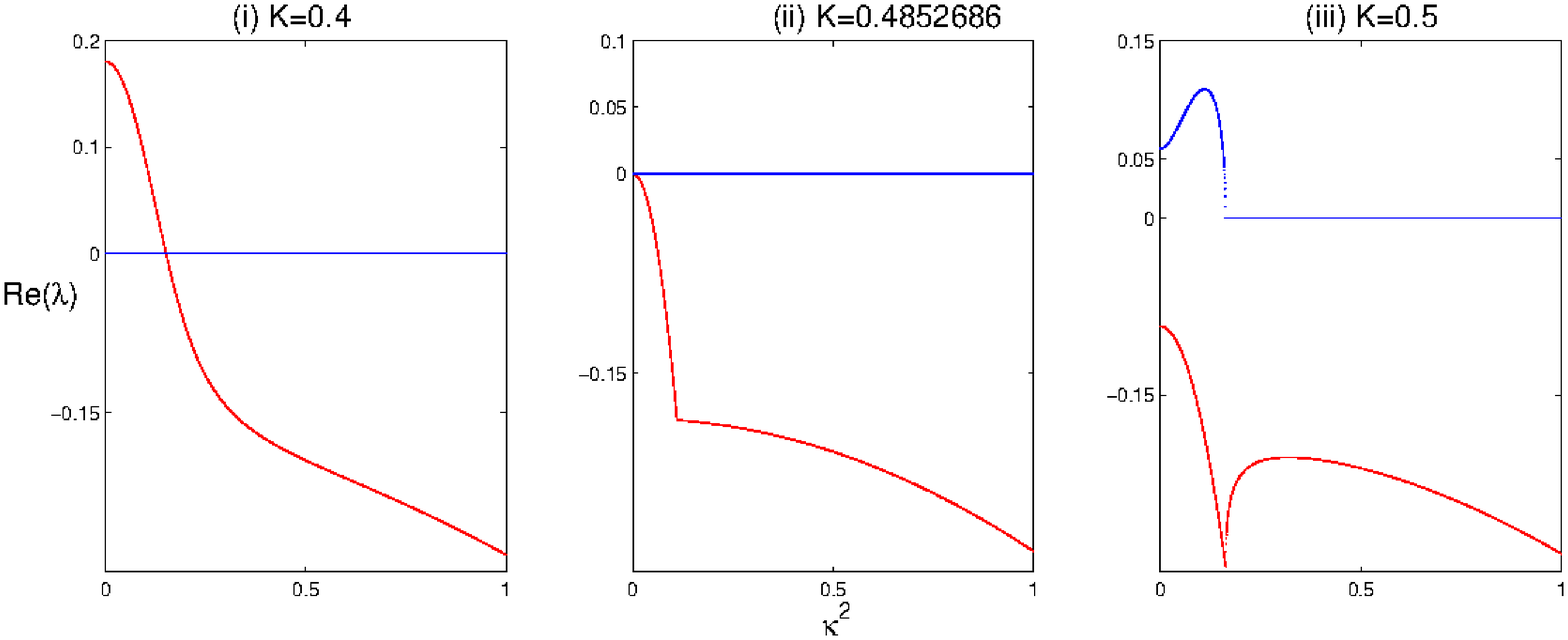}}  \\
	\subfloat[]{\includegraphics[width=6.9in, height=2.55in]{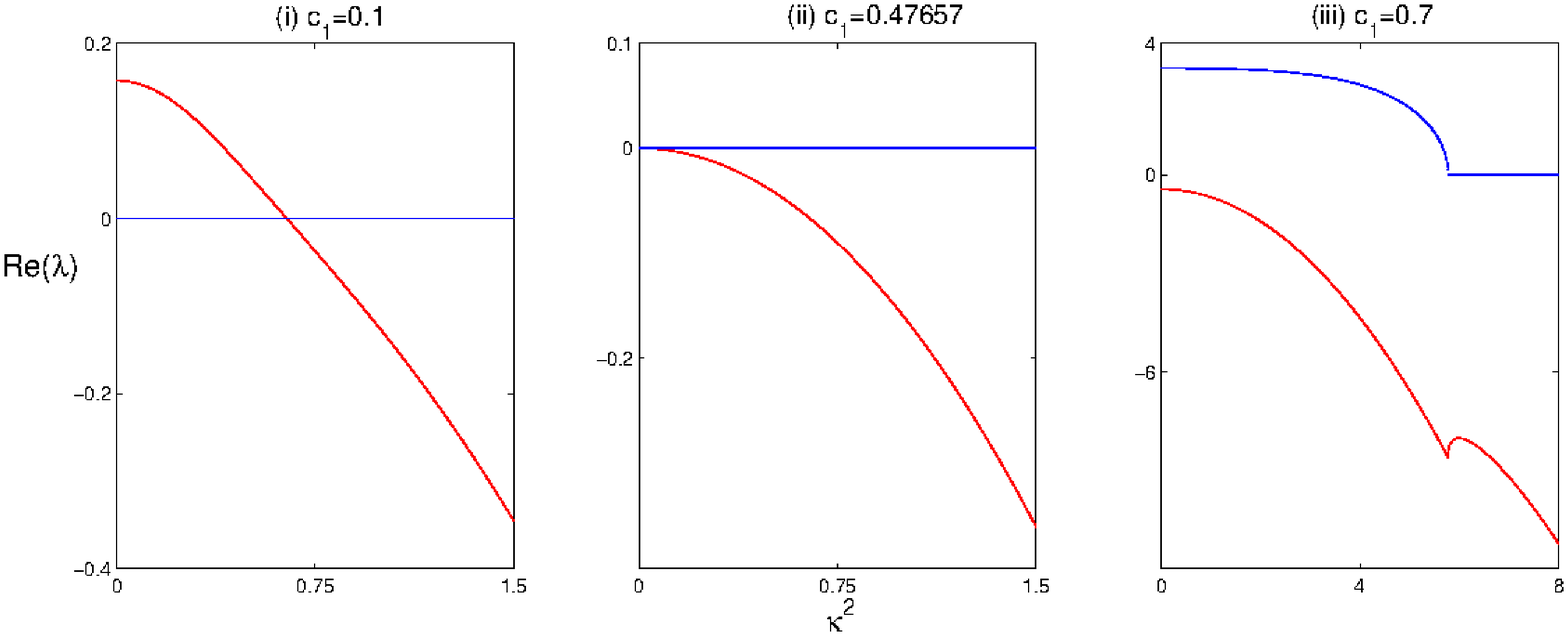}}  \\
	\subfloat[]{\includegraphics[width=6.9in, height=2.55in]{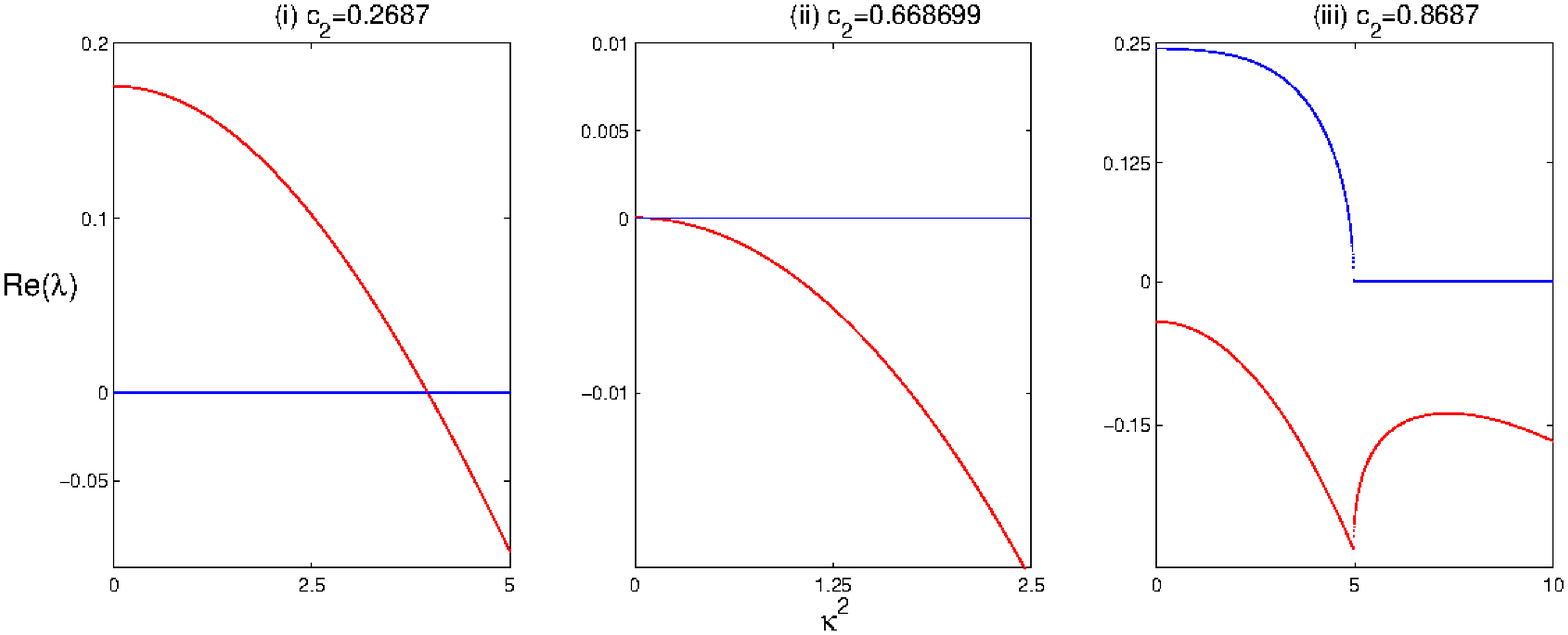}} \\
	\caption{Relation between $Re(\lambda)~\mbox{(real part of the eigen value $ \lambda $)}$ and $\kappa^2~(\mbox{wave number})$ on varying the bifurcation parameters $K$, $c_1$ \mbox{and} $c_2$ respectively. \emph{Red line} represents the real part of the $\lambda$ and \emph{Blue line} is for the imaginary part of the $\lambda$. (a) In the first row (i) $K=0.4$, (ii) $K=0.4852686$ \mbox{and} (iii) $K=0.5$. (b) In the second row (i) $c_1=0.1$, (ii) $c_1=0.47657$ \mbox{and} (iii) $c_1=0.7$. (c) In the third row (i) $c_2=0.2687$, (ii) $c_2=0.668699$ \mbox{and} (iii) $c_2=0.8687$. Other parameter sets are given in Table \ref{Table:1} .  \\}
	\label{fig:10}
\end{figure}

\begin{figure}[ht!]\centering
	\subfloat[]{\includegraphics[width=6.6in, height=1.7in]{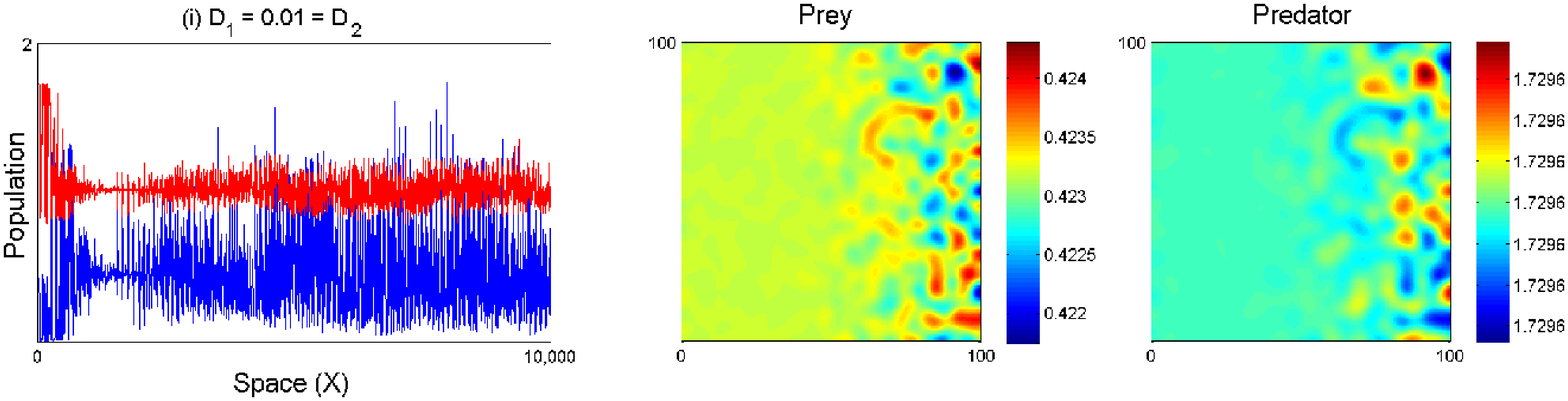}}  \\
	\subfloat[]{\includegraphics[width=6.6in, height=1.7in]{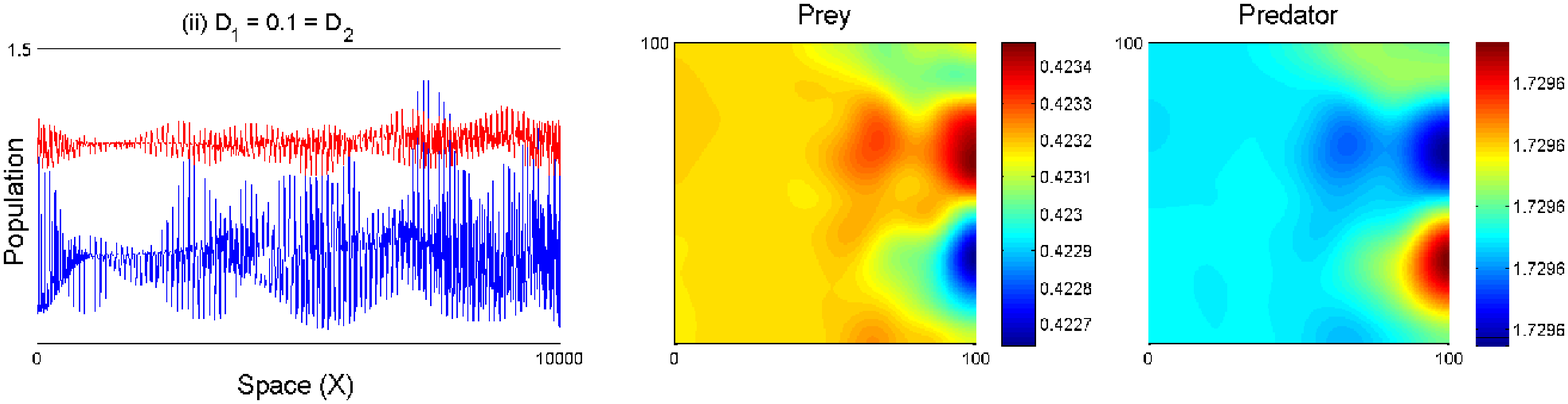}}  \\
	\subfloat[]{\includegraphics[width=6.6in, height=1.7in]{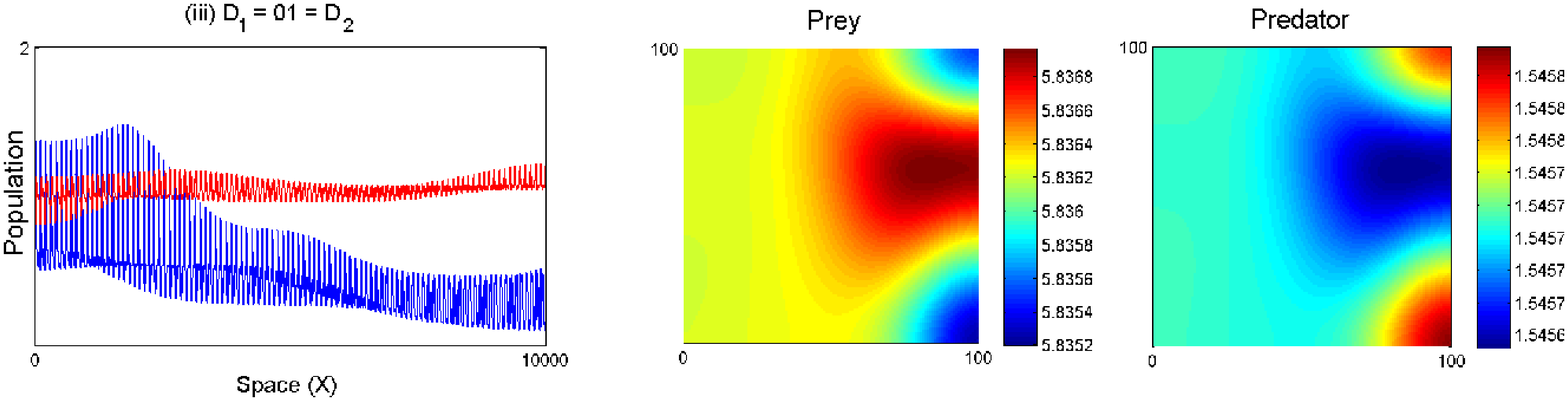}}  \\
	\subfloat[]{\includegraphics[width=6.6in, height=1.7in]{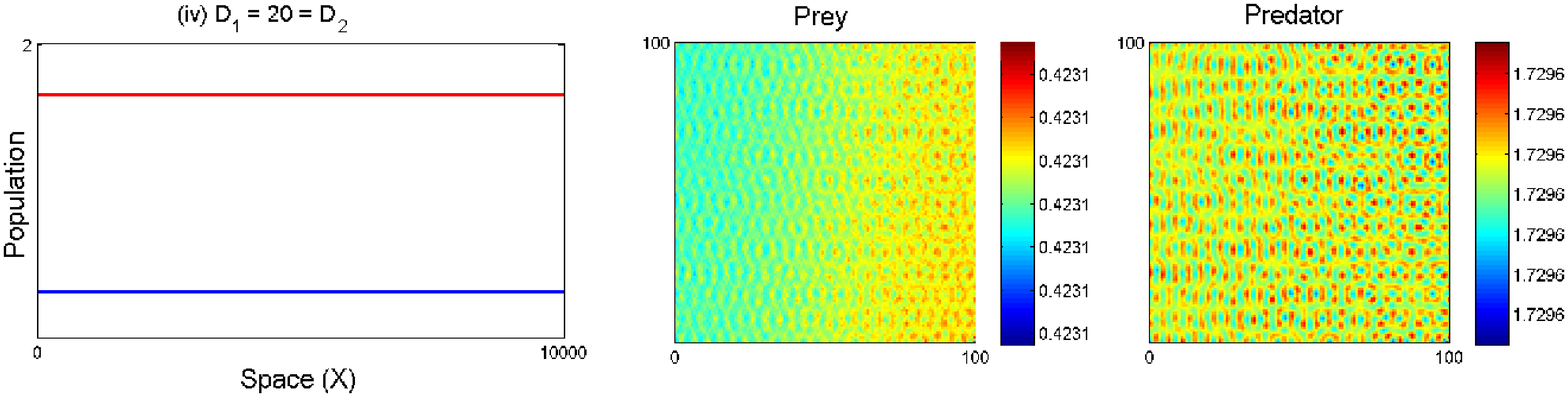}} \\
	\caption{In the 1st panel we portray space series generated at time $T=500$ and for different values of $D_1$ and $D_2$ of the system (\ref{eq:03}) i.e. (a) $D_1=0.01=D_2$, (b) $D_1=0.1=D_2$, (c) $D_1=01=D_2$ and (d) $D_1=20=D_2$ showing the effect of diffusivity constants $D_1$ and $D_2$ on the dynamics of the model of the system (\ref{eq:03}). \emph{Blue line} represents prey ($x$) population and \emph{Red line} is for predator ($y$) population. Parameter values and initial data: $r=4, \alpha=5, \theta=0.06, d=0.08, h=0.02, k=70, K=0.5, c_1=0.7, c_2=0.4$, space-step size ($\Delta h$)=$1$, time-step size ($\Delta t$)=$0.1$, $x^*=0.33$, $y^*=0.84$. In the 2nd and 3rd panels we depict the pattern formations of the prey and predator in the spatial $(X-Y)$ plane.}\label{fig:222}
\end{figure}

\begin{figure}[ht!]\centering
	\subfloat[]{\includegraphics[width=6.6in, height=2.4in]{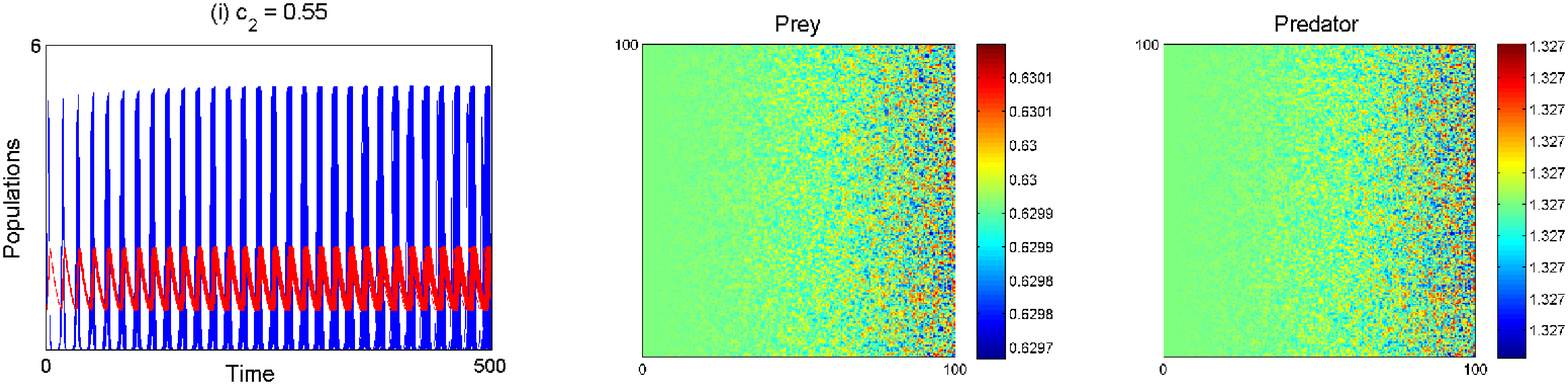}}  \\
	\subfloat[]{\includegraphics[width=6.6in, height=2.4in]{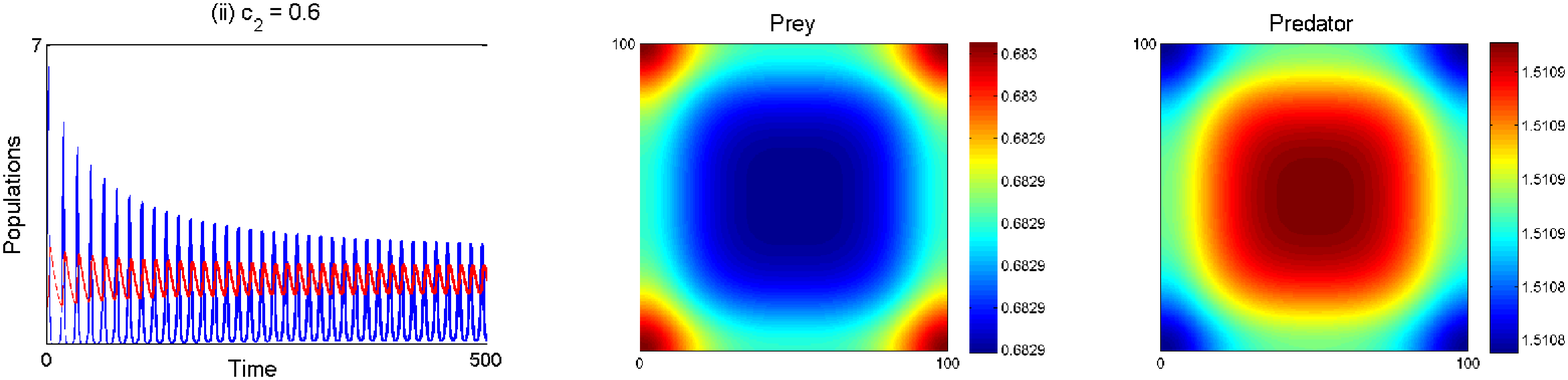}}  \\
	\subfloat[]{\includegraphics[width=6.6in, height=2.4in]{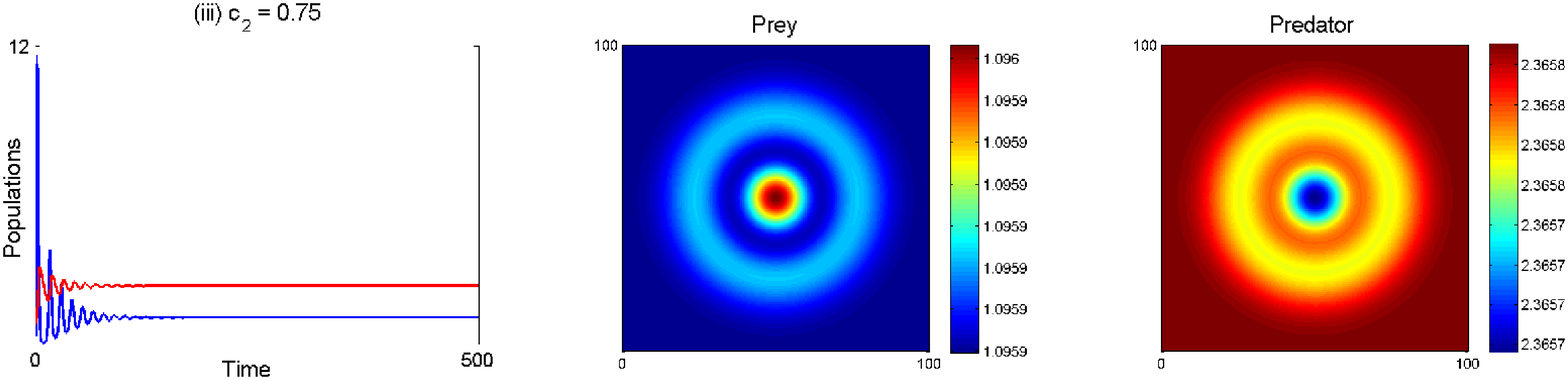}}  \\
	\caption{In the 1st panel we illustrate the time-series solutions of the system \eqref{eq:03} generated at the mesh grid (50,50) for $0<t<T=500$ with diffusion parameters $D_1=D_2=0.0005$ and for different values of bifurcation parameter $\beta$ i.e. (a) $c_2=0.55$, (b) $c_2=0.6$ and (c) $c_2=0.75$ showing the effect of the dynamics of the model of the system (\ref{eq:03}). \emph{Blue line} represents prey population and \emph{Red line} is for predator population. Parameter values and initial data: $r=4, \alpha=5, \theta=0.06, d=0.08, h=0.02, k=70, K=0.5, c_1=0.7$, space-step $(\Delta h)$=$0.3$, time-step $(\Delta t)$=$0.2$, $x^*=0.33$, $y^*=0.84$. In the 2nd and 3rd panels we depict the pattern formations of the prey and predator in the spatial $(X-Y)$-plane.}\label{fig:444}
\end{figure}

\begin{figure}[ht!]
	\centering
	\subfloat[]{\includegraphics[width=6.8in, height=3in]{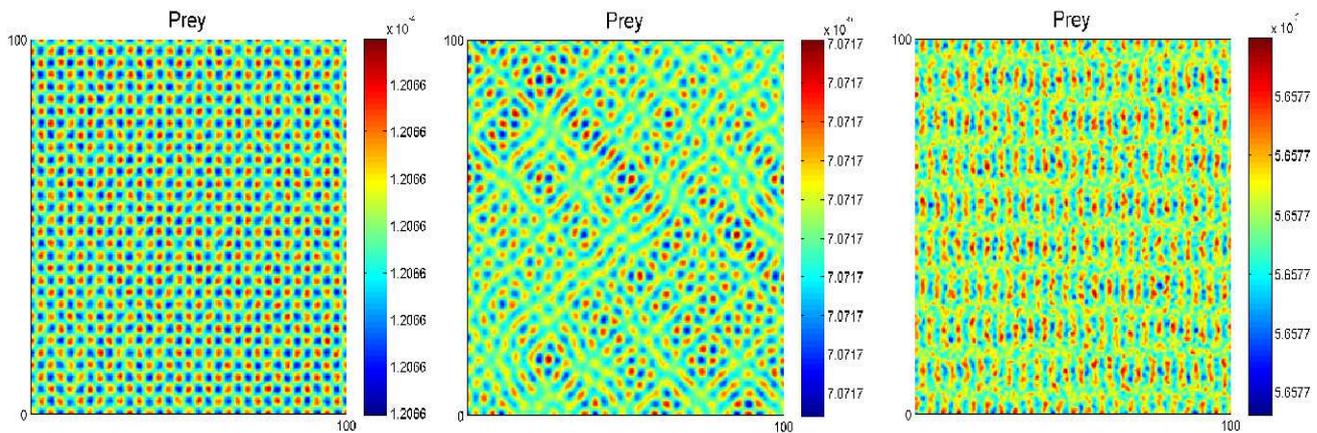}} \\
	\subfloat[]{\includegraphics[width=6.8in, height=3in]{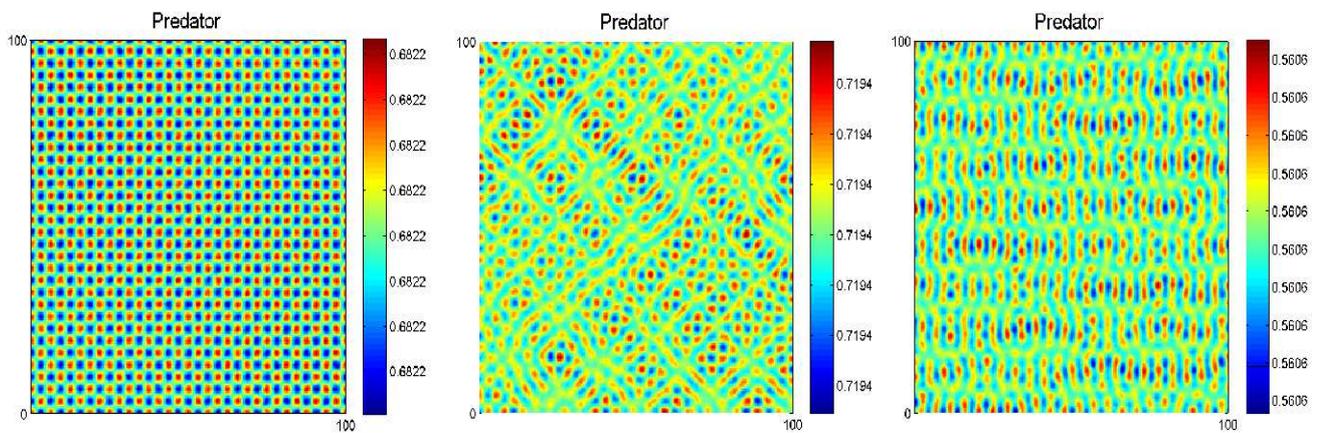}} \\
	\caption{Snapshots of Turing patterns of the time evolution of  prey $(x)$ and predator $(y)$ in the \textit{XY}-plane of the diffusive model \eqref{eq:03}, where the step size of the space, $\Delta h=0.4$. In first panel, time has been taken as $T=8000$ and time has been taken as $T=5000$ in both second and third panels respectively. Parameter sets are given in Table \ref{Table:1} .}
	\label{fig:11}
\end{figure}

\begin{figure}[ht!]
	\centering
	\subfloat[]{\includegraphics[width=7.0in, height=2.6in]{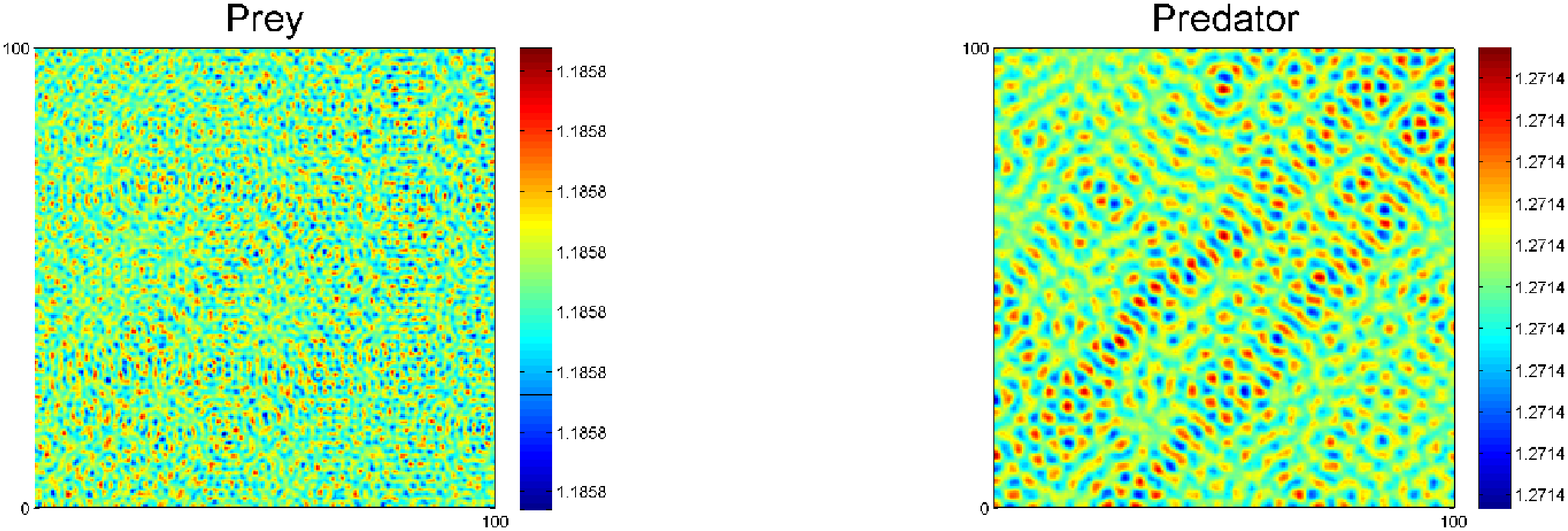}} \\
	\subfloat[]{\includegraphics[width=7.0in, height=2.6in]{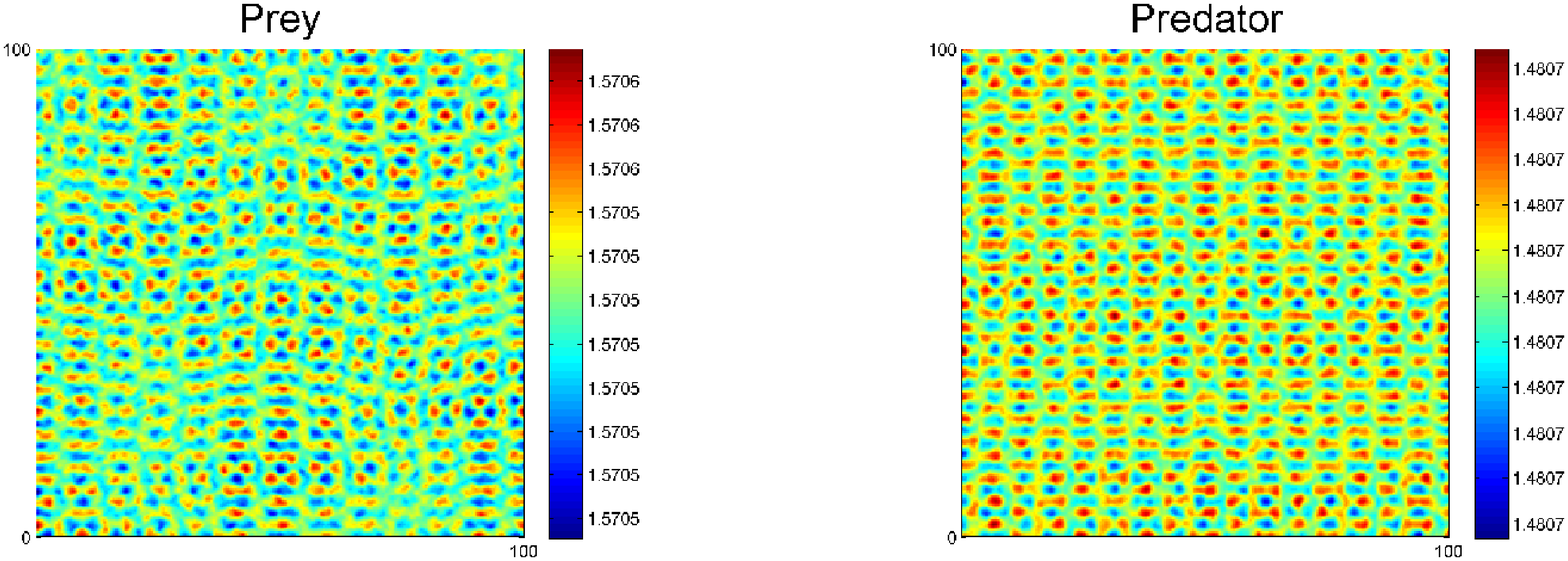}} \\
	\subfloat[]{\includegraphics[width=7.0in, height=2.6in]{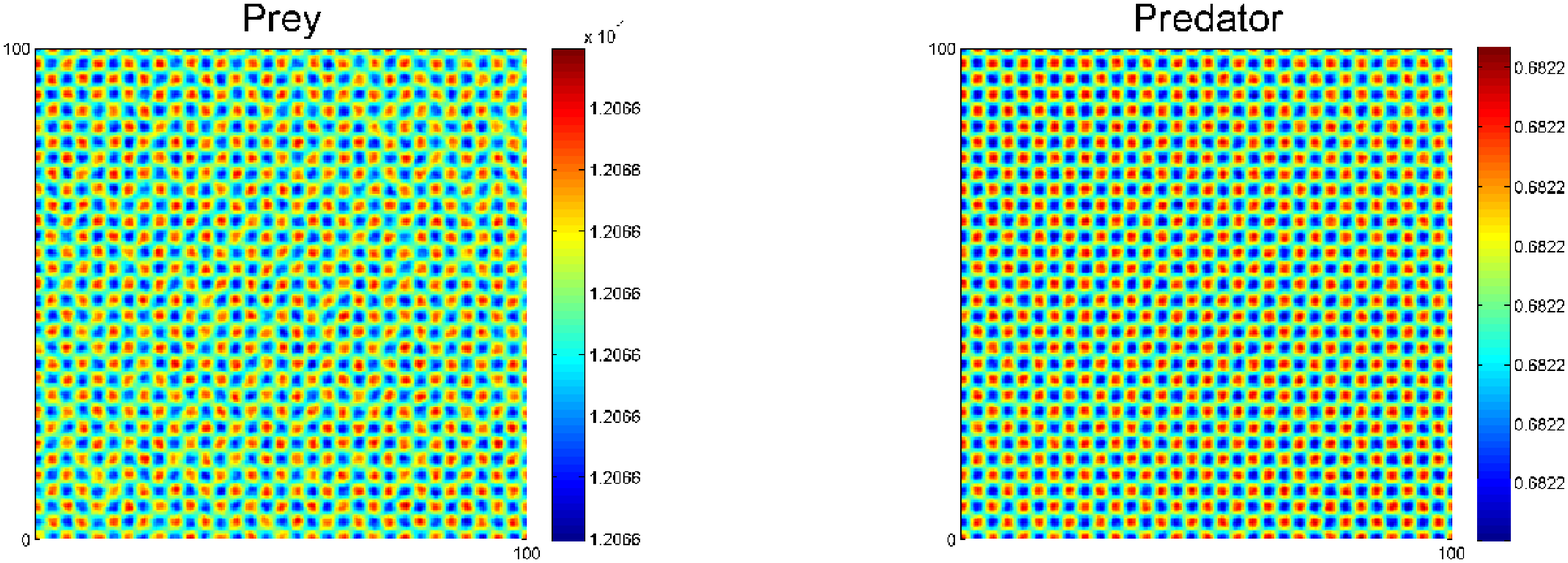}} \\
	\caption{Snapshots of contour pictures of the time evolution of prey $(x)$ and predator $(y)$ in the \textit{XY}-plane of the diffusive model \eqref{eq:03}, where the step size of the space, $\Delta h=0.4$ and the time has been taken as (a) $T=100$, (b) $T=4000$ and (c) $T=8000.$ Parameter sets are given in Table \ref{Table:1} .}
	\label{fig:12}
\end{figure}

\begin{figure}[ht!]
	\centering
	\subfloat[]{\includegraphics[width=7.0in, height=2.6in]{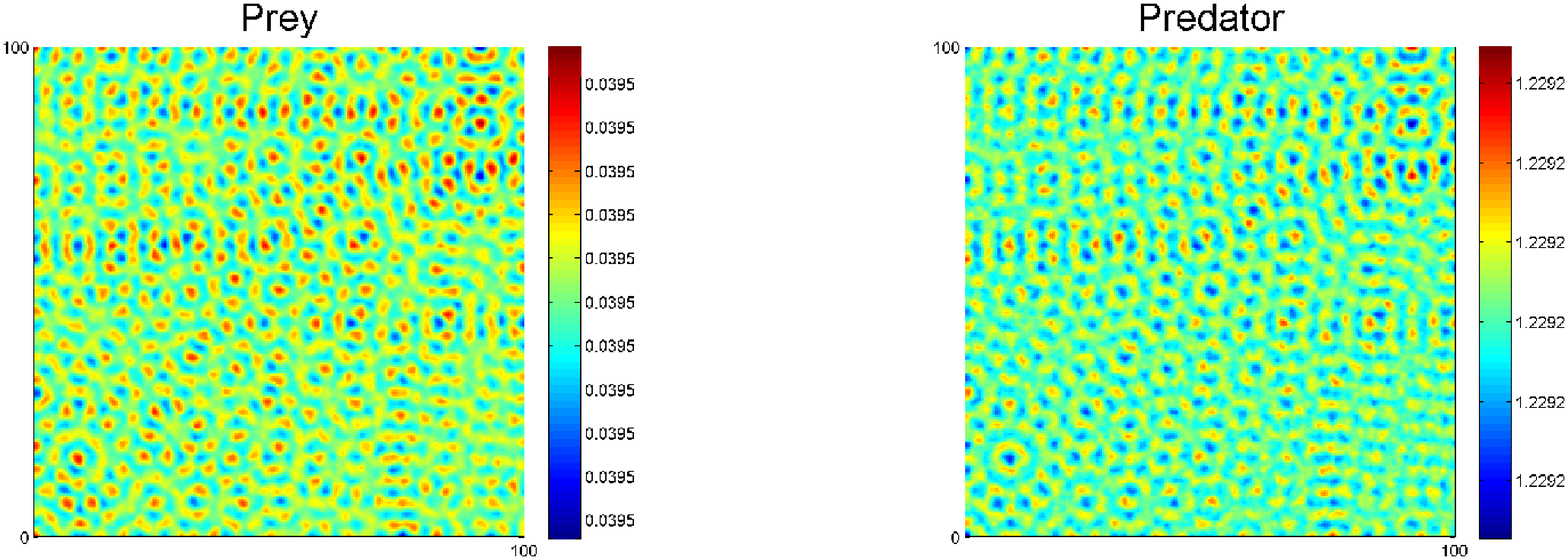}} \\
	\subfloat[]{\includegraphics[width=7.0in, height=2.6in]{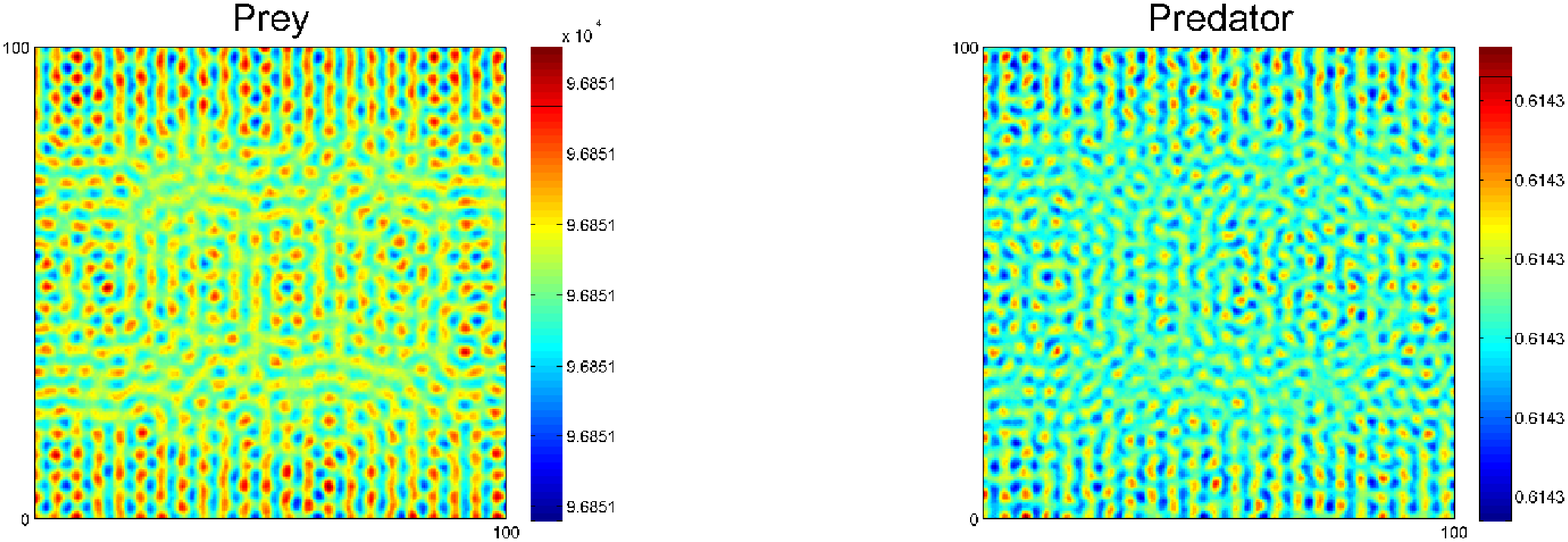}} \\
	\subfloat[]{\includegraphics[width=7.0in, height=2.6in]{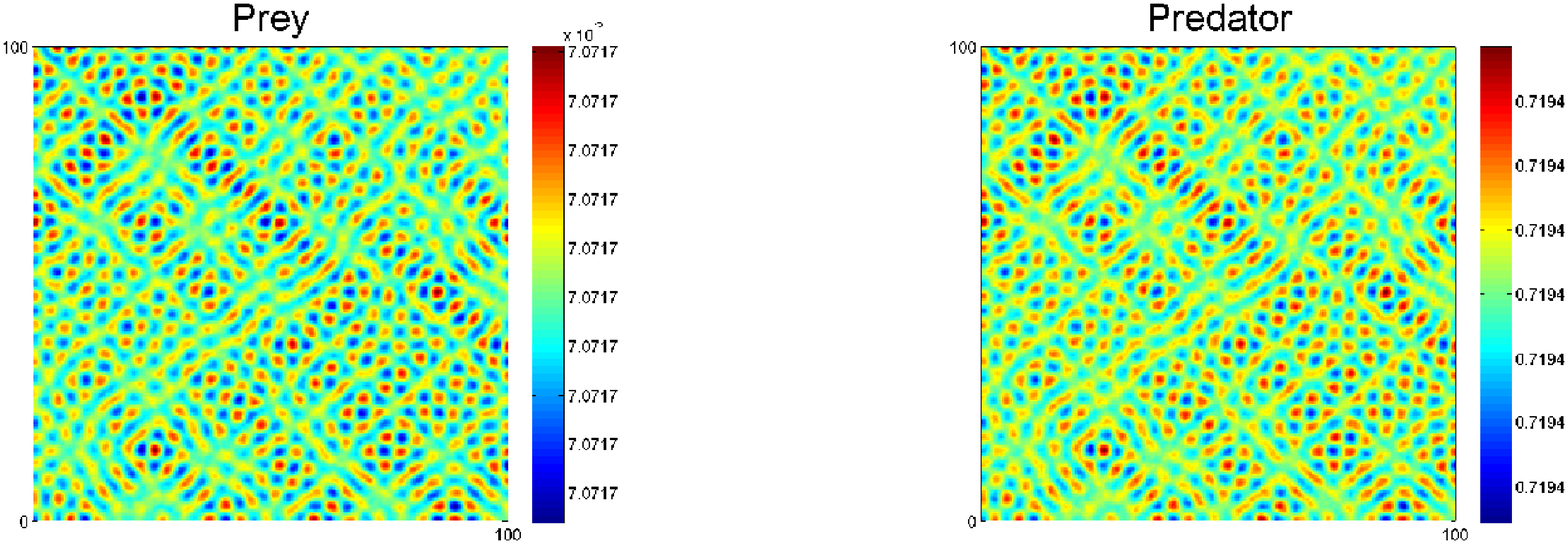}} \\
	\caption{Snapshots of contour pictures of the time evolution of prey $(x)$ and predator $(y)$ in the \textit{XY}-plane of the diffusive model \eqref{eq:03}, where the step size of the space, $\Delta h=0.4$ and the time has been taken as (a) $T=100$, (b) $T=2000$ and (c) $T=5000.$ Parameter sets are given in Table \ref{Table:1} .}
	\label{fig:13}
\end{figure}

\begin{figure}[ht!]
	\centering
	\subfloat[]{\includegraphics[width=7.0in, height=2.6in]{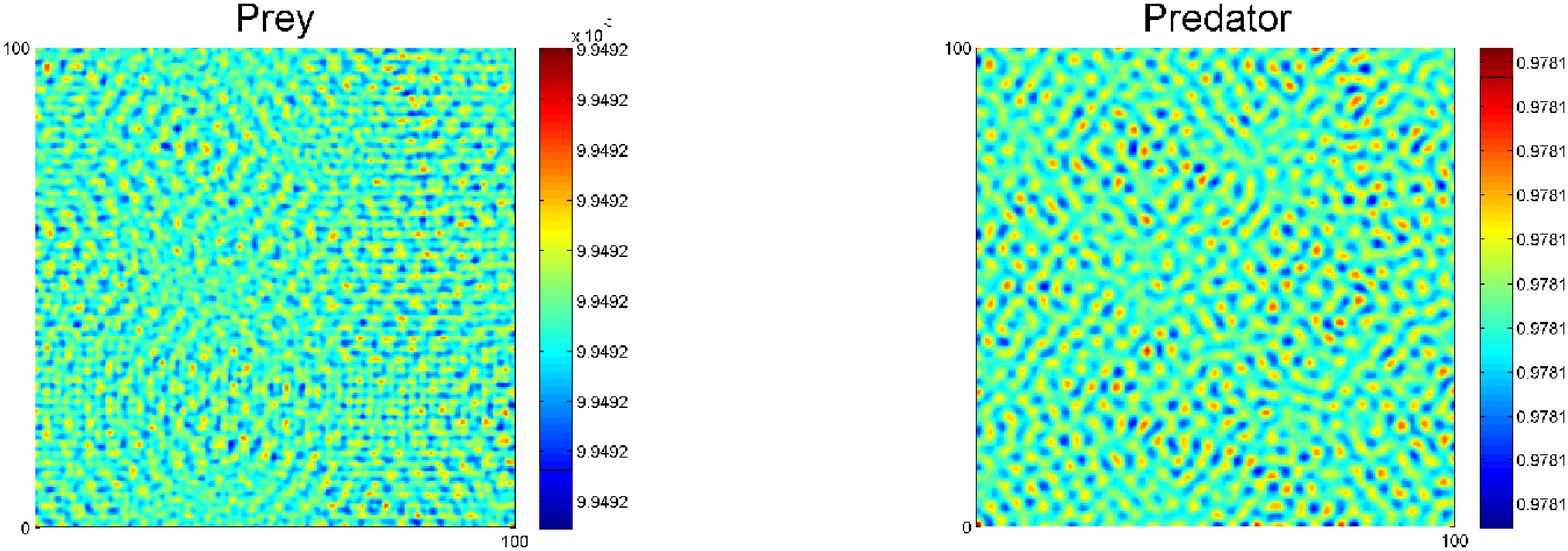}} \\
	\subfloat[]{\includegraphics[width=7.0in, height=2.6in]{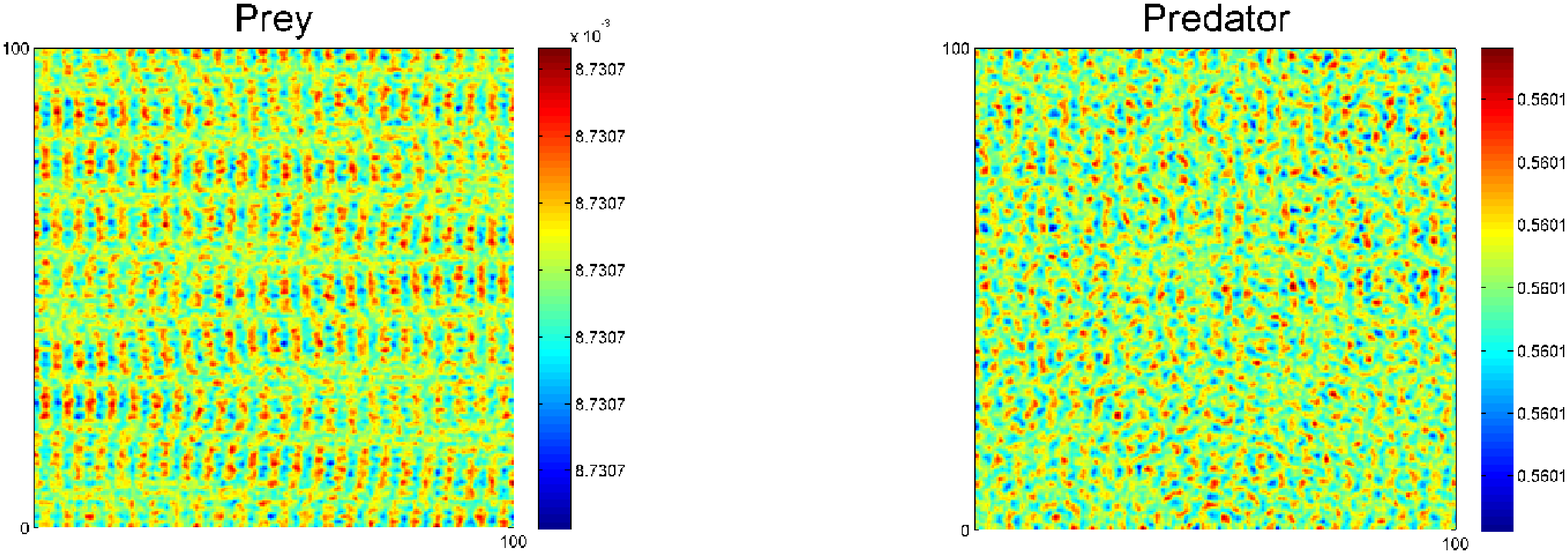}} \\
	\subfloat[]{\includegraphics[width=7.0in, height=2.6in]{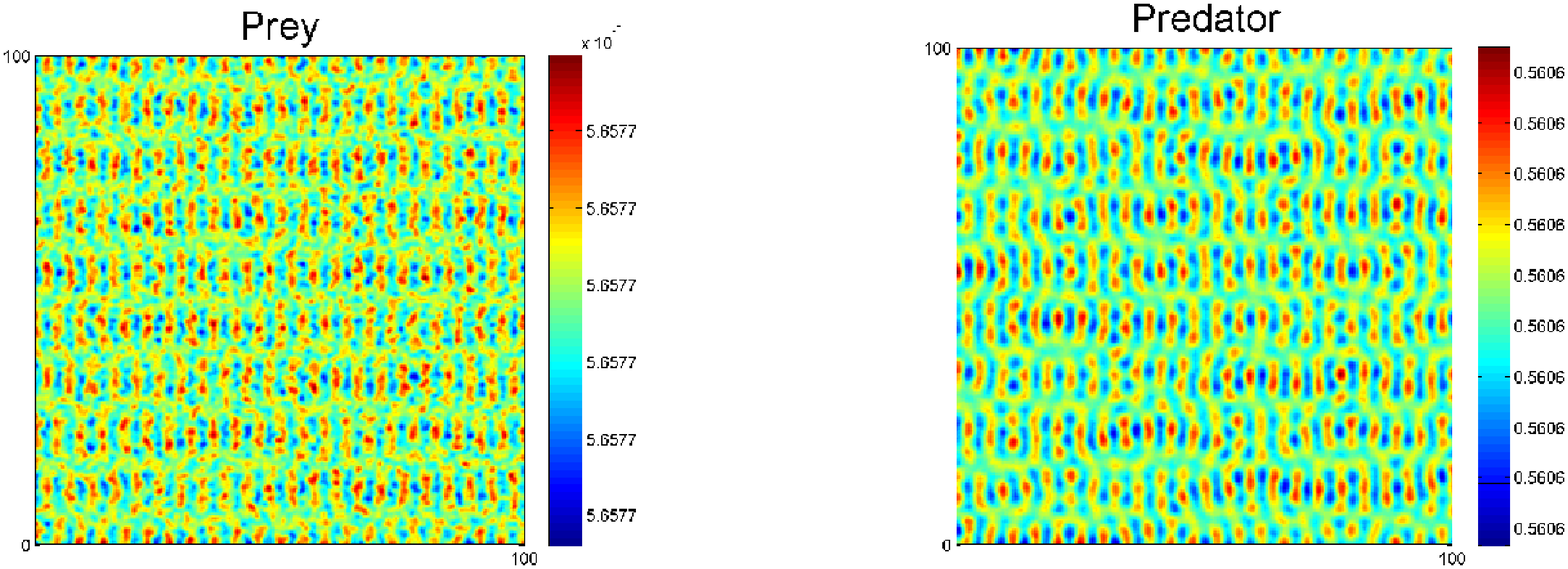}} \\
	\caption{Snapshots of contour pictures of the time evolution of prey $(x)$ and predator $(y)$ in the \textit{XY}-plane of the diffusive model \eqref{eq:03}, where the step size of the space, $\Delta h=0.4$ and the time has been taken as (a) $T=100$, (b) $T=2000$ and (c) $T=5000.$ Parameter sets are given in Table \ref{Table:1} .}
	\label{fig:14}
\end{figure}

\begin{figure}[ht!]
	\centering
	\subfloat[]{\includegraphics[width=7.0in, height=3.0in]{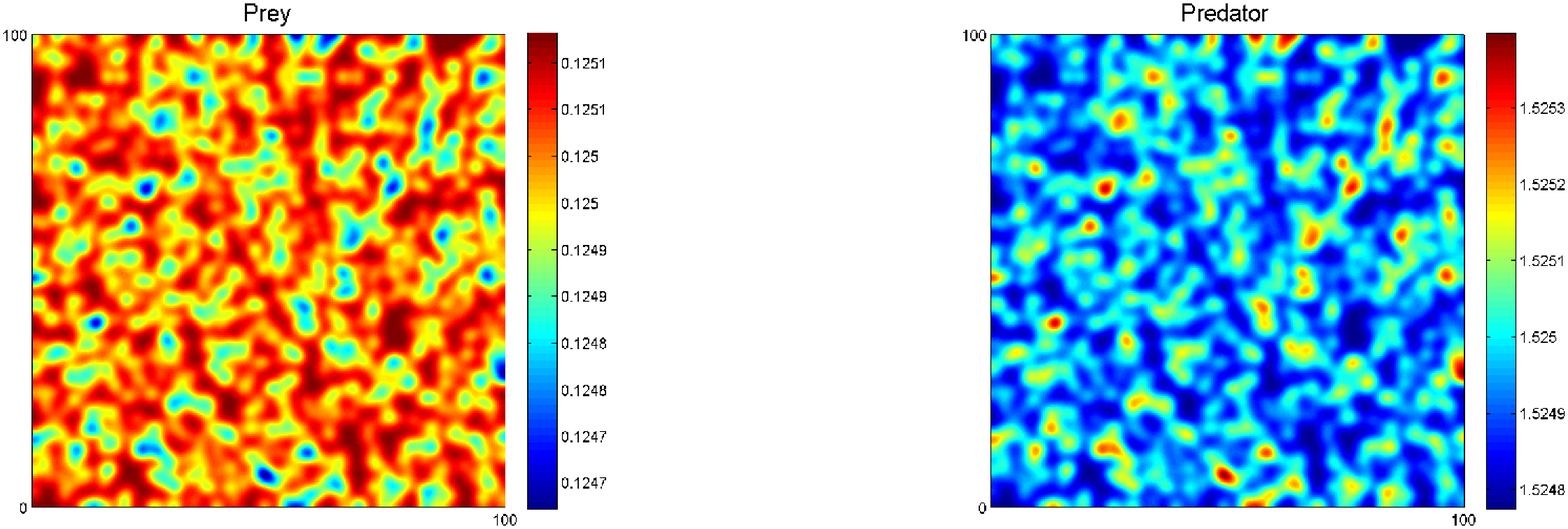}} \\
	\subfloat[]{\includegraphics[width=7.0in, height=3.0in]{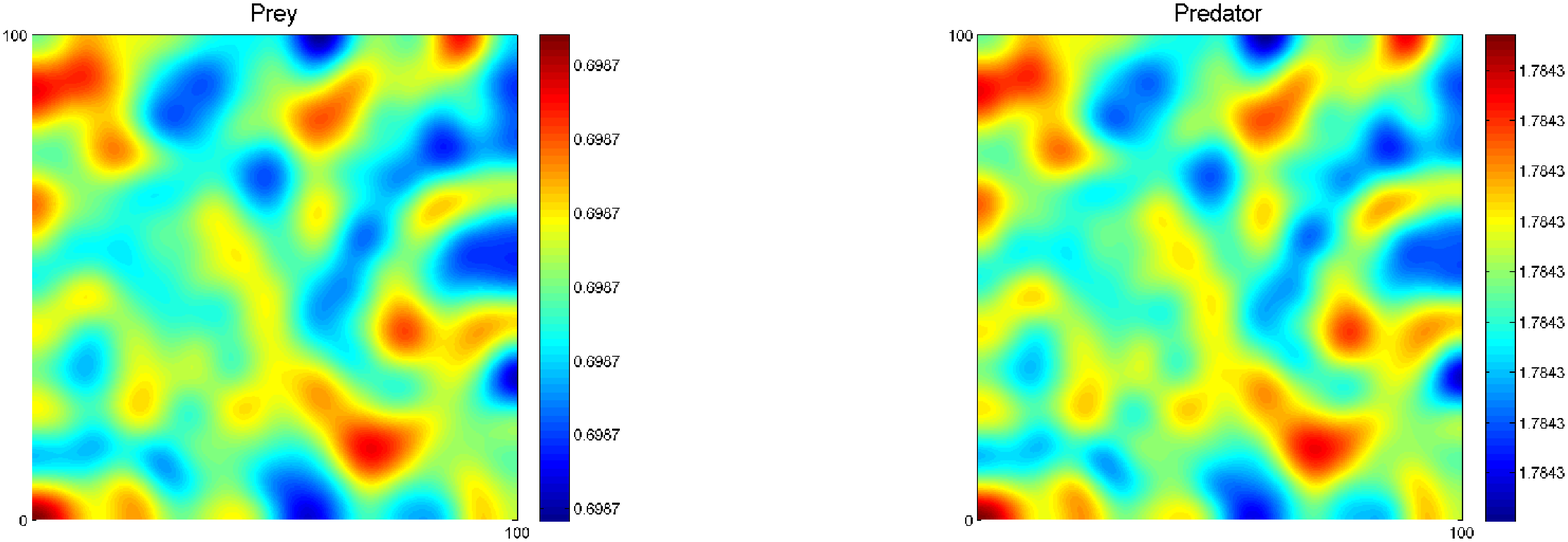}} \\
	\caption{Snapshots of contour pictures of the time evolution of prey $(x)$ and predator $(y)$ in the \textit{XY}-plane of the diffusive model \eqref{eq:03}. In figs (a) and (b), the time has been taken as $T=10$ and $T=100$ respectively. Parameter sets are given in Table \ref{Table:2}.}
	\label{fig:15}
\end{figure}

\begin{figure}[ht!]
	\centering
	\subfloat[]{\includegraphics[width=7.0in, height=3.0in]{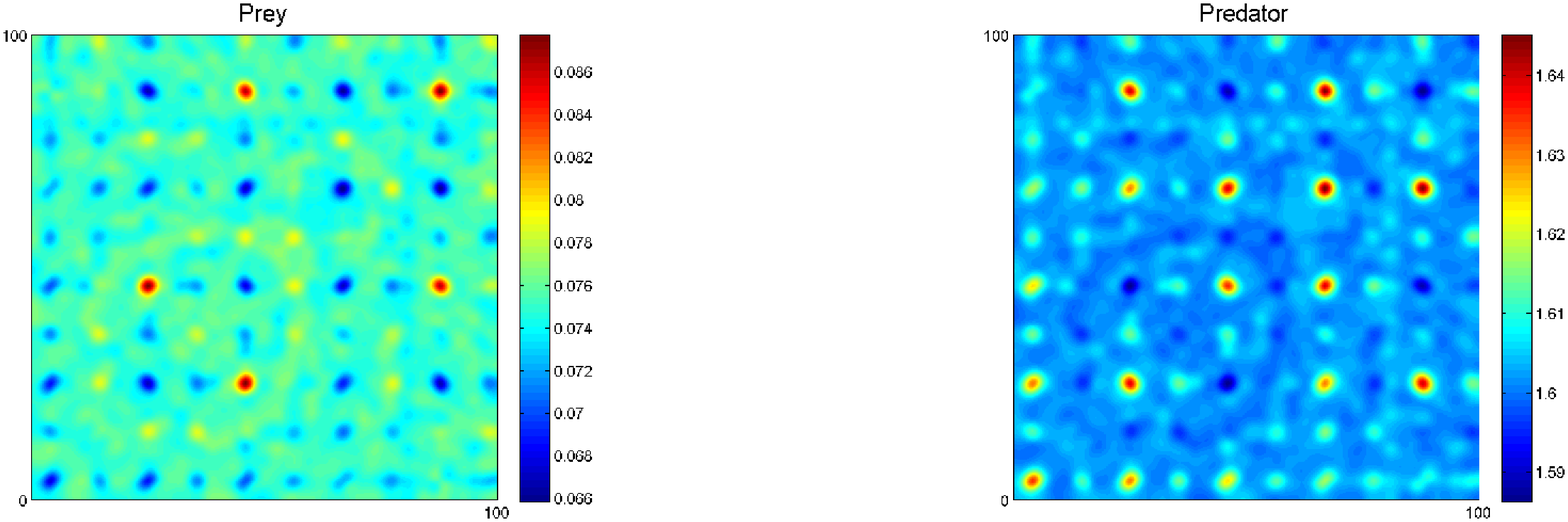}} \\
	\subfloat[]{\includegraphics[width=7.0in, height=3.0in]{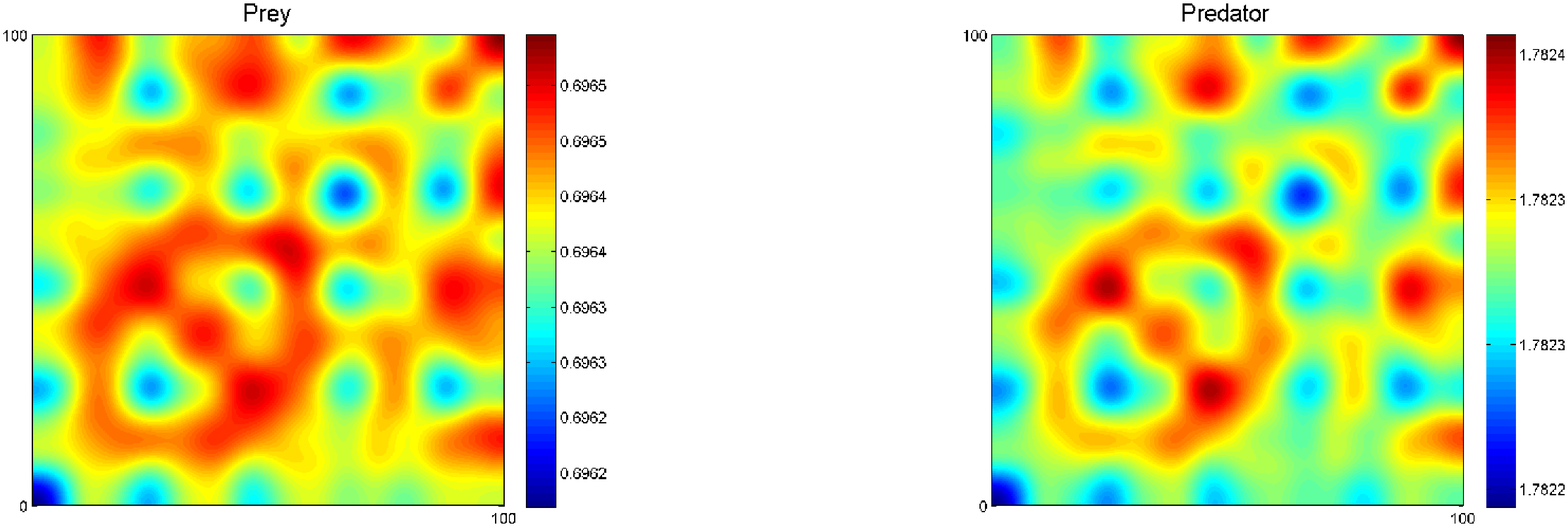}} \\
	\caption{Snapshots of contour pictures of the time evolution of prey $(x)$ and predator $(y)$ in the \textit{XY}-plane of the diffusive model \eqref{eq:03}. In figs (a) and (b), the time has been taken as $T=10$ and $T=100$ respectively. Parameter sets are given in Table \ref{Table:2} .}
	\label{fig:19}
\end{figure}

\end{document}